\newcommand{\n}{\notag \\}
\newcommand{\half}{\frac{1}{2}}
\newcommand{\tr}{\text{tr}}
\newcommand{\dd}{\text{d}}
\newcommand{\Wg}{\text{Wg}}
\newcommand{\tmoment}{\widehat{\Phi}}
\newcommand{\tmomentd}{\widehat{\Phi}^{\dagger}}
\newcommand{\rqc}{\text{RQC}}
\newcommand\addtag{\refstepcounter{equation}\tag{\theequation}}
\newtheorem{lemma}{Lemma}
\newtheorem{theorem}{Theorem}
\newtheorem{conjecture}{Conjecture}
\theoremstyle{definition}
\newtheorem{definition}{Definition}
\long\def\/*#1*/{}
\begin{document}

\preprint{APS/123-QED}

\title{Approximate t-designs in generic circuit architectures}

\def\urbana{
Institute for Condensed Matter Theory and IQUIST and NCSA Center for Artificial Intelligence Innovation and Department of Physics, University of Illinois at Urbana-Champaign, IL 61801, USA
}

\def\chicago{
Department of Computer Science, The University of Chicago
}

\author{Daniel Belkin}
\thanks{Co-first authors.}
\affiliation{\urbana}
\author{James Allen}
\thanks{Co-first authors.}
\affiliation{\urbana}
\author{Soumik Ghosh}
\affiliation{\chicago}
\author{Christopher Kang}
\affiliation{\chicago}
\author{Sophia Lin}
\affiliation{\chicago}
\author{James Sud}
\affiliation{\chicago}
\author{Fred Chong}
\affiliation{\chicago}
\author{Bill Fefferman}
\affiliation{\chicago}
\author{Bryan K. Clark}
\email{bkclark@illinois.edu}
\affiliation{\urbana}
    
\date{\today}

\begin{abstract}
Unitary \(t\)-designs are distributions on the unitary group whose first \(t\) moments appear maximally random. 
Previous work has established several upper bounds on the depths at which certain specific random quantum circuit ensembles approximate \(t\)-designs. Here we show that these bounds can be extended to any fixed architecture of Haar-random two-site gates. This is accomplished by relating the spectral gaps of such architectures to those of 1D brickwork architectures.
Our bound depends on the details of the architecture only via the typical number of layers needed for a block of the circuit to form a connected graph over the sites.  When this quantity is bounded, the circuit forms an approximate $t$-design in at most linear depth. We give numerical evidence for a stronger bound which depends only on the number of connected blocks into which the architecture can be divided. We also give an implicit bound for nondeterministic architectures in terms of properties of the corresponding distribution over fixed architectures.

\end{abstract}

\maketitle

\renewcommand{\thesection}{\Roman{section}}
\newcommand{\np}{k}

\section{Introduction}
Random quantum circuits are an important tool in the study of natural and engineered quantum systems. In quantum computing, random circuits have been suggested for randomized benchmarking \cite{Emerson2005a,Magesan2012}, security, and state preparation \cite{Hayden2003}. Recent claims of quantum supremacy have hinged on the hardness of classical simulation of random circuits (\cite{boixo2016characterizing,Bouland2019a, ac, BFLL, movassagh,KMM21}). Random circuits have also been proposed as models for information scrambling in black holes \cite{Hayden2007,Sekino2008}, and more general random tensor networks have been used as an explicit construction of the holographic duality in ADS/CFT \cite{Hayden2016}. In quantum information theory, random circuits are the standard setting in which to study measurement-induced phase transitions\cite{Skinner2019}, and they are used as an analytically-tractable model of quantum ergodicity and chaos \cite{Liao2022}. Random circuit models also serve as an interesting theoretically tractable model for more complicated realistic physical systems. Their maximally generic dynamics are often a valuable source of intuition as to what one should expect under typical time-evolution.

The geometric structure of a quantum circuit plays a critical role in the flow of information. While initial work on quantum computing focused on one-dimensional architectures, 2D layouts, such as the Sycamore processor used in the quantum advantage experiment of Ref.~\onlinecite{Arute2019}, have become increasingly popular. Recent work has explored modular architectures, in which fully connected nodes are sparsely connected with each other \cite{Pirker2018}. Circuit models for physical systems often require geometric locality on some two- or three-dimensional lattice. In certain cases, such as the SYK model, the interactions are instead all-to-all. Meanwhile, circuit architectures with the connectivity pattern of a tree or Multiscale Entanglement Renormalization Ansatz \cite{Vidal2008} are a natural setting in which to study holography. Such circuits may also be useful for robust quantum simulations of many-body systems \cite{Kim2017}. 

In the limit of large depth, a sufficiently well connected random quantum circuit will eventually scramble quantum information \cite{Emerson2005}. Perhaps the most basic question about random quantum circuits is the rate of this scrambling.  This is quantified by the \textit{approximate t-design depth} \cite{Ambainis2007} , which captures the depth at which an observer with access to at most $t$ measurements can no longer reliably distinguish the circuit from a random global unitary.

Prior work has given bounds on the t-design depths for a few special classes of architectures. However, it was not previously clear how the rate of information scrambling depended on the spatial structure of the circuit. In particular, one might have expected irregular or modular architectures to give qualitatively different behavior, e.g. exponentially slow convergence. 
The main goal of this work is to show that any reasonable architecture forms an approximate \(t\)-design in linear depth. We will bound the rate of convergence in terms of the connectedness of the architecture.

\subsection{Prior work}
The Haar measure on the unitary group is clearly a fixed point of any quantum circuit distribution, since it is invariant under any unitary gate. Ref.~\onlinecite{Emerson2005} shows that random circuits satisfying a universality condition converge to this fixed point in the limit of large depth. But this uniform convergence is very slow, requiring a circuit depth that scales exponentially with system size $N$. On the other hand, the expected values of specific observables sometimes approach their Haar values much faster (e.g. in depth $\log N$) \cite{Dalzell2022}. But this fast convergence depends on specific details of the observables considered and is not necessarily universal for other quantities of interest. The approximate $t$-design depth was first introduced in ref.~\onlinecite{Ambainis2007} as an intermediate measure of convergence. It is strong enough to guarantee convergence of any experimentally observable property, but occurs much faster than uniform convergence of measure.

Much of the prior work on approximate $t$-design depths has focused on the 1D brickwork architecture. Ref.~\onlinecite{Brandao2016} showed that the approximate \(t\)-design depth in this case was at most \(O(t^{9.5 + o(1)}N)\). For local Hilbert space dimension \(q = 2\), ref.~\onlinecite{Haferkamp2022} tightened this bound to \(O(t^{5 + o(1)} N)\). Ref.~\onlinecite{Hunter-Jones2019} gave a mapping to a statistical-mechanical model of interacting domain walls and used it to establish tighter bounds when either \(t = 2\) or \(q \rightarrow \infty\). 
Ref.~\onlinecite{Harrow2023} extended this work to a particular family of $D$-dimensional brickwork architectures. In the limit of small \(\epsilon\) and large \(N\), they established that the approximate \(t\)-design depth scales as at most \(O(N^{1/D})\) (although the dependence on $t$ remains an open question).

The other class of prior work focuses on what we term \textit{nondeterministic architectures}, in which the spatial structure of the architecture is also random. Typically gate locations are assumed to be drawn independently and identically from the uniform distribution over the edges of some graph over the sites.
In this context ref.~\cite{Ambainis2007} established an approximate \(2\)-design size of at most \(O(N^2)\) gates for the all-to-all graph. Ref.~\onlinecite{Brandao2016} found \(O(t^{9.5}N^2)\) for the linear graph, which ref.~\onlinecite{Oszmaniec2022} extended to \(O(t^{9.5} \log^4(t) N^3)\) for any graph which admits a Hamiltonian path. Ref.~\onlinecite{Mittal2023} develops an alternative strategy which yields a bound of the form \(N^{O(\log N)} \text{poly}(t)\) for arbitrary graphs. In addition, they give a bound of the form \(O(|E|N \text{poly}(t))\) for graphs with \(|E|\) edges, bounded degree, and bounded spanning-tree height. For certain bounded \(t\) and degree, the requirement of bounded spanning-tree height can be relaxed. 

\subsection{Summary of results}
 We obtain bounds on the $t$-design depth for all architectures, with stronger bounds if the architecture satisfies certain properties. First, consider a periodic architecture composed of \(\ell\) complete layers of \(2\)-site gates on $N$ qubits. For large \(t\), our bound for the \(\epsilon\)-approximate \(t\)-design depth becomes
\[d_* = t^{(5 + o(1))(\ell-1)} \left(2N t \log 2 + \log \frac{1}{\epsilon}\right) \addtag\]

We can relax each of these assumptions to obtain looser bounds for larger classes of circuits. The most general result covers an architecture with local Hilbert space dimension \(q\) which may not be periodic or consist of complete layers.\footnote{See Definition \ref{def:complete} or Figure \ref{fig:circuit_classification}}
We partition such a circuit into blocks of layers such that the gates in each block form a connected graph over all the sites. Let $\bar{\ell}$ be the average number of layers per block.\footnote{For architectures without any regular structure, \(\bar{\ell}\) may itself depend on depth, so this is not always an explicit formula for the \(t\)-design depth. See Theorem \ref{thm:aperiodic} for a precise definition of \(\bar{\ell}\).}
For these architectures, our bound is of the form
\begin{equation}
\label{eq:general_critical_depth}
d \geq t^{(15.2 + o(1))(\bar{\ell}-1)} (2N t \log q + \log \frac{1}{\epsilon})
\end{equation}
We also give numerical evidence for two conjectures under which Equation \ref{eq:general_critical_depth} can be strengthened to 
\[d_* =  \frac{\left(2Nt \log q + \log \frac{1}{\epsilon}\right)\bar{\ell}}{2\log \frac{q^2 + 1}{2q}} \addtag\]
For nondeterministic architectures,  we obtain an implicit bound in terms of the joint distribution of the effective number of layers in the circuit and the number of connected blocks.

\subsection{Structure of the proof}
We wish to prove that the distribution induced by a random circuit architecture approaches the Haar measure. Following previous work on approximate t-designs \cite{Brown2010}, we begin by expressing the frame potential as a tensor network of single-gate moment operators (Section.~\ref{section:tn_picture}). In Section~\ref{section:finding_gap}, we focus on the case of periodic complete architectures and show that the \(t\)-design depth is determined by the spectral gap of the transfer matrix $T$. (The assumptions of completeness and periodicity will be relaxed in sections~\ref{section:summary}B and C, respectively.) 

We then decompose the transfer matrix into a product of \(\ell\) layers of gates. Each layer is an orthogonal projection operator. We wish to bound the spectral gap of the transfer matrix in terms of the geometry of the subspaces to which the layers project. The key insight at this stage is to consider the way adding a new layer shrinks the unit eigenspace of the product, reducing the norm of the excluded vectors. In Section~\ref{section:layer_gap_bound}, we bound the spectral gap in terms of these norm reductions. The impact of a new layer on the unit eigenspace can be represented by a graph of nodes and edges, which we term the \textit{cluster-merging picture}. 

The next step is to simplify the cluster-merging picture at each layer. In Section~\ref{section:cluster_splitting}, we show that you can unravel the cluster-merging graph of each layer into a collection of loops without increasing the spectral gap. This is useful because each loop is the transfer matrix of a periodic 1D brickwork architecture. At this stage we have obtained a lower bound on our transfer matrix spectral gap in terms of the spectral gaps of 1D brickworks.

In Section~\ref{section:1d_solution}, we proceed to show the 1D brickwork spectral gap itself may be bounded by previous results on 1D approximate \(t\)-design depths. Many of these results were actually originally proven in terms of the spectral gap, but our argument applies even to bounds obtained by other methods. Together these steps allow us to turn a bound on the 1D \(t\)-design depth into a bound for generic architectures. 

We also discuss extensions of our techniques to other architectures. We show how our bounds can be applied in expectation to architectures in which gate locations are drawn randomly, giving an implicit bound in terms of properties of the induced distribution over fixed architectures. We observe that our techniques may be adapted to give tighter bounds for architectures with special structure and give an explicit example for the case of higher-dimensional brickwork architectures.

\subsection{Definitions}
A quantum circuit on \(N\) sites of local dimension \(q\) corresponds to a unitary \(U_c \in U(q^{N})\). A random quantum circuit architecture then induces a measure $\varepsilon_C$ on \(U(q^N)\). Define the associated \(t\)-fold  channel
\begin{gather}
    \Phi_\text{RQC}(\rho) = \int_{\varepsilon_C} U_C^{\otimes t} \rho (U_C^\dagger)^{\otimes t} \dd U_C
\end{gather}
\begin{definition}
    An \(\epsilon\)-approximate unitary \(t\)-design\cite{Brandao2016} is a measure \(\varepsilon_C\) on \(U(q^N)\) such that the diamond norm distance between the corresponding \(t\)-fold channel $\Phi_\text{RQC}$ and that of the Haar measure is at most $\epsilon$:
    \begin{gather}
        ||\Phi_\text{RQC} - \Phi_\text{Haar}||_\diamond \leq \epsilon
    \end{gather}
    We will often shorten ``\(\epsilon\)-approximate unitary \(t\)-design'' to ``\((\epsilon,t)\)-design''.
\end{definition}

\begin{definition}
    \label{def:complete}
    We call an \(L\)-layer random circuit architecture on \(N\) sites \textit{complete} if  each of the \(L\) layers consists of \(\frac{N}{2}\) Haar-random \(2\)-site unitary gates. In other words, exactly one gate acts on each site per layer.
\end{definition}

\begin{definition}
    An \textit{\(\ell\)-layer periodic} random circuit architecture repeats the layout of its layers with period \(\ell\) as the depth increases. Note that gates themselves are independently random at every depth; only the spatial arrangement of the gates is repeated.
\end{definition}

\begin{definition}
\label{def:denseblock}
A \textit{connected $\ell$-block} of a circuit architecture on \(N\) sites is a contiguous block of \(\ell\) layers such that the gates in the block form a connected graph over all $N$ sites.
\end{definition}
These definitions are illustrated in Figure \ref{fig:circuit_classification}.

\begin{figure*}
    \centering
    \begin{tikzpicture}
        \begin{scope}
            \node[anchor=north west] (image_a) at (0,0)
            {\includegraphics[width=1.2\columnwidth]{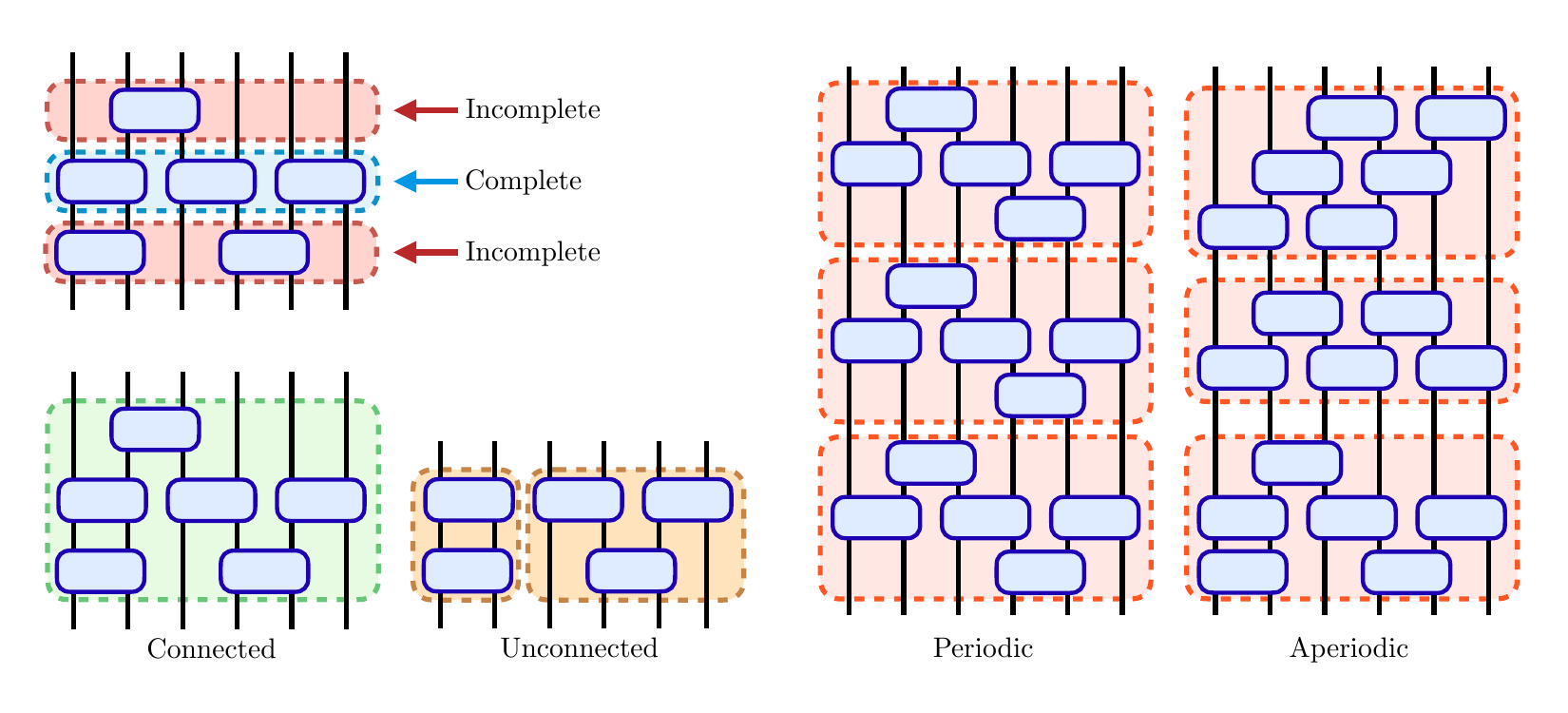}};
            \node [anchor=north west] (note) at (-0.05,-0.4) {\small{\textbf{a)}}};
            \node [anchor=north west] (note) at (-0.05,-2.4) {\small{\textbf{b)}}};
            \node [anchor=north west] (note) at (4.95,-0.4) {\small{\textbf{c)}}};
        \end{scope}
    \end{tikzpicture}
    \vspace*{-0.5cm}
    \caption{Different types of layers, blocks and architectures. (a) The middle layer is complete, since every site is acted on by exactly one gate. The upper and lower layers are both incomplete. (b) The left-hand block is connected while the right-hand block is made up of two unconnected components. (c) The left-hand architecture is periodic, while the right-hand is aperiodic. Both architectures contain three connected blocks.
    }
    \label{fig:circuit_classification}
\end{figure*}

Our most general results are for complete architectures and depend on the frequency and size of connected $\ell$-blocks. However, we obtain a more explicit form for the special case of periodic architectures. We obtain each result by reduction to the well-studied 1D brickwork architecture.
\begin{definition}
    The \textit{\(N\)-site 1D brickwork} is a complete \(2\)-layer periodic random circuit architecture, equipped with an ordering $1...N$ of the sites, such that the first layer applies gates on sites \(\{2j, 2j+1 \mod N\}\), while the second layer applies gates on sites \(\{2j-1 \mod N,2j\}\). In this paper, the spatial boundary conditions are periodic unless otherwise specified.
\end{definition}

\subsection{Main theorems}
\begin{theorem}
\label{thm:periodic_complete}
Suppose that the \(N\)-site 1D brickwork architecture (with either open or periodic boundary conditions) forms an \(\epsilon\)-approximate \(t\)-design after at most
\begin{gather}
    \label{eq:1d_bound_template}
    k_1 = C(N,q,t) \log \frac{1}{\epsilon} + o_\epsilon \left(\log \frac{1}{\epsilon}\right)
\end{gather}
periods. Then arbitrary complete \(\ell\)-layer periodic architectures form an \(\epsilon\)-approximate \(t\)-design after at most
\begin{gather}
    \label{eq:complete_bound}
    k_* = \frac{2Nt \log q + \log \frac{1}{\epsilon}}{\log \frac{1}{s_*}}
\end{gather}
periods, where
\[s_* = 1 - \left(1 - \exp\left(-\frac{1}{2C(q,t)}\right)\right)^{\ell - 1}\addtag\]
and \(C(q,t) = \sup_N C(N,q,t)\). We may relax the bound to the more legible
\begin{gather}
    k_* = \big(4\overline{C}(q,t)\big)^{\ell-1} \left(2Nt \log q + \log \frac{1}{\epsilon}\right)
\end{gather}
by defining $\overline{C}(q,t) \equiv \max\big(C(q,t), \frac{1}{2}\big)$
\end{theorem}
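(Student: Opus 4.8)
The plan is to route the whole argument through the spectral gap of the one-period moment operator and to use the assumed 1D bound only as a black box. For a complete $\ell$-layer periodic architecture, $k$ periods of the circuit have moment operator $M^{k}$ where $M$ is the one-period moment operator; since $M$ absorbs the Haar projector, $M M_{\mathrm{Haar}} = M_{\mathrm{Haar}} M = M_{\mathrm{Haar}}$, so $(M-M_{\mathrm{Haar}})^{k} = M^{k} - M_{\mathrm{Haar}}$ and $\|M^{k}-M_{\mathrm{Haar}}\|_{\infty} \le (1-\Delta)^{k}$ with $1-\Delta := \|M-M_{\mathrm{Haar}}\|_{\infty}$ the gap. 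The tensor-network picture of Section~\ref{section:tn_picture} and the frame-potential bound of Section~\ref{section:finding_gap} convert this to $\|\Phi_\rqc^{(k)} - \Phi_{\mathrm{Haar}}\|_{\diamond} \le q^{2Nt}(1-\Delta)^{k}$, which is below $\epsilon$ once $k \ge \frac{2Nt\log q + \log\frac{1}{\epsilon}}{\log\frac{1}{1-\Delta}}$. So it suffices to prove $1-\Delta \le s_{*}$.

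First I would invert the hypothesis into a uniform lower bound on the 1D brickwork gap. The frame-potential side of the same tensor network gives a matching lower bound on the diamond distance in terms of $1-\Delta_{1D}$, which together with the assumed asymptotics $k_{1} = C(N,q,t)\log\frac{1}{\epsilon} + o_{\epsilon}(\log\frac{1}{\epsilon})$ forces, as $\epsilon \to 0$, $1-\Delta_{1D}(N) \le \exp(-1/(2C(N,q,t)))$ for every $N$ (the constant here absorbing some slack from the conversion). Since $C(q,t) = \sup_{N} C(N,q,t) \ge C(m,q,t)$ for each $m$, this is uniform: $1-\Delta_{1D}(m) \le \exp(-1/(2C(q,t)))$ for all sizes $m$ and for either boundary condition, which is precisely the input the rest of the proof needs. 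Nothing here depends on how the 1D bound was originally proved.

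The technical heart is to bound $\Delta$ in terms of these 1D gaps. Writing $M = P_{\ell}\cdots P_{1}$ with $P_{j}$ the orthogonal-projection moment operator of layer $j$, I would track how the unit eigenspace of $P_{j}\cdots P_{1}$ shrinks as $j$ grows, quantifying how much norm a non-Haar-invariant vector must lose at each layer --- the cluster-merging picture of Section~\ref{section:layer_gap_bound}. The first complete layer only produces $N/2$ disjoint two-site clusters and contributes no reduction; each of the remaining $\ell-1$ layers, where it merges clusters, acts as a union of periodic 1D brickwork moment operators on the merged sites. By Section~\ref{section:cluster_splitting}, the merging graph of each such layer unravels into disjoint loops without increasing the gap, and each loop is a 1D brickwork transfer matrix (on $m \le N$ sites, periodic or open), whose gap is at least $1-\exp(-1/(2C(q,t)))$ by the previous paragraph and Section~\ref{section:1d_solution}. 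The $\ell-1$ layers then compound, $\Delta \ge \bigl(1-\exp(-1/(2C(q,t)))\bigr)^{\ell-1} = 1-s_{*}$; combined with the first paragraph this gives the design depth $k_{*}$. The legible form follows from $-\log(1-x)\ge x$ and $1-\exp(-1/(2\overline{C})) \ge 1/(4\overline{C})$ --- valid once $\overline{C}(q,t) = \max(C(q,t),\tfrac{1}{2})$ makes $1/(2\overline{C}) \le 1$ --- noting that enlarging $C$ only enlarges $s_{*}$, so $k_{*} \le (4\overline{C}(q,t))^{\ell-1}(2Nt\log q + \log\frac{1}{\epsilon})$.

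The main obstacle is the cluster-merging step: turning ``adding a layer shrinks the unit eigenspace'' into rigorous inequalities, in particular proving that unraveling a layer's merging graph into loops cannot increase the gap and that each resulting loop genuinely is a 1D brickwork transfer matrix whose gap is already controlled by the hypothesis. The two end conversions --- diamond norm to moment-operator gap, and the elementary estimates giving the legible bound --- are essentially bookkeeping.
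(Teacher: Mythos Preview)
Your proposal is correct and follows essentially the same route as the paper: convert between diamond norm and the spectral gap of the one-period transfer matrix (Sections~\ref{section:tn_picture}--\ref{section:finding_gap} and, in the reverse direction, Theorem~\ref{thm:1deigfromtdesign}), bound the gap layer by layer via the cluster-merging picture (Theorem~\ref{thm:singvals1}), and unravel each layer's merging graph into 1D brickwork loops (Theorem~\ref{thm:clusterbound}). You have correctly identified the cluster-splitting reduction as the technical core, and your derivation of the legible form via $-\log(1-y)\ge y$ and $1-e^{-u}\ge u/2$ for $u\le 1$ is exactly the elementary estimate underlying the paper's relaxed bound.
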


\begin{theorem}
\label{thm:periodic_incomplete}
If we do not require that the layers be complete, an \(\ell\)-layer periodic architecture forms an \(\epsilon\)-approximate \(t\)-design after at most
\begin{gather}
    \label{eq:incomplete_bound}
    k_* = \frac{2Nt \log q + \log \frac{1}{\epsilon}}{\log \frac{1}{s_*}}
\end{gather}
periods, where
\[s_* = 1 - \left(1 - \exp\left(-\frac{1}{2C(\sqrt{q},t)}\right)\right)^{\ell - 1}\addtag\]
and \(C(\sqrt{q},t)\) is defined as in Theorem \ref{thm:periodic_complete}.
\end{theorem}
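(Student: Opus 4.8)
The plan is to reduce the incomplete case to the complete case already settled by Theorem~\ref{thm:periodic_complete}, paying for the lack of completeness by a replacement of the local dimension $q$ with $\sqrt{q}$. As in that theorem, the $(\epsilon,t)$-design depth of a periodic architecture is governed by the spectral gap of the period transfer matrix $T = P_\ell\cdots P_1$, where $P_i$ is the moment operator of the $i$-th layer and each $P_i$ is an orthogonal projection; so it is enough to produce a lower bound on the gap of $T$ and then feed it through the same arithmetic (conversion of a per-layer gap estimate into $s_*$, and of $s_*$ into a number of periods via the $2Nt\log q$ factor).

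The only new phenomenon relative to Theorem~\ref{thm:periodic_complete} is that a layer may leave some sites idle. I would handle this by a bookkeeping device: split each physical site of dimension $q$ into two half-sites of dimension $\sqrt{q}$ (assume $q$ is a perfect square for now; the general case follows from monotonicity of $C(\cdot,t)$ in the dimension, or by embedding $\mathbb{C}^q$ into $\mathbb{C}^{\lceil\sqrt q\rceil^2}$ and padding the gates with an idle factor). Under this splitting an idle physical site becomes two idle half-sites, while a Haar gate on a pair of physical sites becomes a gate on four half-sites, which dominates, in the moment-operator order, the one-layer brickwork of $2$-half-site Haar gates supported on those four half-sites. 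Running the cluster-merging analysis of Sections~\ref{section:layer_gap_bound}--\ref{section:cluster_splitting} at the level of half-sites, one tracks over a full period how the half-site clusters merge; because the architecture is (eventually) connected every half-site is covered within the period, so the period's cluster-merging graph unravels, without increasing the spectral gap, into a collection of loops. Each loop is now a $1$D brickwork of local dimension $\sqrt{q}$.

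Feeding the hypothesized $1$D bound \eqref{eq:1d_bound_template} with local dimension $\sqrt{q}$ into this decomposition gives a per-loop spectral-gap estimate of the form $\exp\!\big(-1/(2C(\sqrt q,t))\big)$, and repeating verbatim the computation in the proof of Theorem~\ref{thm:periodic_complete} --- combining the $\ell-1$ merging steps into $s_* = 1-\big(1-\exp(-1/(2C(\sqrt q,t)))\big)^{\ell-1}$ and dividing the entropy budget $2Nt\log q$ (still measured against the physical dimension $q$, since the half-site decomposition is only a relabelling) by $\log(1/s_*)$ --- yields exactly the claimed $k_*$.

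I expect the main obstacle to be making the ``split, then cover the idle half-sites'' step genuinely produce a lower bound on the spectral gap of $T$, i.e. checking that the projection-order comparisons (replacing a $4$-half-site Haar gate by a half-site brickwork, and absorbing the idle half-sites into the loop-unraveling rather than treating them as a separate ``completion'' of the layer) all flow in the direction that decreases, rather than increases, the gap, and that no hidden dependence on $N$ or on the particular pattern of idle sites is introduced along the way. The treatment of non-square $q$ and the meaning of $C(\sqrt q,t)$ when $\sqrt q$ is not an integer are minor points but should be stated cleanly, presumably by invoking monotonicity of the $1$D constant in the local dimension.
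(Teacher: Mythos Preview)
Your site-splitting reduction --- doubling each $q$-site into a pair of $\sqrt{q}$-half-sites so that the layer can be made formally complete and the Eulerian-cycle argument runs at local dimension $\sqrt{q}$ --- is exactly the paper's route, and the ``main obstacle'' you flag (that replacing a four-half-site Haar gate by a pair of two-half-site gates moves the SSV only upward) is precisely what the paper works through via a ``Jenga tower'' limit. The one substantive divergence is your handling of non-square $q$: the paper invokes neither monotonicity of $C(\cdot,t)$ in the dimension (which is not established) nor an embedding-with-idle-factor trick; instead it defines the $\sqrt{q}$-brickwork transfer matrix directly on abstract permutation spaces $X_r$ carrying the inner product $\langle\sigma|\tau\rangle_r = r^{|\sigma\tau^{-1}|}$ for real $r>1$, exhibits an explicit isometry $X_q \to X_{\sqrt{q}}^{\otimes 2}$ that preserves the singular values of the transfer matrix, and then requires that the $1$D constant $C(q,t)$ extend to this non-integer regime --- which the known bounds do, at the cost of a worse $t$-exponent.
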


We also show an alternative result on incomplete layers in Section~\ref{subsection:incomplete_proof} which depends on $C(q,t)$ instead of $C(\sqrt{q},t)$, but at the cost of multiplying \(\ell\) by a factor \(O(\log \log N)\). We also show that the architecture need not actually be periodic.

\begin{theorem}
\label{thm:irregularly_connected}
The results of Theorems~\ref{thm:periodic_complete} and \ref{thm:periodic_incomplete} hold even if the architecture is not periodic, with \(\ell\) replaced by an ``average connection depth'' defined formally in Theorem \ref{thm:aperiodic}.
\end{theorem}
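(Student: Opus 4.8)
The plan is to run the arguments of Theorems~\ref{thm:periodic_complete} and~\ref{thm:periodic_incomplete} one \emph{connected block} at a time instead of one period at a time. Recall from Sections~\ref{section:tn_picture}--\ref{section:finding_gap} that, by independence of the gates, the moment operator of a depth-\(d\) circuit factorizes as a product \(M_d M_{d-1}\cdots M_1\) of layer projections, and that the \((\epsilon,t)\)-design property is implied once \(\|M_d\cdots M_1 - \Pi\|_\infty\) is below roughly \(\epsilon\,q^{-2Nt}\), with \(\Pi\) the Haar moment projector. Given the architecture, truncate it to its first \(d\) layers and partition those layers greedily from the left into \(m\) consecutive connected blocks \(B_1,\dots,B_m\) in the sense of Definition~\ref{def:denseblock} (each \(B_j\) the shortest connected extension following \(B_{j-1}\)), of lengths \(\ell_1,\dots,\ell_m\), possibly leaving a few trailing layers unassigned. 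Write \(T_j\) for the moment operator of \(B_j\), itself a product of \(\ell_j\) layer projections.

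Every layer projection \(M_i\) contains the range of \(\Pi\) (each two-site Haar gate fixes all global permutation operators), so \(\Pi M_i = M_i \Pi = \Pi\), and hence \(\Pi T_j = T_j \Pi = \Pi\) for every block. This yields the algebraic identity \(\prod_{j=1}^m T_j - \Pi = \prod_{j=1}^m (T_j - \Pi)\), immediate by induction from \(T_j\Pi = \Pi T_j = \Pi\) and \(\Pi^2 = \Pi\); the trailing unassigned layers only contribute extra factors of operator norm at most \(1\). Submultiplicativity of the operator norm then gives
\begin{equation}
\|M_d\cdots M_1 - \Pi\|_\infty \;\le\; \prod_{j=1}^m \|T_j - \Pi\|_\infty .
\end{equation}
Connectedness has not yet entered; it does so now. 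The chain of estimates of Sections~\ref{section:layer_gap_bound}--\ref{section:1d_solution} bounds \(\|T-\Pi\|_\infty\) for a \emph{single} block of \(\ell\) layers whose gate graph is connected over all \(N\) sites, by unraveling its cluster-merging graph into loops and invoking the 1D brickwork results; periodicity is never used there. Applied to each \(B_j\) it gives \(\|T_j-\Pi\|_\infty \le s(\ell_j)\) with \(s(\ell) = 1-\bigl(1-\exp(-\tfrac{1}{2C})\bigr)^{\ell-1}\), where \(C = C(q,t)\) in the complete case (Theorem~\ref{thm:periodic_complete}) and \(C(\sqrt q,t)\) in the incomplete case (Theorem~\ref{thm:periodic_incomplete}).

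It remains to extract an ``average'' block length. Put \(a = 1-\exp(-\tfrac{1}{2C})\in(0,1)\); then \(-\log s(\ell) = -\log(1-a^{\ell-1}) \ge a^{\ell-1}\), so \(-\log\|M_d\cdots M_1 - \Pi\|_\infty \ge \sum_{j=1}^m a^{\ell_j-1}\). Since \(\ell\mapsto a^{\ell-1}\) is convex, Jensen's inequality bounds this below by \(m\,a^{\bar\ell-1}\), where \(\bar\ell := \tfrac1m\sum_{j=1}^m \ell_j\) is the \emph{average connection depth} of the first \(d\) layers (this is the quantity defined precisely in Theorem~\ref{thm:aperiodic}, and \(m\bar\ell \le d\)). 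Consequently the circuit is an \((\epsilon,t)\)-design as soon as \(d\) is large enough that \(m\,a^{\bar\ell-1} \ge 2Nt\log q + \log\frac1\epsilon\); using \(m\le d/\bar\ell\) and \(a^{1-\bar\ell}\sim(2C)^{\bar\ell-1}\) this reproduces \eqref{eq:general_critical_depth}, and the legible forms of Theorems~\ref{thm:periodic_complete}--\ref{thm:periodic_incomplete} carry over with \(\ell\) replaced by \(\bar\ell\).

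The principal obstacle is conceptual: one must verify that the per-period gap estimate of Sections~\ref{section:layer_gap_bound}--\ref{section:1d_solution} is genuinely a property of one connected block in isolation, with no tacit reliance on the layout repeating, so that it applies to \(m\) blocks of \(m\) different shapes. The secondary difficulty is bookkeeping: for an irregular architecture both \(m\) and \(\bar\ell\) depend on \(d\), so the conclusion is a criterion implicit in \(d\) (``an \((\epsilon,t)\)-design is reached at the smallest \(d\) satisfying the displayed inequality''), and the convexity step is precisely what keeps the criterion clean when the blocks are unequal. Threading the \(C(\sqrt q,t)\) input through step four — and, for the alternative incomplete-layer bound, the additional \(O(\log\log N)\) factor on \(\bar\ell\) — handles the non-complete case without further changes.
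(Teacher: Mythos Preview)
Your proposal is correct and follows the same high-level strategy as the paper's proof of Theorem~\ref{thm:aperiodic}: decompose the circuit into connected blocks, apply the single-block SSV bound of Sections~\ref{section:layer_gap_bound}--\ref{section:1d_solution} to each block separately, multiply, and then average the block lengths via Jensen's inequality. Two technical choices differ. First, to obtain the product \(\prod_j s_*^{(j)}\), you use the telescoping identity \(\prod_j T_j - \Pi = \prod_j (T_j - \Pi)\) together with submultiplicativity of the operator norm, whereas the paper routes through the frame-potential Frobenius-norm bound (Theorem~\ref{thm:normbound_with_extras}); your route is slightly slicker and also automatically absorbs the trailing layers, while the paper's version has the advantage of allowing arbitrary ``filler'' blocks \(V_i\) between the connected blocks, which as discussed after Theorem~\ref{thm:aperiodic} can give a strictly better decomposition for some architectures. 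Second, in the averaging step you first relax \(-\log(1-a^{\ell-1}) \ge a^{\ell-1}\) and then apply Jensen to the exponential, while the paper applies Jensen directly to the convex function \(\ell \mapsto -\log(1-a^{\ell-1})\); the paper's version recovers the sharper denominator \(\log(1/s_*)\) in the statement of Theorem~\ref{thm:aperiodic}, whereas yours gives only the legible form with \((4\overline{C})^{\bar\ell-1}\). Neither difference affects correctness.
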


We will begin with a proof of the periodic, complete case. This proof is simpler and illustrates the essential elements of our strategy. The argument will then be extended to the incomplete and aperiodic cases.

Finally, we show that conditional on two conjectures, we can omit the dependence on \(\ell\) entirely to obtain a much simpler result.
\begin{theorem}
    \label{thm:best_guess}
    Suppose Conjectures \ref{conjecture:connection_count} and \ref{conjecture:s_exact_open} hold. Then any architecture which can be divided into 
    \begin{equation}
        k_* = \frac{2Nt \log q + \log \frac{1}{\epsilon}}{2 \log \frac{q^2 + 1}{2q}}
    \end{equation}
    connected blocks forms an \(\epsilon\)-approximate \(t\)-design. 
\end{theorem}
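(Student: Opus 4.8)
The plan is to re-run the argument behind Theorems~\ref{thm:periodic_complete}--\ref{thm:irregularly_connected} essentially verbatim, but to replace the two lossy estimates in that argument by the exact values supplied by the two conjectures. Recall the skeleton of that argument: via the frame-potential picture of Sections~\ref{section:tn_picture}--\ref{section:finding_gap}, the diamond-norm distance to the Haar channel is controlled by $q^{2Nt}$ times a product, over the connected blocks of the architecture, of per-block contraction factors $s_j<1$; each $s_j$ is in turn bounded by decomposing the block's layers into orthogonal projections, passing to the cluster-merging picture (Section~\ref{section:layer_gap_bound}), unraveling that picture into a collection of loops (Section~\ref{section:cluster_splitting}), and bounding each loop---a periodic $1$D brickwork---through the input quantity $C(N,q,t)$ (Section~\ref{section:1d_solution}). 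The first lossy step is the blow-up incurred when a connected block spans more than the minimal number of layers; this is the source of the $(4\overline{C})^{\ell-1}$ prefactor and, more fundamentally, of all $\ell$-dependence. The second is the use of $C(q,t)=\sup_N C(N,q,t)$, which is only known to be polynomially large in $t$, in place of the true (conjecturally $t$-independent) constant.

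First I would invoke Conjecture~\ref{conjecture:connection_count} to eliminate the $\ell$-dependence: it asserts, in the cluster-merging language, that every connected block---regardless of how many layers it occupies---contracts the relevant excluded subspace by at least the amount achieved by a \emph{minimal} two-layer connected block, namely by $s^{2}$, where $s$ is the single-layer norm-reduction factor for the open (tree-like) configuration and the exponent $2$ reflects that the minimal connected block spans two layers. Concretely this replaces ``$k_*$ periods of an $\ell$-layer-periodic architecture'' by ``$k_*$ connected blocks,'' so that after $k$ connected blocks $||\Phi_{\mathrm{RQC}}-\Phi_{\mathrm{Haar}}||_\diamond \le q^{2Nt}\prod_{j=1}^{k}s_j \le q^{2Nt}\big(\tfrac{2q}{q^{2}+1}\big)^{2k}$, once we feed in the second conjecture. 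Conjecture~\ref{conjecture:s_exact_open} is what pins the open-boundary factor to its exact value $s=\tfrac{2q}{q^{2}+1}$ for every $t$ (the value computed directly for $t=2$, and consistent with the known large-$q$ asymptotics). Demanding $q^{2Nt}\big(\tfrac{2q}{q^{2}+1}\big)^{2k}\le\epsilon$ and solving for $k$ yields exactly $k_*=\dfrac{2Nt\log q+\log\frac1\epsilon}{2\log\frac{q^{2}+1}{2q}}$, and then the machinery of Theorem~\ref{thm:irregularly_connected} applies unchanged to show the same count of connected blocks suffices for architectures that are neither complete nor periodic.

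The part requiring care---and the reason the result must stay conditional---is the interface between the two conjectures and the earlier machinery. The cluster-merging procedure of Sections~\ref{section:layer_gap_bound}--\ref{section:1d_solution} produces loops with \emph{periodic} boundary conditions, whereas Conjecture~\ref{conjecture:s_exact_open} is phrased for the open configuration, so I would need to verify that the unraveling of any connected block can be arranged so that the binding constraint is the open-configuration factor (morally: a connected block contains a spanning tree of merges, and the extra ``closing'' edges can only help), and that the per-block factors genuinely multiply across blocks without an extra $N$- or $\ell$-dependent overhead---i.e. that the worst excluded vector entering block $j{+}1$ is still contracted by the full $s^{2}$ rather than by a diluted amount. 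Making these two points precise is exactly what Conjectures~\ref{conjecture:connection_count} and~\ref{conjecture:s_exact_open} package up; granted them, the remaining computation is the routine arithmetic above, so the expected main obstacle lies entirely in justifying (or at least numerically corroborating) those two statements, which is presumably deferred to the sections that introduce them.
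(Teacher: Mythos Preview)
Your overall strategy---use the frame-potential/diamond-norm machinery of Sections~\ref{section:tn_picture}--\ref{section:finding_gap} together with Theorem~\ref{thm:normbound_with_extras} to reduce everything to a product of per-block subleading singular values, then invoke the two conjectures to bound each factor---is exactly the paper's route (Section~\ref{subsection:connection_conjecture}). But two points need correcting.

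First, your squaring step is not what the conjectures give you. Conjecture~\ref{conjecture:connection_count} says directly that the subleading singular value of \emph{any} connected block satisfies $s_*\le s_{\text{1D,open}}$; Conjecture~\ref{conjecture:s_exact_open} then sets $s_{\text{1D,open}}=\tfrac{2q}{q^{2}+1}$. That is one factor of $\tfrac{2q}{q^{2}+1}$ per connected block, not $\bigl(\tfrac{2q}{q^{2}+1}\bigr)^{2}$. The quantity $s_{\text{1D,open}}$ is already the SSV of one full period (two layers) of the open brickwork transfer matrix, so there is no additional ``two layers'' exponent to apply. Plugging into Eq.~\eqref{eq:criticaldepth} (equivalently Eq.~\eqref{eq:connection_count}) yields the denominator $\log\tfrac{q^{2}+1}{2q}$, not $2\log\tfrac{q^{2}+1}{2q}$. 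In other words, the paper's own derivation in Section~\ref{subsection:connection_conjecture} produces Eq.~\eqref{eq:connection_count}, which disagrees with the statement of Theorem~\ref{thm:best_guess} by a factor of~$2$; your extra square happens to land on the stated formula, but not for a valid reason.

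Second, your last paragraph worries about reconciling the periodic loops produced by the cluster-merging reduction with the open-boundary quantity in Conjecture~\ref{conjecture:s_exact_open}. That concern is moot here: Conjecture~\ref{conjecture:connection_count} is a blanket statement about the SSV of \emph{every} connected block, so it bypasses Sections~\ref{section:layer_gap_bound}--\ref{section:1d_solution} entirely---no unraveling into loops, no layer-restricted $\mathscr{s}_i$, and hence no periodic-versus-open interface to patch. The only remaining ingredient is Theorem~\ref{thm:normbound_with_extras}, which lets the per-block factors multiply cleanly across an arbitrary (aperiodic, incomplete) decomposition.
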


\subsection{Known values of \(C(q,t)\)}
\label{subsection:cvals}
Previous works imply the following bounds on \(C(q,t)\):
\begin{itemize}
    \item For general parameters, the best known bound is that of ref.~\onlinecite{Brandao2016}.
    For the case of 1D brickwork circuits\footnote{Note that our bound is based on a fixed 1D brickwork instead of a random layer ordering. This allows us to tighten the bound of ref.~\cite{Brandao2016} for parallel local random circuits by a factor of \(2\), by skipping the factor of $\frac{1}{2}$ relaxation of $\lambda_2$ in their Lemma 21. } where \(q \geq 2\), the authors give
    \begin{gather}
        C(q,t) = 261500\lceil\log_q (4t)\rceil^2 q^2 t^{5 + \frac{3.1}{\log q}}
    \end{gather}
    For integer \(q^2 \geq 2\), the more general form is
    \begin{multline}
        C(q,t) = 234 (q^2+1) e^{\frac{2.5 \log(4) (1+\log(q^2+1))}{\log (q)} + 1}\\
        \times \lceil\log_q (4t)\rceil^2  t^{5 + \frac{5(1 + \log(1 + q^{-2}))}{2\log q}} 
    \end{multline}
    When $q \geq 2$, we can replace the $t$-exponent with $9.5 \geq 5 + \frac{\log(e(1 + q^2))}{\log q}$. Similarly, for \(q^2 \geq 2\) we have an exponent of at most \(15.2\).
    
    \item For \(q = 2\), Equation 22 of ref.~\onlinecite{Haferkamp2022} gives the tighter bound:
    \[k_1 = \alpha \log^5(t) t^{4 + \frac{3}{\sqrt{\log_2 t}}} \left(2 N t + \log_2 \frac{1}{\epsilon}\right) \addtag\]
    where \(\alpha = 10^{13}\). This gives 
    \[C(2,t) = \frac{\alpha}{2 \log 2} \log^5(t) t^{4 + \frac{3}{\sqrt{\log_2 t}}}\addtag\]
    
    \item For \(t = 2\) and any \(q > 1\), Equation 27 of  ref.~\onlinecite{Hunter-Jones2019} gives 
    \[C(q,2) = \left(2\log \frac{q^2 + 1}{2q}\right)^{-1}\]
    for 1D brickwork circuits with open boundary conditions.\footnote{There is a factor-of-two difference between our definition of \(C\) and that of ref.~\onlinecite{Hunter-Jones2019}. This is because we are counting periods instead of layers.} We show in Appendix \ref{app:1d_t2_bound} that periodic boundary conditions improve the bound to 
    \[C(q,2) = \left(4\log \frac{q^2 + 1}{2q}\right)^{-1} \addtag\] 
    
    \item In the limit \(q \rightarrow \infty\), Equation 36 of  ref.~\onlinecite{Hunter-Jones2019} shows that the leading-order term is \(C(q,t) =  \left(2 \log \frac{q}{2}\right)^{-1}\)
    with open boundary conditions. Periodic boundary conditions again tighten\footnote{There can be no single domain wall in periodic boundary conditions, so the leading term is now given by the two-domain-wall sector.}
    this to
    \[C(q,t) =  \left(4 \log \frac{q}{2}\right)^{-1} + o_q(\log^{-1}q)\addtag\]
    
    \item Following the conjecture of ref.~\onlinecite{Hunter-Jones2019}, we suspect that the sharp bound is
    \begin{equation}
        C(q,t) = \left(4\log \frac{q^2 + 1}{2q}\right)^{-1}
        \label{eq:C_exact_periodic}
    \end{equation}
    This is analogous to ref.~\onlinecite{Hunter-Jones2019}'s conjecture for the open-boundary case. Numerical evidence is given in Appendix \ref{app:brickwork_t_independence}. \\
\end{itemize}

\section{Approximate $t$-designs and tensor network picture} \label{section:tn_picture}
The first phase of our proof follows the standard reduction from approximate \(t\)-designs to a tensor network of averaged gates\cite{Dalzell2022}. For the sake of completeness and notational clarity, sections \ref{section:tn_picture}-\ref{section:finding_gap} outline the key steps. 

For a random quantum circuit channel $\Phi_{\text{RQC}}$ formed from a circuit ensemble $U_C \in \varepsilon_C$, the diamond norm difference from the Haar distribution is bounded in terms of the frame potential\cite{Hunter-Jones2019}:
\begin{gather}
    ||\Phi_\text{RQC} - \Phi_\text{Haar}||_\diamond^2 \leq q^{2Nt}\left(\mathcal{F}^{(t)}_{\varepsilon_C} - \mathcal{F}^{(t)}_{\text{Haar}}\right) \label{eq:hunter-jones}
\end{gather}
\begin{gather}
    \label{eq:define_frame_potential}
    \mathcal{F}^{(t)}_{\varepsilon_C} = \int_{\varepsilon_C^{\otimes 2}} |\tr(U_C^\dagger V_C)|^{2t} \dd U_C \dd V_C.
\end{gather}
Both the random quantum circuit channel and the frame potential can be written in terms of the $t$-th moment operator
\begin{gather*}
    \tmoment_\text{RQC} = \int_{\varepsilon_C} U_C^{\otimes t,t} \dd U_C 
\end{gather*}
where $U_C^{\otimes t,t} \equiv U_C^{\otimes t} \otimes (U_C^*)^{\otimes t}$. This is a matricization of the quantum channel $\Phi_{\text{RQC}}$, i.e. 
\[\tmoment_\rqc\left(\text{vec}(\rho)\right) = \text{vec} \left(\Phi_\rqc(\rho)\right)\addtag\]
We also have
\begin{align}
    \mathcal{F}^{(t)}_{\varepsilon} 
    &= \int_{\varepsilon_C^{\otimes 2}} \tr(U_C^{\dagger \otimes t,t} V_C^{\otimes t,t})\dd U_C \dd V_C\\
    &= \tr(\tmomentd_\rqc \tmoment_\rqc) \label{eq:frame_potential}
\end{align}
\begin{figure}
    \centering
    \begin{tikzpicture}
        \begin{scope}
            \node[anchor=north west,inner sep=0] (image_a) at (0.4,0)
            {\includegraphics[width=0.85\columnwidth]{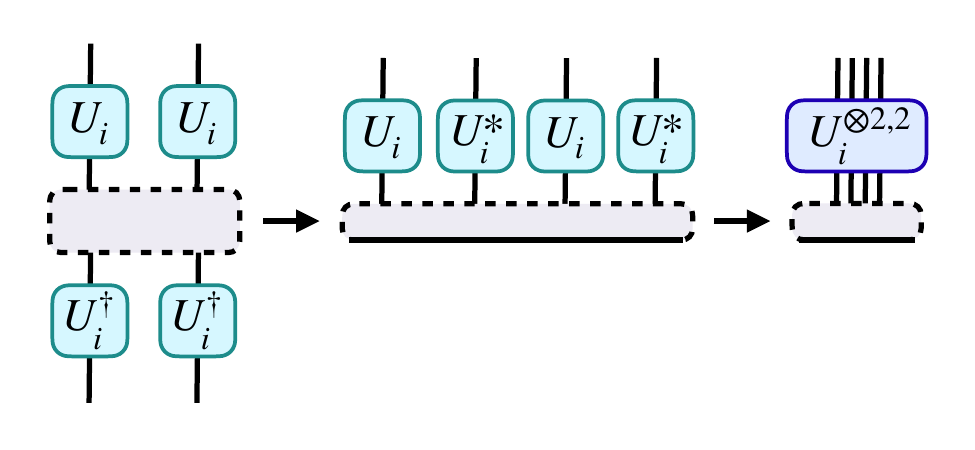}};
            \node [anchor=north west] (note) at (-0.6,-0.1) {\small{\textbf{a)}}};
        \end{scope}
        \begin{scope}
            \node[anchor=north west,inner sep=0] (image_b) at (-0.7,-3)
            {\includegraphics[width=1.05\columnwidth]{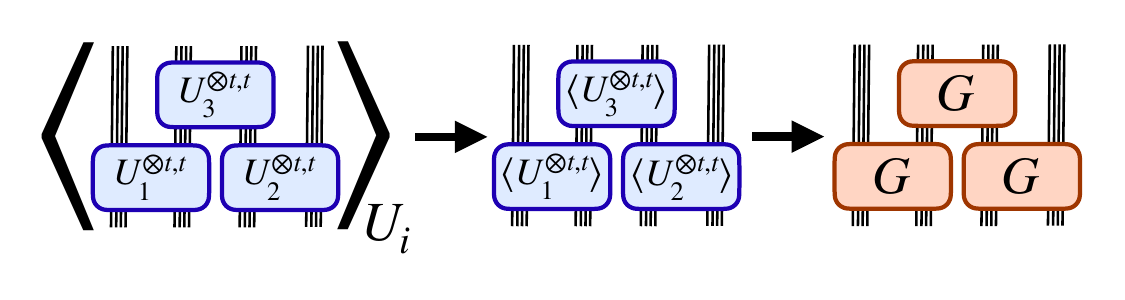}};
            \node [anchor=north west] (note) at (-0.6,-3) {\small{\textbf{b)}}};
        \end{scope}
    \end{tikzpicture}
    \vspace*{-1.2cm}
    \caption{(a) Folding all $t$ copies of a specific gate $U_i$ and $U_i^\dagger$ in the channel $\Phi_\text{RQC}$ into a single operator $U_i^{\otimes t,t}$, where $t=2$. The gray dotted region indicates a density matrix on which the channel acts. This is vectorized in the middle panel, forming a matricization of the $2t$ copies of $U_i$. (b) The reduction of the $t$-th moment operator $\tmoment_\rqc$ to a network of identical projection tensors $G$. Each unitary $U_i^{\otimes t,t}$ can be averaged separately into an independent copy of $G$.}
    \label{fig:moment_generator}
\end{figure}
We assume that $U_C$ consists of two-site unitary gates $U_i$ drawn independently from the Haar distribution over \(U(q^2)\). Since distinct gates are independent, we can average over each gate separately (Figure~\ref{fig:moment_generator}). The averaging joins the \(2t\) copies of each gate that appear in \(U_C^{\otimes t,t}\) into a single operator $G$. The action of \(G\) depends only on \(t\) and the number of sites on which $U_i$ acts. $\tmoment_\rqc$ becomes the contraction of a tensor network in the shape of the original circuit $U_C$, but with each $U_i^{\otimes t,t}$ replaced with its average $G$.

The individual $G$'s can be written in terms of single-site \textit{permutation states}. Given a permutation \(\sigma \in S_t\), we define a particular maximally-entangled state on $t$ pairs of sites
\begin{gather}
    |\sigma\rangle = q^{-t/2}\sum_{\vec{i} \in \mathbb{Z}_q^t} |\vec{i}\rangle |\sigma(\vec{i})\rangle
\end{gather}
We will call a tensor power of a permutation state on on \(m\) sites $|\sigma\rangle^{\otimes m} = |\sigma\rangle |\sigma \rangle ... |\sigma \rangle$ a \textit{uniform permutation state}.
\begin{theorem}
    \label{thm:projector}
    Let \(U\) be an \(m\)-site Haar-random unitary. Let \(G\) be the expected value of the corresponding moment operator $U_i^{\otimes t,t}$. Then \(G\) is a projector onto the space spanned by the uniform permutation states $|\sigma\rangle^{\otimes m}$.
\end{theorem}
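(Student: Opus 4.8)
The plan is to identify $G$ with the matricization of the Haar twirl channel and then read off its image using Schur--Weyl duality. Write $d = q^m$ and $\mathcal{H} = \mathbb{C}^d$. By construction $G = \int_{U(d)} U^{\otimes t}\otimes (U^*)^{\otimes t}\, dU$, and under the vectorization map $\mathrm{vec}$ this is precisely the matrix of the twirl superoperator $\mathcal{T}(X) = \int_{U(d)} U^{\otimes t}\, X\, (U^{\dagger})^{\otimes t}\, dU$ acting on $\mathrm{End}(\mathcal{H}^{\otimes t})$. So it suffices to show $\mathcal{T}$ is the orthogonal (Hilbert--Schmidt) projector onto the commutant algebra $\mathcal{A} = \{X : [X, W^{\otimes t}] = 0 \ \text{for all}\ W \in U(d)\}$, and then transport this statement back through $\mathrm{vec}$.

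First I would verify the three defining properties of an orthogonal projector for $\mathcal{T}$: (i) idempotence, $\mathcal{T}^2 = \mathcal{T}$, from right-invariance of the Haar measure together with Fubini; (ii) Hilbert--Schmidt self-adjointness, $\langle \mathcal{T}(X), Y\rangle_{HS} = \langle X, \mathcal{T}(Y)\rangle_{HS}$, via the substitution $U \mapsto U^{\dagger}$ inside the trace; and (iii) $\mathrm{Im}(\mathcal{T}) = \mathcal{A}$ with $\mathcal{T}|_{\mathcal{A}} = \mathrm{id}$ — the inclusion $\mathrm{Im}(\mathcal{T}) \subseteq \mathcal{A}$ is Haar-invariance, and $\mathcal{T}(X) = X$ for $X \in \mathcal{A}$ is immediate since $U^{\otimes t}$ cancels when $X$ commutes with it. Properties (i)--(iii) force $\mathcal{T}$ to be the orthogonal projector onto $\mathcal{A}$.

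Next I would apply Schur--Weyl duality: the commutant $\mathcal{A}$ is spanned by the operators $r(\sigma)$, $\sigma \in S_t$, that permute the $t$ tensor factors of $\mathcal{H}^{\otimes t}$ (linearly independent when $d \ge t$, but spanning $\mathcal{A}$ in all cases). The remaining ingredient is the elementary identity $\mathrm{vec}(r(\sigma)) = q^{mt/2}\,|\sigma\rangle_{\mathcal{H}}$, where $|\sigma\rangle_{\mathcal{H}}$ is the $\sigma$-permuted maximally entangled state on $\mathcal{H}^{\otimes t}\otimes \mathcal{H}^{\otimes t}$; and since $\mathrm{vec}$ is, up to an overall scalar, an isometry from $(\mathrm{End}(\mathcal{H}^{\otimes t}), \langle\cdot,\cdot\rangle_{HS})$ to $\mathcal{H}^{\otimes 2t}$ with its standard inner product, it carries orthogonal projectors to orthogonal projectors and carries $\mathrm{span}\{r(\sigma)\}$ onto $\mathrm{span}\{|\sigma\rangle_{\mathcal{H}}\}$. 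Hence $G$ is the orthogonal projector onto $\mathrm{span}\{|\sigma\rangle_{\mathcal{H}}\}$. Finally, writing $\mathcal{H} = (\mathbb{C}^q)^{\otimes m}$ and distributing the tensor product over the $m$ sites gives $|\sigma\rangle_{\mathcal{H}} = |\sigma\rangle^{\otimes m}$ with the single-site state $|\sigma\rangle = q^{-t/2}\sum_{\vec{i}}|\vec{i}\rangle|\sigma(\vec{i})\rangle$ as defined above — exactly a uniform permutation state — which completes the proof.

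The main obstacle here is bookkeeping rather than depth: one must fix consistent conventions for the vectorization map, for $U^*$ versus $U^{\dagger}$, and for the ordering of tensor legs, so that $\int U^{\otimes t}\otimes(U^*)^{\otimes t}\,dU$ genuinely matricizes $\mathcal{T}$ and so that $\mathrm{vec}$ sends $r(\sigma)$ to a multiple of $|\sigma\rangle$ (and not, say, $|\sigma^{-1}\rangle$). The one genuinely nontrivial input is the Schur--Weyl statement that the commutant is \emph{spanned} by — not merely contained in the span of — the permutation operators, including the degenerate regime $d < t$ where the $r(\sigma)$ are linearly dependent; I would cite this rather than reprove it.
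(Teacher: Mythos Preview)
Your argument is correct and is essentially the standard proof of this fact. Note that the paper does not actually give its own proof of Theorem~\ref{thm:projector}: it simply cites ref.~\onlinecite{Brown2010}. The route you outline --- matricize the Haar twirl, show it is the Hilbert--Schmidt orthogonal projector onto the commutant of $U^{\otimes t}$, invoke Schur--Weyl duality to identify the commutant with $\mathrm{span}\{r(\sigma)\}$, and transport through $\mathrm{vec}$ --- is exactly the standard argument and is what one finds in the cited reference (up to cosmetic differences). Your caveats about bookkeeping (vectorization conventions, $\sigma$ versus $\sigma^{-1}$, leg ordering when passing from $|\sigma\rangle_{\mathcal{H}}$ to $|\sigma\rangle^{\otimes m}$) and about citing rather than reproving the surjectivity direction of Schur--Weyl in the degenerate regime $d<t$ are well placed.
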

A proof may be found in ref.~\onlinecite{Brown2010}. In particular, we see that the $t$-th moment operator for the Haar distribution over all $N$ sites is just the orthogonal projector on to the globally uniform permutation states on all \(N\) sites. Furthermore, these states also span the unit eigenspace of the moment operator of any architecture: 
\begin{lemma}
    \label{lemma:global_is_unique}
    If the circuit architecture is connected, the unit eigenspace of $\tmoment_\rqc$ is spanned by the globally uniform permutation states $\ket{\sigma}^{\otimes N}$. 
\end{lemma}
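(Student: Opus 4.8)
I would identify the unit eigenspace of $\tmoment_\rqc$ by reducing it to the common invariant subspace of the individual gate projectors, and then exploit connectedness via the universality of two‑local unitaries. The first step is to write $\tmoment_\rqc$ as an ordered product of orthogonal projectors: order the two‑site gates of $U_C$ as $U_{i_1},\dots,U_{i_m}$ consistently with the circuit (gates within a layer act on disjoint sites, so their internal order is immaterial), so that $U_C^{\otimes t,t} = U_{i_m}^{\otimes t,t}\cdots U_{i_1}^{\otimes t,t}$. Because the gates are independent, the expectation of this ordered product factors entrywise, giving $\tmoment_\rqc = G_{i_m}\cdots G_{i_1}$, where $G_i$ is the single‑gate projector of Theorem~\ref{thm:projector} supported on the two sites of gate $i$ (and the identity elsewhere). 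Each $G_i$ is an orthogonal projector on the full Hilbert space, so $\|\tmoment_\rqc\|\le 1$, and if $\tmoment_\rqc v = v$ then the telescoping chain $\|v\|=\|G_{i_m}\cdots G_{i_1}v\|\le\|G_{i_{m-1}}\cdots G_{i_1}v\|\le\cdots\le\|G_{i_1}v\|\le\|v\|$ consists of equalities, which forces $G_i v = v$ for every gate $i$. Hence the unit eigenspace of $\tmoment_\rqc$ equals $\bigcap_i \mathrm{range}(G_i)$.

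The second step turns local invariance into global invariance. By Theorem~\ref{thm:projector} (equivalently, by Schur--Weyl), $\mathrm{range}(G_i)$ is the span of the uniform permutation states on the two sites of gate $i$, which is exactly the subspace fixed by $W^{\otimes t,t}$ for every $W\in U(q^2)$ acting on those sites. So a unit eigenvector $v$ is fixed by the representation $W\mapsto W^{\otimes t,t}$ of every two‑local unitary group supported on a gate, hence of the subgroup $H$ of $U(q^N)$ these groups generate. Since the architecture is connected, two‑local unitaries on its gates are universal, so $H$ is dense in $U(q^N)$; as $W\mapsto W^{\otimes t,t}v$ is continuous and $\{W : W^{\otimes t,t}v=v\}$ is a closed subgroup, $v$ is fixed by $W^{\otimes t,t}$ for all $W\in U(q^N)$, and applying Theorem~\ref{thm:projector} with $m=N$ places $v$ in $\mathrm{span}\{\ket{\sigma}^{\otimes N}\}$. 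The reverse inclusion is immediate: each $\ket{\sigma}^{\otimes N}$ restricts to $\ket{\sigma}^{\otimes 2}$ on the sites of any gate, so it lies in every $\mathrm{range}(G_i)$ and is therefore a unit eigenvector of $\tmoment_\rqc = G_{i_m}\cdots G_{i_1}$.

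The only nontrivial point — the one I expect to need the most care — is the passage in the second step from ``$v$ is invariant under every gate's two‑local unitaries'' to ``$v$ is Haar‑invariant on all $N$ sites,'' i.e.\ invoking universality of two‑local gates on a connected interaction graph and then upgrading invariance under the dense subgroup $H$ to invariance under all of $U(q^N)$ by continuity. If a self‑contained argument avoiding universality is preferred, one can instead work directly with the uniform‑permutation subspaces $\mathcal W_S := \mathrm{span}\{\ket{\sigma}^{\otimes S}\}\otimes\mathcal H_{\overline S}$: prove that $\mathcal W_S\cap\mathcal W_{S'}=\mathcal W_{S\cup S'}$ whenever $S\cap S'\neq\emptyset$, and then merge the edge‑spaces $\mathrm{range}(G_i)=\mathcal W_{\{a_i,b_i\}}$ along a spanning tree of the connected architecture to conclude $\bigcap_i\mathrm{range}(G_i)=\mathcal W_{\{1,\dots,N\}}=\mathrm{span}\{\ket{\sigma}^{\otimes N}\}$. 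In that version the crux is the overlap identity, which follows from the structure of the permutation states (cleanly via the dual Weingarten basis when $q\ge t$, with a short additional argument when $q<t$).
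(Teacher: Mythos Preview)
Your proof is correct and close in spirit to the paper's, but more self-contained. Both arguments ultimately rest on the universality of two-site gates on a connected architecture. The paper, however, invokes this indirectly: it applies $k$ repetitions of the architecture, cites the convergence-of-measures result of ref.~\onlinecite{Emerson2005} to conclude that the induced measure on $U(q^N)$ tends to Haar, and then observes that $\lim_{k\to\infty}\tmoment_\rqc^{\,k}$ is simultaneously the projector onto the unit eigenspace of $\tmoment_\rqc$ and equal to $\tmoment_\text{Haar}$. You instead bypass the dynamical citation: first you reduce the unit eigenspace to $\bigcap_i \mathrm{range}(G_i)$ via the telescoping norm argument (a fact the paper establishes separately, as Lemma~\ref{lemma:leftright_eigenspace} in Appendix~\ref{app:ssv_bound}), and then you upgrade two-local invariance to global invariance directly by density of the generated subgroup and continuity of $W\mapsto W^{\otimes t,t}v$. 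This is arguably cleaner, since it avoids the external convergence result and makes the structural identity explicit at the outset. Your alternative combinatorial route via the overlap identity $\mathcal W_S\cap\mathcal W_{S'}=\mathcal W_{S\cup S'}$ is also viable, though the ``short additional argument'' you flag for $q<t$ is most naturally the same density-and-continuity step you already used in the main route, so that variant does not truly sidestep universality.
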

\begin{proof}
The support of the distribution over the unitaries induced by a random architecture is a universal gate set if and only if the architecture is connected.\footnote{The connectedness is relevant because it allows the use of SWAP gates to obtain any pairwise operation.} \cite{Barenco1995} Suppose we apply \(k\) repetitions of the circuit architecture. As \(k \rightarrow \infty\), ref.~\onlinecite{Emerson2005} shows that the induced measure on \(U(q^N)\) converges to the Haar measure. It follows that the corresponding moment operator \(\tmoment_\rqc^k\) converges to that of the Haar measure, which is the projector onto the span of the uniform permutation states. But since \(\tmoment_\rqc\) is norm-nonincreasing, \(\lim_{k \rightarrow \infty} \tmoment_\rqc^k\) is a projector on to the unit eigenspace of \(\tmoment_\rqc\), so the unit eigenspace of \(\tmoment_\rqc\) must be the same as that of \(\tmoment_\text{Haar}\). Theorem \ref{thm:projector} completes the argument.
\end{proof}

\begin{figure}
    \centering
    \begin{tikzpicture}
        \begin{scope}
            \node[anchor=north west,inner sep=0] (image_a) at (0,0)
            {\includegraphics[width=0.99\columnwidth]{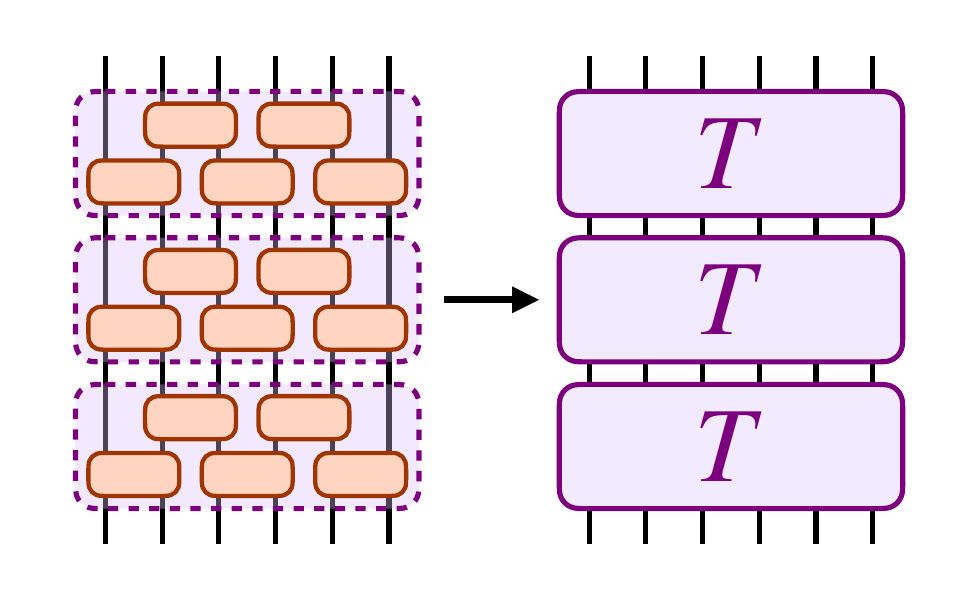}};
        \end{scope}
    \end{tikzpicture}
    \vspace*{-1.0cm}
    \caption{Breaking up \(\tmoment_\rqc\) for a random quantum circuit (in this case a 1D brickwork) into $\np=3$ copies of the transfer matrix $T$.}
    \label{fig:transfer_matrix}
\end{figure}

\section{Transfer matrix and the spectral gap} \label{section:finding_gap}
We now specialize to the case of \(\ell\)-layer periodic architectures. In this case we can define a transfer matrix \(T\) by contracting together the projectors $G_i, i \in \left\{1...\frac{\ell N}{2}\right\}$ of the moment operator $\tmoment_\rqc$ corresponding to a single period of the architecture, as shown in Figure~\ref{fig:transfer_matrix}. We will show that the approximate \(t\)-design time is controlled by the singular value spectrum of \(T\). 

If there are $k$ periods of the architecture, the moment operator corresponds to the \(k\)\textsuperscript{th} power of the transfer matrix, so by Equation~\ref{eq:frame_potential} the frame potential is
\begin{gather}
    \mathcal{F}_{\varepsilon} = \tr(T^{\dagger k} T^k) = ||T^k||^2_F
\end{gather}

\begin{theorem}
\label{thm:framepotential}
Consider a connected periodic architecture on \(N\) sites. After \(k\) periods, the frame potential is at most
    \begin{gather}
        \mathcal{F}_{\varepsilon} \leq \mathcal{F}_{\text{Haar}} + q^{2Nt}s_*^{2k}
    \end{gather}
    where $s_*$ is the largest non-unit singular value of the transfer matrix $T$.
\end{theorem}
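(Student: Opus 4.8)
\emph{Proof proposal.} The plan is to reduce the claim to a blockwise computation of $\mathcal{F}_\varepsilon = \|T^k\|_F^2$, exploiting that $T$ is a product of the orthogonal projectors $G_i$ supplied by Theorem~\ref{thm:projector}. First I would record the elementary facts: writing $T = G_m \cdots G_1$ with $m = \ell N/2$, each $G_i$ is an orthogonal projection, so $T$ is a contraction and all of its singular values lie in $[0,1]$; moreover $T$ and $T^\dagger = G_1 \cdots G_m$ both fix every global permutation state $\ket{\sigma}^{\otimes N}$, since the restriction of such a state to the two sites of any gate is itself a $2$-site uniform permutation state and hence lies in the range of the corresponding $G_i$. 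Let $W$ denote the span of these global permutation states and $P$ the orthogonal projector onto $W$.

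The central step is to show that the singular value $1$ of $T$ occurs with multiplicity exactly $\dim W$ and that $T$ preserves the splitting of the ambient space into $W$ and $W^\perp$. For a unit vector $v$ with $\|Tv\| = 1$, the chain $1 = \|Tv\| \le \|G_{m-1}\cdots G_1 v\| \le \cdots \le \|G_1 v\| \le \|v\| = 1$ must be an equality throughout; since $G_i$ is an orthogonal projector, $\|G_i u\| = \|u\|$ forces $G_i u = u$, and applying this down the chain gives $G_i v = v$ for every $i$, i.e. $v \in \bigcap_i \text{range}(G_i)$. This intersection is contained in $\ker(T - I)$, which by Lemma~\ref{lemma:global_is_unique}, applied to a single connected period (whose moment operator is exactly $T$), equals $W$; and $W$ is clearly contained in the intersection. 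Hence the unit-singular-value subspace of $T$ is precisely $W$, so $s_* := \|T|_{W^\perp}\| < 1$. Finally, because $T^\dagger$ fixes $W$, we have $\langle w, Tv \rangle = \langle T^\dagger w, v\rangle = 0$ for $w \in W$, $v \in W^\perp$, so $T W^\perp \subseteq W^\perp$; together with $T|_W = I_W$ this gives the block decomposition $T = I_W \oplus T'$ with $\|T'\| = s_*$.

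The rest is bookkeeping. Block-diagonality gives $\mathcal{F}_\varepsilon = \|T^k\|_F^2 = \|I_W\|_F^2 + \|(T')^k\|_F^2 = \dim W + \|(T')^k\|_F^2$, and $\|(T')^k\|_F^2 \le \dim(W^\perp)\, \|T'\|^{2k} \le q^{2Nt} s_*^{2k}$, since the ambient space has dimension $(q^N)^{2t} = q^{2Nt}$. On the other hand $\tmoment_{\text{Haar}} = P$ by Theorem~\ref{thm:projector}, so by Equation~\ref{eq:frame_potential}, $\mathcal{F}_{\text{Haar}} = \tr(\tmomentd_{\text{Haar}} \tmoment_{\text{Haar}}) = \tr P = \dim W$. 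Subtracting the two expressions yields $\mathcal{F}_\varepsilon - \mathcal{F}_{\text{Haar}} \le q^{2Nt} s_*^{2k}$, which is the claim.

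The only step with genuine content is the second paragraph: pinning the unit-singular-value subspace to $W$ and establishing invariance of $W^\perp$. This is where connectivity of the period enters (through Lemma~\ref{lemma:global_is_unique}) and where it matters that $T$ is literally a product of orthogonal projectors rather than an arbitrary contraction; everything else is routine linear algebra. The estimate $\dim(W^\perp) \le q^{2Nt}$ is slightly wasteful — one could replace it by $q^{2Nt} - \dim W$ — but this is immaterial for the downstream depth bounds.
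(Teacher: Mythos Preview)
Your proof is correct and follows essentially the same route as the paper's: identify the unit-singular-value space of $T$ with $W$ via Lemma~\ref{lemma:global_is_unique}, use that $T$ and $T^\dagger$ both fix $W$ to obtain a block decomposition, and then bound $\|T^k\|_F^2$ on $W^\perp$ by $\dim(W^\perp)\,s_*^{2k}$. The paper packages the linear-algebra steps into appendix results (Lemma~\ref{lemma:leftright_eigenspace}, Lemma~\ref{lemma:product_ssv_bound}, Theorem~\ref{thm:periodic_normbound}) rather than arguing inline, and it briefly keeps track of the zero-eigenspace dimension $m_0$ of $T$ to get the sharper prefactor $\big(\mathcal{F}_{\text{Haar}}^{(q^2)}\big)^N - \mathcal{F}_{\text{Haar}}^{(q^N)}$ before relaxing to $q^{2Nt}$ --- a refinement you correctly flag as immaterial.
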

\begin{proof}
Starting from
\[\mathcal{F}_{\varepsilon} = ||T^k||_F^2\]
we can use Theorem \ref{thm:periodic_normbound}, proven in Appendix \ref{app:ssv_bound}, to see
\[||T^k||_F^2 \leq m_1 + (d^2-m_1-m_0)s_*^{2k}\addtag\]
where \(d^2\) is the dimension of \(T\), \(m_1\) is the dimension of its unit eigenspace, and \(m_0\) is the dimension of its zero eigenspace. As long as the circuit is connected, by Lemma~\ref{lemma:global_is_unique}, it shares the same unit eigenstates as the Haar distribution, so \(m_1 = \mathcal{F}_\text{Haar}^{(q^N)}\). The first layer consists of \(2\)-site gates, each of which has only \(\mathcal{F}_\text{Haar}^{(q^2)}\) nonzero eigenvalues. A lower bound on \(m_0\) is thus \(d^2 -\left(\mathcal{F}_\text{Haar}^{(q^2)}\right)^N\). The total dimension for \(T\) is \(d^2 = q^{2Nt}\), so we find
\begin{gather}
    \label{eq:framebound_from_haars}
    \mathcal{F}_{\varepsilon} \leq  \mathcal{F}_\text{Haar}^{(q^N)} + \left(\left(\mathcal{F}_\text{Haar}^{(q^2)}\right)^N - \mathcal{F}_\text{Haar}^{(q^N)}\right)s_*^{2k}
\end{gather}
Of course, \(\mathcal{F}_\text{Haar}\) is just the dimension of the unit eigenspace, which is at most \(t!\) (since it is spanned by the uniform permutation states) and also at most \(q^{2Nt}\) (since that is the total dimension of \(T\)). In addition, it is known to be equal\cite{Hunter-Jones2019} to \(t!\) for \(t \leq d\). 
From Equation \ref{eq:define_frame_potential} it is clear that the frame potential is a monotonically increasing function of \(t\). Together these imply
\begin{gather}
    \min(t!,d!) \leq \mathcal{F}_\text{Haar}^{(d)} \leq \min\left(t!,d^{2t}\right)
\end{gather}
so that
\begin{gather*}
    \left(\mathcal{F}_\text{Haar}^{(q^2)}\right)^N - \mathcal{F}_\text{Haar}^{(q^N)} 
\leq \min\left(t!,q^{2t}\right)^N - \min\left(t!,\left(q^{N}\right)!\right)
\end{gather*}
The remaining algebra will be easier if we further relax this to
\begin{gather}
    \left(\mathcal{F}_\text{Haar}^{\left(q^2\right)}\right)^N - \mathcal{F}_\text{Haar}^{\left(q^N\right)} 
    \leq q^{2Nt}
\end{gather}
Without this relaxation, all of our bounds change by
\[2Nt \log q \rightarrow \log\left(\left(\mathcal{F}_\text{Haar}^{\left(q^2\right)}\right)^N - \mathcal{F}_\text{Haar}^{\left(q^N\right)}\right)\]
This latter is tighter for small \(t\), but the former is more convenient to interpret, so we prefer the simplified form. Equation \ref{eq:framebound_from_haars} then becomes
\begin{gather}
    \mathcal{F}_{\varepsilon} \leq  \mathcal{F}_{\text{Haar}}  + q^{2Nt}s_*^{2k}
\end{gather}
\end{proof}

Substituting Theorem \ref{thm:framepotential} into Equation \ref{eq:hunter-jones} gives a bound on  the rate at which a random circuit architecture approaches a $t$-design in terms of the subleading singular value (SSV) of the transfer matrix:
\begin{gather}
    \label{eq:diamondbound}
    ||\Phi_\text{RQC} - \Phi_\text{Haar}||_\diamond \leq q^{2Nt} s_*^k
\end{gather}
It follows that the number of periods required to push the diamond-norm error below $\epsilon$ can be upper-bounded in terms of \(s_*\) by
\begin{gather}
    \label{eq:criticaldepth}
    k_* = \frac{2Nt \log q + \log\frac{1}{\epsilon}}{\log\frac{1}{s_*}},
\end{gather}
which is already part of Theorem~\ref{thm:periodic_complete}. It remains only to bound \(s_*\), i.e. the spectral gap of $T$.

\section{Bounding the spectral gap} \label{section:layer_gap_bound}
In order to apply Theorem \ref{thm:framepotential}, we must bound the largest non-unit singular value \(T\). We will first show that this \(s_*\) is related to the geometry of the unit eigenspaces of each layer. Later we will study the relationship between this geometry and the architecture of the circuit to derive our final bound.

\(T\) can be interpreted as a product of orthogonal projection operators. Because each projector is norm-nonincreasing, any subspace whose norm is decreased by some large amount by the first few projectors cannot contain a large singular value. If \(P_i...P_1\) does not have a large singular value, then any large singular value of \(P_{i+1}...P_1\) must arise from vectors that were nearly unit eigenvectors of \(P_{i}...P_1\), while still being orthogonal to the unit eigenvectors of \(P_{i+1}...P_1\). This allows us to construct an inductive bound based on the relative geometry of the subspaces to which the \(P_i\) project.
\begin{theorem}
\label{thm:singvals1}
Consider some set of subspaces \(X_i, i \in \{1...n\}\) of a Hilbert space.
Let \(P_i\) be the orthogonal projector on to \(X_i\) and \(Q_i\) the orthogonal projector on to \(\bigcap_{j=1}^i X_j\). 
Define \(T = P_n...P_2 P_1\) and let \(s_*\) be the largest non-unit singular value of \(T_n\). Then we have the bound
\begin{gather}
    s_*^2 \leq 1 - \prod_{i=2}^n (1 - \mathscr{s}_i^2) \label{eq:singvals1}
\end{gather}
where \(\mathscr{s}_i\) is the largest non-unit singular value of \(P_i Q_{i-1}\).
\end{theorem}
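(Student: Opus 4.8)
The plan is to prove the bound by induction on $n$, applied to the partial products $T_i := P_i\cdots P_1$; write $s_i$ for the largest non-unit singular value of $T_i$, so that $s_* = s_n$ and the goal is $s_n^2 \le 1-\prod_{i=2}^n(1-\mathscr{s}_i^2)$. First I would record the elementary fact that any product $R$ of orthogonal projectors acts as the identity on the intersection of its factors' ranges, maps the orthogonal complement of that intersection into itself, and satisfies $\|Rv\|=\|v\|$ only for $v$ in that intersection. Applied to $T_i$ this gives $s_i = \|T_i|_{V_i^\perp}\|$ with $V_i := \bigcap_{j\le i}X_j = \mathrm{range}(Q_i)$ and $s_i<1$; applied to $P_iQ_{i-1}$ (whose factor ranges $X_i$ and $V_{i-1}$ intersect in $V_i$) it gives $\mathscr{s}_i = \|P_iQ_{i-1}|_{V_i^\perp}\| < 1$. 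The base case $i=1$ is immediate since $T_1=P_1$ is idempotent, so $s_1=0$. For the inductive step it suffices to prove the one-step estimate $1-s_i^2 \ge (1-\mathscr{s}_i^2)(1-s_{i-1}^2)$ and compose it with the inductive hypothesis $1-s_{i-1}^2 \ge \prod_{j=2}^{i-1}(1-\mathscr{s}_j^2)$.

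For the one-step estimate, fix a unit vector $x\in V_i^\perp$ and split it as $x = y+z$ with $y:=Q_{i-1}x$ and $z:=(I-Q_{i-1})x$. Routine checks give $y\in V_{i-1}\cap V_i^\perp$, $z\in V_{i-1}^\perp$, $T_{i-1}y=y$, and $T_{i-1}z\in V_{i-1}^\perp$, hence $T_ix = P_iy + P_iT_{i-1}z$ with $y\perp T_{i-1}z$. Two estimates are then available: $\|P_iy\| = \|P_iQ_{i-1}y\| \le \mathscr{s}_i\|y\|$ since $y=Q_{i-1}y\in V_i^\perp$, and $\|P_iT_{i-1}z\| \le \|T_{i-1}z\| \le s_{i-1}\|z\|$ by the inductive hypothesis. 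The main obstacle is that simply combining these via the triangle inequality, $\|T_ix\| \le \mathscr{s}_i\|y\| + s_{i-1}\|z\|$, followed by Cauchy--Schwarz in $(\|y\|,\|z\|)$, yields only the additive bound $s_i^2 \le \mathscr{s}_i^2 + s_{i-1}^2$, which is strictly weaker than the multiplicative target and does not telescope to the product.

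To get past this, expand $\|T_ix\|^2 = \|P_iy\|^2 + \|P_iT_{i-1}z\|^2 + 2\,\text{Re}\langle P_iy, P_iT_{i-1}z\rangle$ and control the cross term from \emph{both} sides using the identity $\langle P_iy, P_iT_{i-1}z\rangle = -\langle(I-P_i)y,(I-P_i)T_{i-1}z\rangle$, which holds because $y\perp T_{i-1}z$ together with the orthogonal-projector identity $\langle u,v\rangle = \langle Pu,Pv\rangle + \langle(I-P)u,(I-P)v\rangle$. Cauchy--Schwarz on each side bounds $|\text{Re}\langle P_iy,P_iT_{i-1}z\rangle|$ by the smaller of $\|P_iy\|\,\|P_iT_{i-1}z\|$ and $\|(I-P_i)y\|\,\|(I-P_i)T_{i-1}z\|$; substituting $\|(I-P_i)y\|^2 = \|y\|^2-\|P_iy\|^2 \ge (1-\mathscr{s}_i^2)\|y\|^2$ and the corresponding identity $\|(I-P_i)T_{i-1}z\|^2 = \|T_{i-1}z\|^2 - \|P_iT_{i-1}z\|^2$, and then maximizing over the remaining free norms subject to $\|y\|^2+\|z\|^2=1$, $\|P_iy\|^2\le\mathscr{s}_i^2\|y\|^2$ and $\|T_{i-1}z\|^2\le s_{i-1}^2\|z\|^2$, produces exactly $\|T_ix\|^2 \le 1-(1-\mathscr{s}_i^2)(1-s_{i-1}^2)$. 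At the extremal configuration the bound collapses, after writing $\|y\|^2=\cos^2\phi$ and $\mathscr{s}_i=\sin\alpha$, to $(\sin\alpha\cos\phi + s_{i-1}\cos\alpha\sin\phi)^2 \le \sin^2\alpha + s_{i-1}^2\cos^2\alpha$. The one part I expect to require care is the bookkeeping in this final maximization: a short case analysis on which of the two Cauchy--Schwarz bounds is active, verifying that the claimed value is attained in each regime and at the endpoints. Everything else is linear algebra of projectors.
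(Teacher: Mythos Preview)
Your induction framework and decomposition $x=y+z$ with $y=Q_{i-1}x$, $z=(I-Q_{i-1})x$, together with the observation $T_{i-1}z\in V_{i-1}^\perp$ so that $y\perp T_{i-1}z$, are exactly the paper's setup. The place where you diverge from the paper is the one-step bound on $\|P_i(y+T_{i-1}z)\|$. You expand the square and control the cross term by the twin Cauchy--Schwarz estimates $|\langle P_iy,P_iw\rangle|\le\min(\|P_iy\|\|P_iw\|,\|(I-P_i)y\|\|(I-P_i)w\|)$, then optimize; this works, but as you correctly anticipate it forces a case split on which bound is active, and one must also check that in the branch where the optimum sits in the interior (giving $\|y\|^2+\|T_{i-1}z\|^2$) the constraint $\|P_iy\|\le\mathscr{s}_i\|y\|$ forces $\|y\|$ small enough that this still lies below the target.

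The paper sidesteps this bookkeeping by using the variational form $\|P_i(y+w)\|=\langle\hat u,y+w\rangle$ with $\hat u$ the unit vector along $P_i(y+w)$. Writing $\theta,\phi$ for the angles $\hat u$ makes with $y$ and $w=T_{i-1}z$, orthogonality of $y$ and $w$ gives $\theta+\phi\ge\pi/2$, while $\hat u\in X_i$ and $y\in V_{i-1}\cap V_i^\perp$ give $\cos\theta\le\mathscr{s}_i$ directly. Optimizing $\|y\|\cos\theta+s_{i-1}\|z\|\cos\phi$ first over $\|y\|,\|z\|$ (one Cauchy--Schwarz) and then over $\theta$ along the line $\theta+\phi=\pi/2$ (a monotone function) yields the multiplicative bound in two clean strokes, with no case analysis. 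Your route is sound; the paper's is shorter.
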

The proof of this theorem is given in Appendix~\ref{app:ssv_bound}. In our case, the \(P_i\) will be the layers of the transfer matrix. We will call the \(\mathscr{s}_i\) the \textit{layer-restricted singular values}.  Our first goal is to characterize the \(Q_i\), which we term the \textit{intermediate unit eigenspace projectors}.

\subsection{Cluster-merging picture}
To bound the layer-restricted singular values, we first must understand the intermediate unit eigenspace to which \(Q_i\) projects. From Theorem~\ref{thm:projector}, we know that the unit eigenstates of a single gate are the uniform permutation states $\ket{\sigma}^{\otimes m}$. Moreover, each gate is norm non-increasing, so a state whose norm is reduced by any individual gate must not be a unit eigenstate. This leads to
\begin{lemma}
    \label{lm:cluster}
    Let \(G_i, i \in \{1...m\}\) be some gates within the transfer matrix which form a connected network on some $n$ sites. The unit eigenspace of the contracted network \(M = \prod_{i=1}^m G_i\) is spanned by the uniform permutation states $\{|\sigma\rangle^{\otimes n}, \sigma \in S_t\}$ 
\end{lemma}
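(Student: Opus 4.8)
The plan is to observe that this lemma is the \emph{local} version of Lemma~\ref{lemma:global_is_unique}. The contracted network $M = \prod_{i=1}^m G_i$ is itself the $t$-th moment operator of a random quantum circuit: the circuit on these $n$ sites whose gates are the Haar-random two-site unitaries that were averaged to produce $G_1,\dots,G_m$, composed in whatever order the product prescribes. So I would simply rerun the argument of Lemma~\ref{lemma:global_is_unique}, now on $n$ sites rather than $N$.

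First I would handle the containment $\mathrm{span}\{|\sigma\rangle^{\otimes n}\} \subseteq \ker(M-I)$: by Theorem~\ref{thm:projector} applied with $m=2$, each gate $G_i$ fixes the restriction $|\sigma\rangle\otimes|\sigma\rangle$ of $|\sigma\rangle^{\otimes n}$ to the two sites it touches and acts as the identity on the rest, so $M|\sigma\rangle^{\otimes n} = |\sigma\rangle^{\otimes n}$ for every $\sigma \in S_t$. For the reverse containment I would use connectedness. Since the $G_i$ form a connected graph over the $n$ sites, the corresponding two-site Haar gates generate a dense subgroup of $U(q^n)$ \cite{Barenco1995}, so $k$ independent repetitions of this $n$-site circuit drive the induced measure on $U(q^n)$ to the Haar measure \cite{Emerson2005}; since moment operators of independent circuits multiply, $M^k \to \tmoment^{(q^n)}_{\text{Haar}}$, the orthogonal projector onto $\mathrm{span}\{|\sigma\rangle^{\otimes n}\}$ (Theorem~\ref{thm:projector} again). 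Because $M$ is a product of orthogonal projectors it is norm-nonincreasing, and this forces its unit eigenspace to coincide with the image of $\lim_k M^k$: from $M\lim_k M^k = \lim_k M^{k+1}$ we get $M\,\tmoment^{(q^n)}_{\text{Haar}} = \tmoment^{(q^n)}_{\text{Haar}}$, so $\mathrm{Im}\,\tmoment^{(q^n)}_{\text{Haar}} \subseteq \ker(M-I)$, while conversely $M|\psi\rangle = |\psi\rangle$ forces $M^k|\psi\rangle = |\psi\rangle$ and hence $\tmoment^{(q^n)}_{\text{Haar}}|\psi\rangle = |\psi\rangle$. The two containments give the claim.

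The one step I would take care to state explicitly rather than wave at is the last one --- passing from ``$M$ norm-nonincreasing and $M^k$ convergent'' to ``unit eigenspace of $M$ equals the image of the limit'' --- which is precisely the linear-algebra input already invoked in Lemma~\ref{lemma:global_is_unique}, so I would reuse it verbatim. An alternative route that avoids $\lim_k M^k$ altogether is to note directly that for a product of orthogonal projectors the unit eigenspace equals $\bigcap_i \mathrm{Im}(G_i)$ --- a unit singular vector must keep its norm across every factor, hence lie in, and be fixed by, each $G_i$ --- and then to show $\bigcap_i \mathrm{Im}(G_i) = \mathrm{span}\{|\sigma\rangle^{\otimes n}\}$ by induction along a spanning tree of the connectivity graph. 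That combinatorial argument works but requires tracking linear dependences among the permutation states when $q < t$, so I would present the dynamical argument as the main proof and record the intersection description only as a remark; it is in any case the statement needed to set up the cluster-merging picture of the next subsection.
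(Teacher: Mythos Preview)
Your proposal is correct and follows essentially the same approach as the paper: recognize $M$ as the moment operator of a connected random circuit on the $n$ sites and invoke Lemma~\ref{lemma:global_is_unique}. The paper's proof is simply the terse version --- it cites Lemma~\ref{lemma:global_is_unique} directly rather than rerunning its argument --- so your write-up is a more detailed unpacking of the same idea.
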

\begin{proof}
We may interpret \(M\) as the transfer matrix of a (possibly incomplete) random quantum circuit architecture with the same layout as the \(G_i\). Since the \(G_i\) are connected, the corresponding quantum circuits are also connected. We may then apply Lemma \ref{lemma:global_is_unique}.
\end{proof}

\begin{figure}
    \centering
    \begin{tikzpicture}
        \begin{scope}
            \node[anchor=north west,inner sep=0] (image_a) at (0,0)
            {\includegraphics[width=0.99\columnwidth]{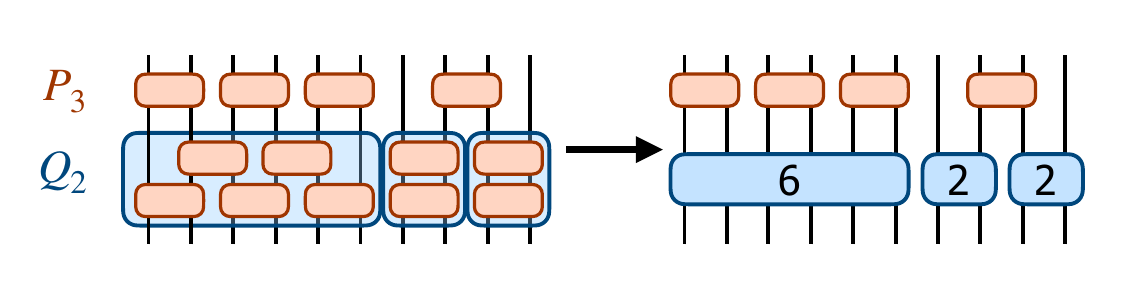}};
            \node [anchor=north west] (note) at (-0.15,-0.1) {\small{\textbf{a)}}};
        \end{scope}
    \end{tikzpicture}
    \vspace*{-0.5cm}
    \begin{tikzpicture}
        \begin{scope}
            \node[anchor=north west,inner sep=0] (image_a) at (0,-0.7)
            {\includegraphics[width=0.7\columnwidth]{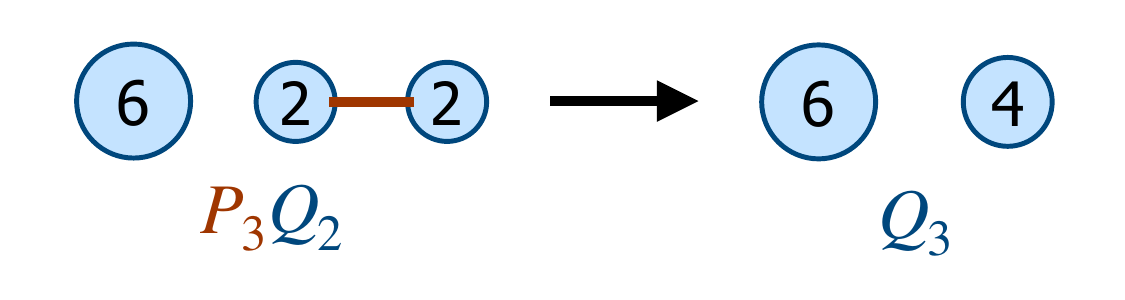}};
            \node [anchor=north west] (note) at (-0.15,-0.7) {\small{\textbf{b)}}};
        \end{scope}
    \end{tikzpicture}
    \caption{The cluster-merging picture for calculating the 3\textsuperscript{rd} layer-restricted subleading singular value $\mathscr{s}_3$, i.e. the subleading singular value of $P_3 Q_2$. (a): The first two layers are collected into clusters \textit{(blue)} of sites that are connected by the first two layers of gates. The intermediate unit eigenspace projector $Q_2$ is then a tensor product of projectors on each cluster. (b): (\textit{Left}) Graphical representation of $P_3 Q_2$ in the cluster merging picture. Each cluster in $Q_2$ becomes a node labeled by the number of sites it contains. Gates in $P_3$ that join distinct clusters become edges. Gates in \(P_3\) that join two sites within the same cluster are ignored. (\textit{Right}) Each connected component of this graph is merged into a single cluster of the next intermediate unit eigenspace projector $Q_3$ .}
    \label{fig:cluster_merging}
\end{figure}

This leads to the \textit{cluster-merging picture} (Figure~\ref{fig:cluster_merging}). Take some sequence of complete layers $P_1 ... P_l$. These layers may connect all the sites, or they may connect only certain subsets of the sites. Call each subset of sites which is connected by $P_1 ... P_l$ a \textit{cluster}. Recall that the intermediate unit eigenspace projector \(Q_l\) is defined to be the orthogonal projector on to the unit eigenspace of the product \(P_l ... P_1\). 

By Lemma~\ref{lm:cluster}, the space to which \(Q_l\) projects is the tensor product of the spans of the uniform permutation states on each cluster. As we include more layers by increasing \(l\), the clusters merge and the unit eigenspace to which \(Q_l\) projects shrinks. Eventually the whole circuit has been connected into a single cluster, at which time the unit eigenspace is the span of the globally uniform permutation states $|\sigma\rangle^{\otimes N}$. 

Our goal is to compute \(\mathscr{s}_l\), which is the largest non-unit singular value of \(P_{l}Q_{l-1}\). 
This can be accomplished by constructing the \textit{cluster-merging graph} which uniquely determines \(\mathscr{s}_l\). To build the graph, let each cluster of sites connected by \(P_{l-1}...P_1\) be a node, with weight equal to the number of sites in the cluster. Each gate of \(P_l\) that joins two distinct clusters is mapped to an edge joining the corresponding nodes. Gates of \(P_l\) that join two sites within the same cluster do not influence \(\mathscr{s}_l\) and can be ignored. The layer-restricted singular values \(\mathscr{s}_l\) depend only on the graph topology and the node weights, not on any other details of the architecture.

\section{Reduction of each layer to 1D brickwork loops} \label{section:cluster_splitting}
Our next goal is to obtain an architecture-independent upper bound on the \(\mathscr{s}_i\). We begin by identifying a set of rules for rearranging a cluster-merging graph into a standardized form without decreasing \(\mathscr{s}_i\). 

\subsubsection{Structure of the graph}
\begin{lemma}
    For a complete \(\ell\)-layer periodic architecture, the cluster-merging graph for each layer above the first has nodes of even weight and even degree.
\end{lemma}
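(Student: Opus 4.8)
The plan is to prove the two assertions --- that every node has even weight and that every node has even degree --- separately, in each case exploiting completeness of a single relevant layer. Throughout I write $P_l$ for the $l$-th layer and consider a layer index $l$ with $2 \le l \le \ell$.

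\emph{Even weight.} I would first recall that a node of the layer-$l$ cluster-merging graph is a cluster: a connected component of the graph on the $N$ sites whose edges are the gates of $P_1, \dots, P_{l-1}$. These components form a partition of the sites, and since adding layers only ever merges components, this partition is a coarsening of the one induced by $P_1$ alone. Completeness of the first layer means $P_1$ is a perfect matching on the $N$ sites, so its components are exactly $N/2$ pairs. Hence every cluster is a disjoint union of such pairs and therefore has even size, which is precisely the node weight.

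\emph{Even degree.} Next I would fix a cluster $c$ that is a node of the layer-$l$ graph and use completeness of $P_l$. By definition the degree of $c$ counts the gates of $P_l$ joining $c$ to a \emph{different} cluster, i.e.\ the gates of $P_l$ with exactly one endpoint among the sites of $c$; gates of $P_l$ with both endpoints in $c$ are discarded rather than turned into self-loops. Since $P_l$ is complete, each of the $|c|$ sites of $c$ is an endpoint of exactly one gate of $P_l$. Summing this incidence count over the sites of $c$ counts each gate internal to $c$ twice and each boundary-crossing gate once, so
\[
|c| = 2 a_c + \deg(c),
\]
where $a_c$ is the number of gates of $P_l$ contained in $c$. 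Because $|c|$ is even by the previous step, $\deg(c)$ must be even.

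It is worth flagging why the statement excludes the first layer: for $l = 1$ the relevant intermediate projector is the identity, so every site is its own (weight-one) cluster and each gate of $P_1$ contributes a degree-one incidence, making both conclusions fail; this is exactly why the ``$l \ge 2$'' restriction is needed. Note also that, although the lemma is stated for periodic architectures --- the setting in which cluster-merging graphs organize the transfer matrix --- the argument above uses only completeness of $P_1, \dots, P_l$, not periodicity. I do not expect a genuine obstacle: the whole proof is a double-counting of gate endpoints together with the observation that a coarsening of a perfect matching has even-sized blocks. The one place to be careful is the bookkeeping in the cluster-merging graph (dropping intra-cluster gates, and $Q_0$ being the identity), since it is precisely this bookkeeping that pins down the base case.
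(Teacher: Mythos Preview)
Your proposal is correct and essentially identical to the paper's proof: the paper also argues that the first layer produces size-$2$ clusters which subsequent layers merge (even weight), and then uses the count $2n_i + n_e = m$ from completeness of the current layer to conclude $n_e$ is even (even degree). Your additional remarks on the $l=1$ base case and the irrelevance of periodicity are accurate but not present in the paper's terser version.
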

\begin{proof}
We first show that the clusters are of even size. The first layer creates clusters of size \(2\). Subsequent layers create clusters by merging these size-\(2\) clusters together, so later cluster sizes are also even. 

We next show that each cluster has an even number of external connections.
Suppose a cluster of size \(m\) has \(n_i\) internal gates and \(n_e\) external gates. Since each site is acted on by exactly one two-site gate, we have \(2n_i + n_e = m\), and so \(n_e\) is even. The nodes are thus of even degree.
\end{proof}

\subsubsection{Cluster-merging bound}
Our goal is to bound \(\mathscr{s}_i\) for generic cluster-merging graphs in terms of cluster-merging graphs of some standard form. First consider the unit eigenspace.
\begin{lemma}
\label{lemma:cluster_unit_space}
The unit eigenspace of a connected cluster-merging graph is spanned by the globally uniform permutation states.
\end{lemma}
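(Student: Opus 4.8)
The plan is to realize the cluster-merging graph as the moment operator of an honest connected circuit architecture and then appeal to Lemma~\ref{lemma:global_is_unique}. Recall the cluster-merging graph has a node of weight $m_c$ for each cluster $c$ and an edge for each gate of $P_l$ bridging two distinct clusters; set $N=\sum_c m_c$. The associated operator is $P\,Q$, where $Q=\bigotimes_c Q_c$ with $Q_c$ the orthogonal projector onto $V_c:=\mathrm{span}\{\ket{\sigma}^{\otimes m_c}:\sigma\in S_t\}$ on the $m_c$ sites of cluster $c$, and $P=\prod_e G_e$ is the product of the two-site moment operators $G_e$ of the bridging gates. Within a single layer these $G_e$ act on disjoint site pairs, so they commute and $P$ is an orthogonal projector; since $Q$ is also an orthogonal projector, a vector is a unit eigenvector of $P\,Q$ iff it lies in $\mathrm{im}(Q)\cap\mathrm{im}(P)=\bigl(\bigotimes_c V_c\bigr)\cap\bigcap_e\mathrm{im}(G_e)$ (a one-line norm-nonincreasing argument of the same flavour as the proof of Lemma~\ref{lemma:global_is_unique}). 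So it suffices to identify this intersection.

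Next I would invoke Theorem~\ref{thm:projector}: $Q_c$ is exactly the moment operator of a single $m_c$-site Haar-random gate. Hence the cluster-merging graph is literally the moment operator of the architecture $\mathcal{A}$ that places one $m_c$-site Haar gate on each cluster $c$ and one two-site Haar gate on each bridging edge. (If one wishes to remain within two-site gates, replace the $m_c$-site gate by any connected two-site sub-network $R_c$ on those $m_c$ sites; by Lemma~\ref{lm:cluster} the product over $R_c$ has unit eigenspace exactly $V_c$, and since these internal gates act as the identity elsewhere, intersecting their images over all clusters returns precisely $\bigotimes_c V_c$. Either realization has the same unit eigenspace, so either one works.) The architecture $\mathcal{A}$ is connected: each cluster gate connects the $m_c$ sites inside cluster $c$, and the bridging edges connect the clusters exactly as in the cluster-merging graph, which is connected by hypothesis; hence every pair of the $N$ sites is joined by a path of gates.

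Finally, the unit eigenspace of the moment operator of $\mathcal{A}$ equals the intersection $\bigl(\bigotimes_c V_c\bigr)\cap\bigcap_e\mathrm{im}(G_e)$ computed in the first step, and by Lemma~\ref{lemma:global_is_unique} (equivalently Lemma~\ref{lm:cluster}), applied to the connected architecture $\mathcal{A}$ on $N$ sites, this space is spanned by the globally uniform permutation states $\ket{\sigma}^{\otimes N}$, $\sigma\in S_t$. This is exactly the claim. The step I expect to require the most care is the bookkeeping linking the operator $P\,Q$ to $\mathrm{im}(Q)\cap\mathrm{im}(P)$ and, in the two-site realization, the identity $\bigcap_{g\in R_c}\mathrm{im}(g)=V_c\otimes(\text{rest})$ — but both are short consequences of the fact that a product of orthogonal projectors has unit eigenspace equal to the intersection of their images, which I would state as a brief sub-observation rather than re-derive in place. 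Everything else amounts to translating the graph-theoretic hypothesis "connected cluster-merging graph" into the circuit-theoretic hypothesis "connected architecture," after which the earlier lemmas apply verbatim.
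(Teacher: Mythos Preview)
Your proposal is correct and follows essentially the same route as the paper: the paper's entire proof is the single sentence ``This is just a special case of Lemma~\ref{lm:cluster},'' and your argument is precisely the unpacking of that sentence---identify each node projector $Q_c$ with an $m_c$-site Haar moment operator via Theorem~\ref{thm:projector}, observe that the resulting network of gates is connected because the cluster-merging graph is, and invoke Lemma~\ref{lm:cluster} (equivalently Lemma~\ref{lemma:global_is_unique}). Your extra bookkeeping about $\mathrm{im}(P)\cap\mathrm{im}(Q)$ and the optional two-site realization is sound but more than the paper bothers to write.
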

This is just a special case of Lemma \ref{lm:cluster}.
We can now show a lower bound:
\begin{lemma}
\label{lemma:merge}
Let \(A\),\(B\) be two nodes of a connected cluster-merging graph. If we merge \(A\) and \(B\) into a single node \(AB\), the subleading singular value of the graph does not increase.
\end{lemma}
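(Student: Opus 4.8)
\emph{Proof proposal.} The plan is to recast the subleading singular value of a cluster-merging graph as the value of a constrained maximization over the Hilbert space, and then to check that merging two nodes shrinks the feasible set without changing the objective on it. Write $Q=\bigotimes_v Q_v$ for the intermediate unit-eigenspace projector of the original graph $\Gamma$, where $Q_v$ projects onto the span of the uniform permutation states on cluster $v$, and $P=\prod_e G_e$ for the product of the two-site moment-operator projectors attached to its edges; let $P',Q'$ be the analogous operators for the graph $\Gamma'$ obtained by merging $A$ and $B$. Since gates within a layer have disjoint supports, the $G_e$ commute and $P,P'$ are genuine orthogonal projectors. Because $\Gamma$ and $\Gamma'$ are both connected on the same $N$ sites, Lemma~\ref{lemma:cluster_unit_space} identifies $\mathrm{Range}(P)\cap\mathrm{Range}(Q)=\mathrm{Range}(P')\cap\mathrm{Range}(Q')=\mathcal{U}$, the common span of the globally uniform permutation states. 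For orthogonal projectors $QPQ$ is self-adjoint with $1$-eigenspace exactly $\mathcal{U}$, and one checks that its largest non-unit eigenvalue — the square of the subleading singular value of $PQ$ — is obtained by maximizing $\langle v,QPQv\rangle=\|Pv\|^2$ over unit vectors $v\in\mathrm{Range}(Q)$ with $v\perp\mathcal{U}$:
\[
\mathscr{s}(\Gamma)^2=\max\bigl\{\,\|Pv\|^2 \;:\; v\in\mathrm{Range}(Q),\ v\perp\mathcal{U},\ \|v\|=1\,\bigr\},
\]
and the same formula holds for $\Gamma'$ with $P',Q'$.

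I would then invoke two structural facts. First, $\mathrm{Range}(Q')\subseteq\mathrm{Range}(Q)$: a uniform permutation state $\ket{\sigma}^{\otimes(w_A+w_B)}$ on the merged cluster factors as $\ket{\sigma}^{\otimes w_A}\otimes\ket{\sigma}^{\otimes w_B}$, which already lies in $\mathrm{Range}(Q_A\otimes Q_B)$, while the remaining tensor factors are untouched by the merge. Second, $Pv=P'v$ for every $v\in\mathrm{Range}(Q')$: merging $A$ and $B$ deletes exactly the edges joining them, and each such gate $G_e$ now acts on two sites of the single cluster $A\cup B$; since any uniform permutation state on that cluster already lies in $\mathrm{Range}(G_e)$, such a $G_e$ fixes all of $\mathrm{Range}(Q')$, and because the layer gates all commute we may write $P=P'\prod_e G_e$ with the product over the deleted edges, so those extra factors act as the identity on $\mathrm{Range}(Q')$. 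Combining these, when $v$ ranges over unit vectors of $\mathrm{Range}(Q')$ orthogonal to $\mathcal{U}$ we have $\|P'v\|^2=\|Pv\|^2$, and this feasible set is contained in the one defining $\mathscr{s}(\Gamma)^2$; hence $\mathscr{s}(\Gamma')^2\le\mathscr{s}(\Gamma)^2$.

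I expect the genuinely delicate point to be the second structural fact: that an edge which has become internal to the merged cluster acts trivially on $\mathrm{Range}(Q')$. This hinges on $Q'$ projecting onto the uniform permutation states of the \emph{fully internally connected} cluster $A\cup B$, rather than onto the a priori larger subspace cut out only by the edges actually drawn in $\Gamma'$; that identification is exactly Lemma~\ref{lm:cluster} applied to $A\cup B$, whose internal connectivity is inherited from $\Gamma$ (and which is why we may freely discard self-edges throughout). Everything else should reduce to routine bookkeeping with commuting orthogonal projectors and the variational characterization above.
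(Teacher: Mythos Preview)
Your proof is correct and follows the same variational approach as the paper. One simplification: the paper keeps the edge projector $P$ identical for both graphs rather than deleting the $A$--$B$ edges, so your second structural fact (that the now-internal gates act trivially on $\mathrm{Range}(Q')$), while correctly argued, is unnecessary---with $P=P'$ the objective is literally the same function and only the feasible set shrinks.
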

\begin{proof}
Let \(\mathscr{s}, \mathscr{s}'\) be the subleading singular value of the old and new graphs, respectively. Let \(X,X'\) be the subspaces projected to by the nodes of the old and new graphs. Let \(P\) be the projector corresponding to the edges, which are the same for both graphs.

By Lemma \ref{lemma:cluster_unit_space}, the unit eigenspaces of \(PQ\) and \(PQ'\) are the same. Call this subspace \(Z\). We may write
\[\mathscr{s} = \max_{\left\{v \in  X| v \perp Z \right\}} \frac{||P v||}{||v||}\addtag\]
and
\[\mathscr{s}' = \max_{\left\{v \in  X'| v \perp Z \right\}} \frac{||P v||}{||v||}\addtag\]
\(X\) is spanned the cluster-uniform states over the old nodes, while \(X'\) is spanned by the cluster-uniform states over the new nodes. A cluster-uniform permutation state on the new node \(AB\) is of the form \(\ket{\sigma}^{\otimes (|A| + |B|)}\), which is also cluster-uniform on \(A\) and \(B\) individually. In other words, \(X' \subseteq X\). Since \(\mathscr{s}'\) is the maximum of the same function over a smaller space, we obtain \(\mathscr{s}' \leq \mathscr{s}\).
\end{proof}
If we run our lower bound in reverse, we are led to the following two rewriting rules that give upper bounds on subleading singular values of cluster-merging graphs:
\begin{lemma}
\label{lemma:split}
If we split a node and the two sides remain part of the same connected graph, the subleading singular value does not decrease.
\end{lemma}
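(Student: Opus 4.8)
The plan is to obtain Lemma~\ref{lemma:split} directly from Lemma~\ref{lemma:merge} by running the merge operation backwards. Suppose we start from a connected cluster-merging graph $\mathcal{G}$ and split one of its nodes $C$ into two nodes $C_1, C_2$, distributing the weight so that $|C_1| + |C_2| = |C|$ and distributing the edges formerly incident to $C$ between $C_1$ and $C_2$ (possibly also inserting an edge between $C_1$ and $C_2$), in such a way that the resulting graph $\mathcal{G}'$ is still connected. First I would observe that $\mathcal{G}'$ is itself a legitimate connected cluster-merging graph, so that Lemma~\ref{lemma:merge} and Lemma~\ref{lemma:cluster_unit_space} apply to it.

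Next I would apply Lemma~\ref{lemma:merge} to $\mathcal{G}'$ with $A = C_1$ and $B = C_2$: merging $C_1$ and $C_2$ back into a single node does not increase the subleading singular value. The key bookkeeping step is to check that this merge recovers $\mathcal{G}$ exactly, up to relabeling. The merged node has weight $|C_1| + |C_2| = |C|$; any edge that had been placed between $C_1$ and $C_2$ becomes an internal gate of the merged cluster and is therefore dropped under the cluster-merging rules; and every edge from $C_1$ or $C_2$ to the remainder of the graph becomes an edge from the merged node to that same remainder, reproducing precisely the external edges of $C$. Hence $\mathscr{s}(\mathcal{G}) = \mathscr{s}(\text{merge of } \mathcal{G}') \leq \mathscr{s}(\mathcal{G}')$, which is exactly the assertion that splitting does not decrease the subleading singular value.

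The only content beyond Lemma~\ref{lemma:merge} is verifying its hypotheses: (i) connectedness of $\mathcal{G}'$, which is precisely the stated assumption that ``the two sides remain part of the same connected graph''; and (ii) that the weight and edge accounting above genuinely identifies the merge of $\mathcal{G}'$ with $\mathcal{G}$, so that no spurious change in $\mathscr{s}$ is introduced. I expect step (ii) --- making the correspondence between ``splitting'' and ``the reverse of merging'' fully precise, including the handling of an edge internal to the split and of any parallel edges --- to be the main, though still routine, obstacle; everything else is immediate. I would also remark in passing that since this rewriting rule is used only to produce \emph{upper} bounds on $\mathscr{s}$, we are free to choose the split, so admissibility constraints on the sub-weights $|C_1|, |C_2|$ (such as evenness in the complete periodic case) impose no difficulty.
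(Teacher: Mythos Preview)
Your proposal is correct and matches the paper's own argument exactly: the paper proves Lemma~\ref{lemma:split} (together with Lemma~\ref{lemma:splitlink}) in a single sentence, noting that both ``correspond to applying [Lemma~\ref{lemma:merge}] backwards.'' Your additional bookkeeping about weights, internal edges, and parallel edges is sound but more detailed than the paper bothers with.
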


\begin{lemma}
\label{lemma:splitlink}
If we split a node and add a new link to keep the graph connected, the subleading singular value does not decrease.
\end{lemma}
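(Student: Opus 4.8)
The plan is to deduce Lemma~\ref{lemma:splitlink} directly from Lemma~\ref{lemma:merge}, using the fact --- already implicit in the construction of the cluster-merging graph in Section~\ref{section:layer_gap_bound} --- that a self-loop (an edge both of whose endpoints lie in a single node) has no effect on the subleading singular value. A self-loop at a node $v$ corresponds to a layer gate acting on two sites $a,b$ that both lie in cluster $v$; any cluster-uniform state $\ket{\sigma}^{\otimes|v|}$ restricts on $\{a,b\}$ to the permutation state $\ket{\sigma}^{\otimes 2}$, so the corresponding edge projector acts as the identity on the node subspace $X$ and can be dropped without changing $P$, $Q$, or $\mathscr{s}$.

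Given this observation, the argument is short. Let $G$ be the original cluster-merging graph and $G'$ the graph obtained by splitting a node $v$ into $v_1,v_2$ --- partitioning the edges incident to $v$ between them --- and adding a link $e_0$ joining $v_1$ and $v_2$; by hypothesis $G'$ is connected. Apply Lemma~\ref{lemma:merge} to $G'$, merging $v_1$ and $v_2$ back into a single node: this does not increase the subleading singular value, and the resulting graph $G''$ is exactly $G$ together with one extra self-loop, namely the image of $e_0$. By the observation above $\mathscr{s}(G'')=\mathscr{s}(G)$, hence $\mathscr{s}(G)=\mathscr{s}(G'')\le\mathscr{s}(G')$, which is the claim. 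Equivalently: add the (irrelevant) self-loop $e_0$ to $G$ for free, then recognize the split-with-link operation as precisely the reverse of the merge in Lemma~\ref{lemma:merge}, which can only increase $\mathscr{s}$.

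There is no genuine obstacle here; the content is inherited entirely from Lemma~\ref{lemma:merge}. Only two points need checking: that its connectedness hypothesis is met --- which holds because the link $e_0$ was added precisely to keep $G'$ connected --- and the self-loop claim, which reduces to the elementary restriction statement for uniform permutation states recorded in the first paragraph. This also clarifies why the lemma is stated separately from Lemma~\ref{lemma:split}: when a split would disconnect the graph one cannot invoke connectedness directly, and inserting the extra link $e_0$ is exactly what restores the hypothesis needed to run Lemma~\ref{lemma:merge} in reverse.
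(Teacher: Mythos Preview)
Your argument is correct and is essentially the paper's own proof spelled out in more detail: the paper simply says that Lemmas~\ref{lemma:split} and~\ref{lemma:splitlink} ``both correspond to applying [Lemma~\ref{lemma:merge}] backwards.'' Your extra care about why the surviving self-loop $e_0$ is harmless after re-merging is exactly the content the paper leaves implicit (it had already noted that internal edges ``do not influence $\mathscr{s}_l$ and can be ignored''), so the two proofs coincide.
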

\begin{proof}
These both correspond to applying \ref{lemma:merge} backwards. 
\end{proof}
Note that the number of links connected to a cluster cannot exceed the number of sites it contains, so we can only apply the latter lemma to clusters with at least two unoccupied sites. We are now ready to prove the main result of this section.

\begin{figure*}
    \centering
    \begin{tikzpicture}
        \begin{scope}
            \node[anchor=north west,inner sep=0] (image_a) at (0,0)
            {\includegraphics[width=1.3\columnwidth]{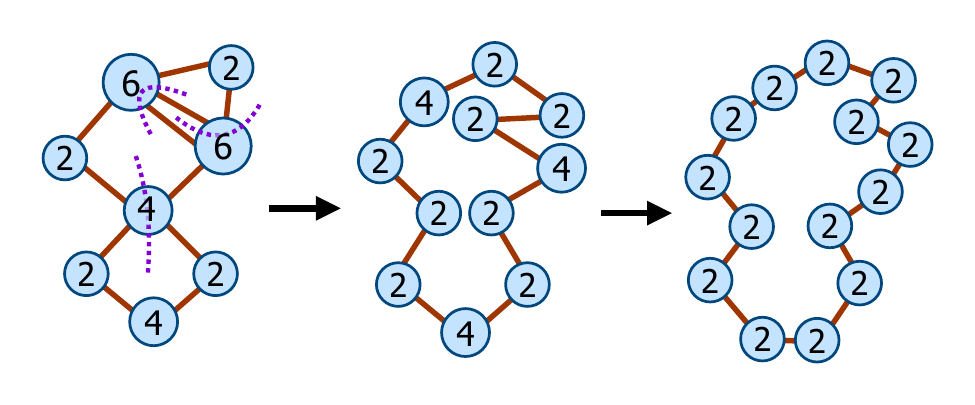}};
        \end{scope}
    \end{tikzpicture}
    \vspace*{-0.6cm}
    \caption{Algorithm for bounding a cluster-merging graph by a 1D brickwork structure. We first use Lemma \ref{lemma:split} to split clusters (without joining the halves with an edge) along an Eulerian circuit in our graph.We then use Lemma \ref{lemma:splitlink} to break up larger clusters (joining the halves with an edge) until all clusters in the loop have size 2. Each step in this process does not decrease the subleading singular value of the graph.}
    \label{fig:cluster_splitting}
\end{figure*}

\begin{theorem}
\label{thm:clusterbound}
Consider any cluster-merging graph on \(n\) sites with SSV \(\mathscr{s}\). Let \(s_\text{1D}(m)\) be the SSV for a 1D brickwork loop on \(m\) sites. We have
\[\mathscr{s} \leq \max_{m \leq n} s_\text{1D}(m)\addtag\]
\end{theorem}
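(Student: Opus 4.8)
The plan is to carry out the normalization algorithm sketched in Figure~\ref{fig:cluster_splitting}: repeatedly apply the rewriting rules of Lemmas~\ref{lemma:split} and~\ref{lemma:splitlink}, each of which can only \emph{increase} $\mathscr{s}$, until the graph has been brought into the standard form of a disjoint union of 1D brickwork loops, whose subleading singular values are by definition the $s_\text{1D}(m)$.

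First I would reduce to a single connected component. Since $PQ$ is block-diagonal across the connected components of the cluster-merging graph, $\mathscr{s}$ equals the largest of the components' subleading singular values; a component that is a single isolated node (no edges) contributes only unit singular values and can be discarded. So I would fix a component $C$ with at least one edge and total node weight $n' \le n$. By the lemma established above, every node of a cluster-merging graph has even degree, so $C$ admits an Eulerian circuit. The first genuine step is to \emph{unravel} this circuit: a node of degree $2k$ is entered $k$ times, so I would split it into $k$ degree-$2$ nodes, one per visit, each inheriting the in-edge and out-edge of that visit, and split its (even) weight among the copies keeping every copy's weight even and at least $2$ — possible because in a cluster-merging graph a node's degree never exceeds its weight. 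Each split leaves the two pieces joined through the remainder of the Eulerian circuit, so Lemma~\ref{lemma:split} applies and $\mathscr{s}$ does not decrease. The outcome is a single cycle of even-weight nodes.

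The second step is to \emph{thin} this cycle. While some node has weight $w>2$ (hence $w\ge 4$, since weights are even), I would split it into a weight-$2$ node and a weight-$(w-2)$ node joined by a fresh edge, both kept inside the cycle; $w\ge 4$ supplies the two spare sites that Lemma~\ref{lemma:splitlink} requires, so $\mathscr{s}$ again does not decrease. Iterating, every node ends up with weight exactly $2$. Since both rewriting rules preserve the total node weight, the final cycle has $n'/2$ weight-$2$ nodes, which is precisely the cluster-merging graph of the $n'$-site 1D brickwork loop. Hence the subleading singular value of $C$ is at most $s_\text{1D}(n') \le \max_{m\le n} s_\text{1D}(m)$, and taking the maximum over components finishes the proof.

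The main obstacle will not be the idea but the bookkeeping: at every rewriting step one must verify that the intermediate object is still a legitimate cluster-merging graph (even node weights, weight $\ge$ degree) and that the hypotheses of Lemmas~\ref{lemma:split} and~\ref{lemma:splitlink} — connectivity of the two pieces, and enough unoccupied sites, respectively — actually hold. The subtlety concentrates in the Eulerian decomposition, where one must handle multigraph artifacts such as parallel edges and self-loops (the latter becoming internal gates that are simply ignored), and in confirming that splitting a node once per visit along an Eulerian circuit always yields a single cycle rather than several disjoint ones. The clean combinatorial fact underlying everything is that even degree forces an Eulerian circuit, which becomes a single loop once its nodes are pulled apart — which is exactly why the 1D brickwork loop is the natural normal form.
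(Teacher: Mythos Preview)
Your proposal is correct and follows essentially the same route as the paper: reduce to connected components, use even degree to find an Eulerian circuit, split nodes along that circuit via Lemma~\ref{lemma:split} to obtain a single cycle, then repeatedly apply Lemma~\ref{lemma:splitlink} to thin every node down to weight~$2$, arriving at the $n'$-site 1D brickwork loop. The only difference is cosmetic: you are more explicit about distributing the even weight across the $k$ copies during the Eulerian unraveling and about checking the degree-versus-weight invariant at each step, whereas the paper leaves this bookkeeping implicit.
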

\begin{proof}
Our goal is to apply the graph rewriting rules repeatedly to upper-bound a cluster in a standard form. First suppose that the graph is connected. Since nodes are of even degree, there exists an Eulerian circuit through the cluster-merging graph. We can split nodes along this Eulerian circuit using Lemma \ref{lemma:split} until our graph structure is a single loop, again with nodes of even degree (Figure~\ref{fig:cluster_splitting}, first process). We can then apply Lemma \ref{lemma:splitlink} repeatedly within each node to split each node into many nodes, each of size \(2\) (Figure~\ref{fig:cluster_splitting}, second process). When we are done, we have exactly a 1D brickwork loop on \(m\) sites. These transformations cannot decrease the SSV, so the brickwork SSV is an upper bound on the original SSV. 

Now consider any disconnected graph. We can apply the argument above to bound each connected component by a loop graph. The corresponding operator is a tensor product of loop operators, and the largest singular value of each loop operator is \(1\). The  subleading singular value of the whole operator is then just the largest subleading singular value of any of the connected components. Each connected component is a brickwork loop of size at most \(m \leq n\). 
\end{proof}

\section{Spectral gap of 1D Brickwork Loops} \label{section:1d_solution}
In order to extract a useful result from Theorem \ref{thm:clusterbound}, we need a bound on the spectral gap of the 1D brickwork architecture. In Section \ref{section:cluster_splitting} we showed that the $t$-design depth is controlled by the spectral gap of the transfer matrix. We now reverse that argument in order to obtain a bound on the spectral gap in terms of the $t$-design depth. This will allow us to convert any result on 1D brickwork $t$-design depths into bounds on the spectral gap.

\begin{theorem}
\label{thm:1deigfromtdesign}
Suppose that the 1D brickwork architecture on \(N\) sites (with either open or periodic boundary conditions) forms an \(\epsilon\)-approximate \(t\)-design after
\begin{gather}
    k_1 = C(N,q,t) \log \frac{1}{\epsilon} + o_\epsilon \left(\log \frac{1}{\epsilon}\right) \label{eq:kstar}
\end{gather}
periods (corresponding to depth \(2k_1\)), for some function \(C(N,q,t)\).
Then the largest non-unit singular value \(s_\text{1D}\) of the corresponding transfer matrix \(T_t\) is bounded by
\[s_\text{1D}  \leq e^{-\frac{1}{2C(q,t)}}\addtag\] 
where $C(q,t) = \sup_{N} C(N,q,t)$
\end{theorem}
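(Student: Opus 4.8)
The plan is to run the argument of Section \ref{section:finding_gap} in reverse: there we converted a bound on $s_*$ into a bound on the $t$-design depth, and now we want to extract a bound on $s_\text{1D}$ from a known $t$-design depth $k_1$. The starting point is Theorem \ref{thm:framepotential}, applied to the 1D brickwork: after $k$ periods the frame potential satisfies $\mathcal{F}_\varepsilon \leq \mathcal{F}_\text{Haar} + q^{2Nt} s_\text{1D}^{2k}$, and combined with Equation \ref{eq:hunter-jones} this gives $\|\Phi_\text{RQC} - \Phi_\text{Haar}\|_\diamond \leq q^{2Nt} s_\text{1D}^{k}$. But I also need a \emph{lower} bound on the diamond norm (or the frame potential) in terms of $s_\text{1D}$, so that the hypothesis ``$\epsilon$-design after $k_1$ periods'' forces $s_\text{1D}^{k_1}$ to be small. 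The natural such bound comes from the fact that $\mathcal{F}_\varepsilon = \|T^k\|_F^2 \geq m_1 + s_\text{1D}^{2k}$, since $T^k$ has at least one singular value equal to $s_\text{1D}^k$ on top of its unit eigenspace of dimension $m_1 = \mathcal{F}_\text{Haar}$. This should be paired with a reverse inequality relating $\mathcal{F}_\varepsilon - \mathcal{F}_\text{Haar}$ to $\|\Phi_\text{RQC} - \Phi_\text{Haar}\|_\diamond$ — one expects something like $\|\Phi_\text{RQC} - \Phi_\text{Haar}\|_\diamond \geq c \, q^{-2Nt}(\mathcal{F}_\varepsilon - \mathcal{F}_\text{Haar})$ or, more cleanly, a direct lower bound on some norm distance by $s_\text{1D}^k$ itself.

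Concretely, here is the chain I would assemble. (i) Since $\Phi_\text{RQC}$ is the $t$-fold channel and $T$ its transfer-matrix reduction, the operator $\tmoment_\rqc^k - \tmoment_\text{Haar}$ has largest singular value exactly $s_\text{1D}^k$ (its unit eigenspace is killed by subtracting $\tmoment_\text{Haar}$, leaving $s_\text{1D}^k$ as the top of the remaining spectrum). (ii) The $2$-norm (largest singular value) of $\tmoment_\rqc^k - \tmoment_\text{Haar}$, which equals $s_\text{1D}^k$, lower-bounds the diamond norm up to the standard dimension-dependent factors; equivalently, any $\epsilon$-approximate $t$-design must have $s_\text{1D}^k \leq \epsilon$ possibly after absorbing polynomial prefactors — and crucially those prefactors are subsumed into the $o_\epsilon(\log\frac1\epsilon)$ term in the hypothesis. (iii) Therefore, taking $k = k_1$ with $k_1 = C(N,q,t)\log\frac1\epsilon + o_\epsilon(\log\frac1\epsilon)$, we get $s_\text{1D}^{k_1} \leq \epsilon$, i.e. $\log s_\text{1D} \leq -\frac{\log(1/\epsilon)}{k_1}$. (iv) Letting $\epsilon \to 0$, the ratio $\frac{\log(1/\epsilon)}{k_1} \to \frac{1}{C(N,q,t)}$, so $\log s_\text{1D} \leq -\frac{1}{C(N,q,t)}$, and hence $s_\text{1D} \leq e^{-1/C(N,q,t)}$. (v) Since the claimed bound has $\frac{1}{2C(q,t)}$ with $C(q,t) = \sup_N C(N,q,t)$, I need a factor-of-two cushion — this presumably comes either from the $s_\text{1D}^{2k}$ (rather than $s_\text{1D}^k$) appearing in the frame-potential form, or from a slightly lossy step when passing from the frame potential to the diamond norm, and from replacing $C(N,q,t)$ by its supremum over $N$ (which only weakens the bound, in the safe direction).

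The main obstacle is step (ii): getting a clean \emph{lower} bound on the relevant norm distance by $s_\text{1D}^k$ with prefactors that are genuinely harmless. The subtlety is that the hypothesis is phrased for the diamond norm, whereas $s_\text{1D}$ naturally controls the spectral ($2\to2$) norm of the moment operator; the inequality $\|\cdot\|_\diamond \geq \|\cdot\|_{2}$-type is the easy direction, but one must make sure no dimension factor enters on the \emph{wrong} side — i.e., we need ``$\epsilon$-design $\Rightarrow$ $s_\text{1D}^k$ small'' rather than the converse. I expect this is handled by noting that if the circuit is an $(\epsilon,t)$-design then in particular $\mathcal{F}_\varepsilon - \mathcal{F}_\text{Haar} \leq \epsilon$ directly (the frame-potential gap is itself a lower-bounded-by-nothing, upper-bounded quantity controlled by $\epsilon$), and since $\mathcal{F}_\varepsilon - \mathcal{F}_\text{Haar} \geq s_\text{1D}^{2k}$ we immediately get $s_\text{1D}^{2k_1} \leq \epsilon$, hence $s_\text{1D} \leq \epsilon^{1/(2k_1)} \to e^{-1/(2C(q,t))}$ as $\epsilon \to 0$. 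That route makes the factor of $2$ appear transparently and avoids the diamond-norm conversion entirely; I would take care to justify $\mathcal{F}_\varepsilon - \mathcal{F}_\text{Haar} \leq \epsilon$ (or $\leq \epsilon^2$, adjusting constants) from the definition of $(\epsilon,t)$-design, which is where the remaining bookkeeping lives. The rest is elementary limits and the monotone replacement $C(N,q,t)\to C(q,t)$.
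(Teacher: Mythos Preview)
There is a genuine gap. Your step (i) claims that the largest non-unit singular value of $T^k$ (equivalently, of $\tmoment_\rqc^k - \tmoment_\text{Haar}$) is $s_\text{1D}^k$, and your preferred frame-potential route uses the consequence $\mathcal{F}_\varepsilon - \mathcal{F}_\text{Haar} \geq s_\text{1D}^{2k}$. But singular values do not multiply under matrix powers; only eigenvalues do. What is true is $\sigma_{\text{max, non-unit}}(T^k) \leq s_\text{1D}^k$ (Lemma \ref{lemma:product_ssv_bound}), while the only available lower bound is $\sigma_{\text{max, non-unit}}(T^k) \geq |\lambda_*|^k$ (Lemma \ref{lemma:eigvsing}). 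For the two-projector transfer matrix $T = P_2 P_1$ one can check directly in a Jordan block that the non-unit singular value of $T^k$ is $s_\text{1D}^{2k-1}$, not $s_\text{1D}^k$. So your chain would at best yield $s_\text{1D}^{4k-2} \lesssim \epsilon$ and hence $s_\text{1D} \leq e^{-1/(4C)}$, losing a factor of two relative to the theorem. Separately, the step ``$(\epsilon,t)$-design $\Rightarrow \mathcal{F}_\varepsilon - \mathcal{F}_\text{Haar} \leq \epsilon$'' is not justified: the Hunter--Jones inequality \eqref{eq:hunter-jones} goes the other way, and reversing it costs dimension factors you have not tracked.

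The paper's proof avoids both issues by working with \emph{eigenvalues} rather than singular values. It lower-bounds the diamond norm directly: $\|\Delta\|_\diamond \geq \|\Delta\|_{1\to 1} \geq |\lambda|$, by evaluating the $1\to 1$ norm on an eigenvector $\rho$ of $\Delta$. Since the subleading eigenvalue of $\tmoment_\rqc = T^k$ is exactly $\lambda_*^k$, the hypothesis $\|\Delta\|_\diamond \leq \epsilon$ at $k = k_1$ gives $|\lambda_*| \leq \epsilon^{1/k_1} \to e^{-1/C}$ as $\epsilon \to 0$. Only then is the eigenvalue converted to a singular value via Theorem \ref{thm:2layer_eigvsing}, which for the two-layer brickwork gives $s_\text{1D} = \sqrt{\lambda_*}$. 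That square root is the true origin of the factor of $2$ in the exponent; it has nothing to do with the $s^{2k}$ in the frame-potential bound.
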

\begin{proof}
Define
\[\Delta(k,t) = \Phi_{\text{RQC},k,t} - \Phi_\text{Haar}\]
From the definition of the diamond norm 
\[||\Delta||_\diamond \geq ||\Delta||_{1 \rightarrow 1} \geq \frac{||\Delta(\rho)||_1}{||\rho||_1}\addtag\]
for any operators \(\rho\). Let \(\lambda\) be the largest eigenvalue of the \(\Delta\) and choose \(\rho\) to be the corresponding eigenvector, so that \(\Delta(\rho) = \lambda \rho\). Then we obtain
\begin{equation}
    \label{eq:diamond_lower_bound}
    ||\Delta||_\diamond \geq |\lambda|
\end{equation}
Note that the leading eigenvalue of \(\Delta\) is exactly the subleading eigenvalue of \(\Phi_\text{RQC}\), since the unit eigenspace of \(\Phi_\text{RQC}\) is exactly cancelled by \(\Phi_\text{Haar}\). Furthermore, \(\tmoment_\rqc = T_t^k\), so we have
\[\lambda = \lambda_*^k\addtag\]
where \(\lambda_*\) is the subleading eigenvalue of \(T\). If we choose \(k = k_1\) so that
\[||\Delta(k_1,t)||_\diamond < \epsilon\]
we obtain
\[|\lambda_*| < \epsilon^\frac{1}{k_1}\addtag\]
Inverting equation (12) gives
\[\log \frac{1}{\epsilon} = \frac{k_1}{C(N,q,t)} + o_{k_*}(k_*)\addtag\]
or
\[\epsilon = e^{-\frac{k_1}{C(N,q,t)} + o_{k_1}(k_1)}\addtag\]

We can now take the limit of small \(\epsilon\) or equivalently large \(k_1\) to obtain
\[\epsilon = \left(e^{-\frac{1}{C(N,q,t)}}\right)^{k_1}\addtag\]
which implies \(|\lambda_*| \leq e^{-\frac{1}{C(q,t)}}\). Theorem \ref{thm:2layer_eigvsing} of Appendix \ref{app:ssv_bound} tells us \(s_\text{1D} = \sqrt{\lambda_*}\), so we find
\(s_\text{1D} \leq e^{-\frac{1}{2C(q,t)}}\)
\end{proof}

\begin{theorem} 
\label{theorem:open_brickwork}
Let \(s_\text{1D,open}(N)\) be the subleading singular value of the $N$-site 1D brickwork with open boundary conditions, and let \(s_\text{1D,periodic}(N)\) be the same for the periodic-boundary-conditions case. Then \(s_\text{1D,periodic}(N) \leq s_\text{1D,open}(N)\)
\end{theorem}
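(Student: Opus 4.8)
The plan is to exhibit the open-boundary brickwork as a sub-architecture of the periodic one and argue that restoring the single missing ``wraparound'' gate can only shrink the subleading singular value. Assume $N$ even, as in the definition of the brickwork. The two layers of the periodic brickwork are $P_B$ (gates on $\{2j-1,2j\}$, which involves no wraparound and so is identical for open and periodic boundary conditions) and $P_A^{\text{per}}$ (gates on $\{2j,2j+1 \bmod N\}$, i.e.\ $(2,3),(4,5),\dots,(N{-}2,N{-}1)$ together with the wraparound gate $(N,1)$). The open-boundary layer $P_A^{\text{open}}$ is the same but with the $(N,1)$ gate deleted, so it acts as the identity on sites $\{1,N\}$. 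Letting $G'$ denote the orthogonal projector of the $(N,1)$ gate (acting as the identity on the other $N-2$ sites), $P_A^{\text{open}}$ and $G'$ act on disjoint sets of sites, hence commute, and $P_A^{\text{per}} = P_A^{\text{open}} G'$. Therefore the transfer matrices satisfy $T_{\text{per}} = T_{\text{open}}\, G'$.

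The second ingredient is that the two architectures have \emph{the same} unit eigenspace. The path graph and the cycle on $N$ vertices are both connected, so by Lemma~\ref{lemma:global_is_unique} the unit eigenspaces of $T_{\text{open}}$ and of $T_{\text{per}}$ both equal the span $Z$ of the globally uniform permutation states $\ket{\sigma}^{\otimes N}$. In fact, since $T$ is a product of norm-nonincreasing orthogonal projectors, a right singular vector with singular value $1$ must be fixed by every gate projector, so in either case the singular-value-$1$ subspace is exactly $Z$; equivalently, the eigenvalue-$1$ eigenspaces of $T_{\text{open}}T_{\text{open}}^\dagger$ and of $T_{\text{per}}T_{\text{per}}^\dagger$ are both $Z$. (For $T_{\text{per}}$ one also uses that $G'$ fixes $Z$, since each $\ket{\sigma}^{\otimes N}$ restricts on $\{N,1\}$ to $\ket{\sigma}^{\otimes 2}$, a unit eigenvector of the single-gate projector by Theorem~\ref{thm:projector}.)

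The core estimate is then a one-line operator inequality. Because $G'$ is an orthogonal projector, $G' G'^\dagger = G' \preceq I$, and hence
\[
T_{\text{per}} T_{\text{per}}^\dagger \;=\; T_{\text{open}}\, G'\, T_{\text{open}}^\dagger \;\preceq\; T_{\text{open}} T_{\text{open}}^\dagger .
\]
Both sides are positive semidefinite and self-adjoint with the same eigenvalue-$1$ eigenspace $Z$, so $Z^\perp$ is invariant under each, and the inequality restricts to $Z^\perp$. Taking the largest eigenvalue on $Z^\perp$ gives $\lambda_{\max}\big(T_{\text{per}} T_{\text{per}}^\dagger\big|_{Z^\perp}\big) \le \lambda_{\max}\big(T_{\text{open}} T_{\text{open}}^\dagger\big|_{Z^\perp}\big)$. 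Since the largest non-unit singular value of $T$ is precisely the square root of $\lambda_{\max}(TT^\dagger|_{Z^\perp})$, this reads $s_{\text{1D,periodic}}(N)^2 \le s_{\text{1D,open}}(N)^2$, and the claim follows upon taking square roots.

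I expect the only subtle point to be bookkeeping rather than substance: one must (i) get the site indexing right so that the open brickwork really is the periodic brickwork with a single gate removed from one layer, and (ii) be careful that ``subleading singular value'' refers to the spectrum \emph{outside} the unit eigenspace — which is exactly why it matters that the two unit eigenspaces are literally the same subspace $Z$. This is where the connectedness of both architectures (via Lemma~\ref{lemma:global_is_unique}) is doing the real work; indeed the naive slogan ``deleting a gate never increases $s_*$'' is false in general, since a deletion can disconnect the architecture and enlarge the unit eigenspace, and the argument goes through here precisely because removing the wraparound gate leaves a connected path.
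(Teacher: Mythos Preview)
Your proof is correct and is essentially the same argument as the paper's: factor the periodic transfer matrix as the open one times the extra wraparound projector, check via connectedness (Lemma~\ref{lemma:global_is_unique}) that the unit eigenspace $Z$ is unchanged, and use that an orthogonal projector is a contraction. The paper packages this as a general cluster-merging statement (Lemma~\ref{lemma:add_new_link}, ``adding an edge cannot increase the SSV'') proved via the variational formula $\max_{v\in X,\,v\perp Z}\|RPv\|/\|v\|\le\max_{v\in X,\,v\perp Z}\|Pv\|/\|v\|$, whereas you go directly through the operator inequality $T_{\text{open}}G'T_{\text{open}}^\dagger\preceq T_{\text{open}}T_{\text{open}}^\dagger$ on $Z^\perp$; these are the same idea in slightly different clothing.
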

This result allows us to also use bounds derived for open brickworks in Theorem \ref{thm:clusterbound}. It follows directly from the following more general rewriting rule for cluster-merging graphs.
\begin{lemma}
\label{lemma:add_new_link}
Consider a connected cluster-merging graph. If we add a new edge to the graph, the subleading singular value \(\mathscr{s}\) does not increase.
\end{lemma}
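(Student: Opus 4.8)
The plan is to realize the new edge as an additional two-site gate adjoined to the ``edge layer'' and then observe that appending a projector to a product of projectors cannot lengthen the image of any vector. Write $Q$ for the orthogonal projector onto $\mathrm{range}(Q)$, the tensor product over the nodes of the cluster-uniform permutation-state spans, and $P=\prod_e G_e$ for the projector assembled from the edge gates. Distinct edges of a well-defined cluster-merging graph correspond to gates acting on disjoint pairs of sites, so the $G_e$ commute and $P$ is an orthogonal projector onto $\bigcap_e \mathrm{range}(G_e)$. The new edge is realized by a gate $G$ on a pair of previously unoccupied sites (one in each incident cluster, or two in the same cluster if the edge is a self-loop); such sites exist precisely under the standing assumption that the incident clusters still have spare capacity. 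Since $G$ then commutes with $P$, the augmented edge projector $P'=PG=GP$ is again an orthogonal projector, now onto $\mathrm{range}(P)\cap\mathrm{range}(G)\subseteq\mathrm{range}(P)$; in particular $\mathrm{range}(P)$ is not enlarged.

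Next I would check that the unit eigenspace is untouched. The unit eigenspace of $PQ$ equals $\mathrm{range}(P)\cap\mathrm{range}(Q)$, and that of $P'Q$ equals $\mathrm{range}(P')\cap\mathrm{range}(Q)$; because adding an edge to a connected graph leaves it connected, Lemma~\ref{lemma:cluster_unit_space} identifies both with the span $Z$ of the globally uniform permutation states $\ket{\sigma}^{\otimes n}$. Hence $\mathscr{s}$ and the new value $\mathscr{s}'$ are maxima of the same quotient over the \emph{same} domain, exactly as in the proof of Lemma~\ref{lemma:merge}:
\[
\mathscr{s}=\max_{\substack{v\in\mathrm{range}(Q)\\ v\perp Z}}\frac{\|Pv\|}{\|v\|},
\qquad
\mathscr{s}'=\max_{\substack{v\in\mathrm{range}(Q)\\ v\perp Z}}\frac{\|P'v\|}{\|v\|}.
\]
Comparing pointwise, $\|P'v\|=\|GPv\|\le\|Pv\|$ for every $v$ because $G$ is an orthogonal projector and so norm-nonincreasing; taking the maximum over the common domain yields $\mathscr{s}'\le\mathscr{s}$. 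Theorem~\ref{theorem:open_brickwork} then follows, since the periodic brickwork differs from the open brickwork only by the wrap-around gates, which enter the relevant cluster-merging graphs as extra edges.

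I do not expect a genuine obstacle, since the core estimate is just that a further projection cannot increase a singular value. The points that need care are the bookkeeping ones flagged above: one must verify that adjoining $G$ does not enlarge $\mathrm{range}(P)$ (so that $Z$, and hence the domain of the maximum, is unchanged), and that $G$ can be placed on free sites so that it commutes with the pre-existing edge gates and $P'$ is a genuine orthogonal projector rather than merely a product of non-commuting projectors. The degenerate self-loop case is even easier: an intra-cluster gate $G$ satisfies $GQ=Q$, so $P'Q=PQ$ and $\mathscr{s}$ is literally unchanged.
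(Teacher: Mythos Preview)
Your proof is correct and follows essentially the same route as the paper: both write the new subleading singular value as $\mathscr{s}'=\max_{v\in\mathrm{range}(Q),\,v\perp Z}\|GPv\|/\|v\|$ over the same domain (using connectedness and Lemma~\ref{lemma:cluster_unit_space} to fix $Z$), then invoke $\|GPv\|\le\|Pv\|$ from the norm-nonincreasing property of the extra projector. The only difference is that the paper treats the new link as a separate factor $R$ applied after $P$ and never asks whether $RP$ is itself an orthogonal projector, so your bookkeeping about free sites and commutativity---while not wrong---is unnecessary for the argument to go through.
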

\begin{proof}
Let \(P\) and \(Q\) be the edges and nodes of the original graph, and let \(R\) be the new link. Let \(X\) be the unit eigenspace of \(Q\) and \(Z\) the unit eigenspace of \(PQ\). Since the graph was already connected without \(R\), the unit eigenspace of \(R\) includes \(Z\), so \(Z\) is also the unit eigenspace of the new graph \(RPQ\). The original singular value is 
\[\mathscr{s} = \max_{\left\{v \in X | v \perp Z \right\}} \frac{||P v||}{||v||}\addtag\]
while the new singular value is
\[\mathscr{s}' = \max_{\left\{v \in X | v \perp Z \right\}} \frac{||RP v||}{||v||}\addtag\]
Since \(R\) is an orthogonal projector, \(||RPv|| \leq ||Pv||\), so \(\mathscr{s}' \leq \mathscr{s}\).
\end{proof}
Given an open-boundary-condition transfer matrix, we can add a link to obtain the transfer matrix of a periodic-boundary-condition brickwork on the same number of sites. Adding a link can only decrease \(\mathscr{s}\), which completes the proof of Theorem \ref{theorem:open_brickwork}.

\section{Approximate \(t\)-design depths} \label{section:summary}
We are now ready to prove Theorem \ref{thm:periodic_complete}, which addresses the case of complete periodic architectures. We will then proceed to extend our results to incomplete and aperiodic architectures. Figure \ref{fig:circuit_classification} illustrates the relevant categories.

\subsection{Complete periodic architectures}
\label{subsection:periodic_proof}
Suppose that the \(N\)-site 1D brickwork architecture forms an \(\epsilon\)-approximate \(t\)-design in depth at most
\begin{gather}
2C(N,q,t) \log \frac{1}{\epsilon} + o_\epsilon \left(\log \frac{1}{\epsilon}\right)
\end{gather}
Define \(C(q,t) = \sup_N C(N,q,t)\). 
By Theorem \ref{thm:1deigfromtdesign} the subleading singular value of \(T_t\) for a brickwork loop is upper-bounded by
\begin{gather}
s_\text{1D} \leq e^{-\frac{1}{2C(q,t)}}
\end{gather}
From Theorem \ref{thm:clusterbound} it follows that the SSV for any cluster-merging graph is bounded by the same value. Substituting into Theorem \ref{thm:singvals1}, we see that the subleading singular value for a complete \(\ell\)-layer block is bounded by
\begin{gather}
    \label{eq:final_singvalbound}
    s_* \leq 1 - \left(1 - e^{-\frac{1}{2C(q,t)}}\right)^{\ell - 1}
\end{gather}
We can substitute into Equation \ref{eq:criticaldepth} to see that the critical period count is then bounded by
\begin{gather}
    \label{eq:final_periodic_bound}
    k_* = \frac{2Nt \log 2 q + \log \frac{1}{\epsilon}}{\log \left[\left(1 - \left(1 - e^{-\frac{1}{2C(q,t)}}\right)^{\ell - 1}\right)^{-1}\right]}
\end{gather}
This completes the proof of Theorem \ref{thm:periodic_complete}.

\subsection{Incomplete layers}
\label{subsection:incomplete_proof}

For complete architectures, we have
\begin{gather}
    s_*^2 \leq 1 - (1 - s_\text{1D}(q))^{\ell-1}
\end{gather}
by reducing each cluster-merging graph to an Eulerian cycle. If the circuit is incomplete, the Eulerian cycle is not guaranteed to exist. Instead, an incomplete circuit with local dimension \(q\) has subleading singular value bounded by that of a complete circuit with local dimension \(\sqrt{q}\).

We may, without loss of generality, insert single-site gates so that each
site is acted upon by exactly one gate per layer. These single-site gates can be absorbed into a two-site gate
either above or below without changing the Haar
measure over that two-site gate. This takes an incomplete circuit to a
``complete'' circuit with a mixture of one-site and two-site gates. 

Now suppose that \(\sqrt{q}\) is an integer. We can then split each site of dimension \(q\) into a pair of sites of dimension \(\sqrt{q}\). This gives us a complete circuit containing a mixture of \(2\)- and \(4\)-site gates. Suppose we split each \(4\)-site gate lengthwise into a pair of \(2\)-site gates. It seems intuitive that this operation should decrease the rate of scrambling, not increase it. And the resulting circuit is a complete circuit of \(2\)-site gates with local dimension \(\sqrt{q}\), so its cluster-merging graphs have Eulerian cycles. In Appendix~\ref{app:irrational_q}, we formalize this intuition and show that the same idea can be applied even in the case where \(\sqrt{q}\) is not an integer. The site-splitting process is illustrated in Fig. \ref{fig:site_splitting}. 

This allows us to reduce a bound on incomplete layers to a 1D brickwork bound on $\sqrt{q}$, specifically
\begin{gather}
    s_*^2 \leq 1 - (1 - s_\text{1D}(\sqrt{q}))^{\ell-1}
\end{gather}
for incomplete architectures. 

We may use any proof of \(C(q,t)\) if $\sqrt{q}$ is an integer. For non-integer $\sqrt{q}$, however,
the definition of \(s_\text{1D}\) is trickier. It is not necessarily true that generic bounds on $C(q,t)$ which have been derived for integer \(q \geq 2\) can be analytically continued to $C(\sqrt{q},t)$. Many of the proofs in the literature can easily be extended to non-integer \(q > 1\) (see Section \ref{subsection:cvals}). However, not all of these proofs can be easily extended, e.g. ref.~\onlinecite{Haferkamp2022} which applies only to \(q = 2\). The bound of ref.~\onlinecite{Brandao2016} can be extended, but the scaling goes from \(\sim t^{10}\) to \(\sim t^{16}\). And there may be future improved strategies for bounding 1D brickwork that work only for integer \(q\). We thus also give the following bound, which gives us a reduction in terms of \(C(q,t)\) instead of \(C(\sqrt{q},t)\). 

\begin{theorem}\label{thm:periodic_incomplete_nointdefined}
For an architecture with blocks of $\ell$ layers,
\begin{gather}
    k_* = \big(4\overline{C}(q,t)\big)^{x(N) \ell-1} \left(2Nt \log q + \log \frac{1}{\epsilon}\right)
\end{gather}
where the expansion coefficient
\begin{gather}
    x(N) = 8\lceil \log_2 \lfloor \log_2(N+1)\rfloor \rceil + 2
\end{gather}
\end{theorem}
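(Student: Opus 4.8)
The plan is to reduce the incomplete case to the complete-case machinery of Sections~\ref{section:cluster_splitting}--\ref{section:1d_solution} while keeping the local dimension equal to $q$, paying for it by inflating the effective layer count. First I would complete the architecture by inserting single-site Haar gates as in Section~\ref{subsection:incomplete_proof}, so that every site carries exactly one gate per layer, and then absorb as many of these into neighbouring two-site gates as possible. The single-site gates that remain are precisely what forces some clusters to have odd weight and some cluster-merging graphs to have odd-degree nodes, and this is the \emph{only} obstruction to running the complete-case argument: Theorem~\ref{thm:clusterbound} needs an Eulerian-circuit structure, which requires even degrees, and the node-splitting moves of Lemmas~\ref{lemma:split} and~\ref{lemma:splitlink} cannot shrink a weight-$1$ node down to the weight-$2$ nodes of a brickwork loop. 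The $\sqrt q$ trick of Theorem~\ref{thm:periodic_incomplete} sidesteps this by halving all weights at the cost of a non-integer dimension; here we instead spend depth.

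The technical heart is a gadget that replaces a single completed layer by a bounded-depth sequence of genuinely complete, dimension-$q$ layers whose intermediate cluster-merging graphs each reduce, via Lemmas~\ref{lemma:split}--\ref{lemma:splitlink} and Theorem~\ref{theorem:open_brickwork}, to disjoint unions of $1$D brickwork loops (or open brickwork chains, which are covered by the same $C(q,t)$), so that each contributes a factor $1-s_\text{1D}(q)^2$ in Theorem~\ref{thm:singvals1}. Concretely I would decompose the layer's cluster-merging graph into edge-disjoint trails by Euler's theorem, split nodes along these trails, and then break the remaining oversized or odd-weight nodes down toward weight $2$, arranging the residual weight-$1$ pieces to sit at the ends of open brickwork chains (where a degree-$1$ node is legitimate) so that the $C(q,t)$ bound still applies. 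The difficulty is that the trails share vertices, and naively disconnecting them at a cut vertex would require either the forbidden ``add an edge'' move (Lemma~\ref{lemma:add_new_link} \emph{decreases} $\mathscr{s}$ and hence points the wrong way) or free ports that a complete layer does not possess. The fix is to carry out the shared-vertex and odd-weight bookkeeping hierarchically, recursing on the binary representation of the site indices: each level of the recursion halves the relevant parity structure across the $\beta=\lfloor\log_2(N{+}1)\rfloor$ address bits, so after $\lceil\log_2\beta\rceil$ levels --- together with an $O(1)$ number of brickwork-loop layers per level, the source of the constants $8$ and $2$ --- every residual graph is a clean brickwork structure. This produces $x(N)=8\lceil\log_2\lfloor\log_2(N{+}1)\rfloor\rceil+2$ virtual layers per original layer.

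With the gadget in hand the assembly is routine: an incomplete connected $\ell$-block becomes a product of $x(N)\ell$ virtual layers, each with layer-restricted singular value at most $s_\text{1D}(q)\le e^{-1/(2C(q,t))}$, so Theorem~\ref{thm:singvals1} gives $s_*^2\le 1-\bigl(1-s_\text{1D}(q)^2\bigr)^{x(N)\ell-1}$; substituting into Equation~\ref{eq:criticaldepth} and applying the same $1-e^{-1/C}\ge 1/(4\overline C)$ relaxation used after Equation~\ref{eq:final_singvalbound} yields $k_*=\bigl(4\overline C(q,t)\bigr)^{x(N)\ell-1}\bigl(2Nt\log q+\log\tfrac1\epsilon\bigr)$. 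Exactly as in Theorem~\ref{thm:irregularly_connected}, nothing in the argument used periodicity, so $\ell$ may be replaced by the average connection depth of Theorem~\ref{thm:aperiodic}. I expect the main obstacle to be the recursive gadget itself: proving that $O(\log\log N)$ levels genuinely suffice to turn every intermediate cluster-merging graph into a brickwork structure without ever invoking a rewriting move that enlarges the intermediate unit eigenspace in the wrong direction, and pinning down the per-level constants so that the clean form $8\lceil\log_2\lfloor\log_2(N{+}1)\rfloor\rceil+2$ comes out.
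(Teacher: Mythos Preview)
Your high-level reduction is exactly right and matches the paper: inflate each incomplete layer into $x(N)$ virtual layers whose cluster-merging graphs are each reducible to 1D brickworks at dimension $q$, then feed the product into Theorem~\ref{thm:singvals1} and Equation~\ref{eq:criticaldepth} with the $4\overline{C}$ relaxation. The assembly paragraph at the end is correct as written.

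The gap is the gadget itself, which you acknowledge is the obstacle and leave as a sketch. Your mechanism---Eulerian trail decomposition followed by a recursion ``on the binary representation of the site indices'' that ``halves the parity structure''---is not the route the paper takes, and as described is too vague to verify that it terminates in $O(\log\log N)$ rounds or that each round is a single brickwork-reducible layer. The paper's construction is quite different and more concrete. First, rather than decomposing into trails, one \emph{removes} edges (Lemma~\ref{lemma:add_new_link}, which goes the right way here: deleting an edge can only raise $\mathscr{s}$) until the cluster-merging graph is a spanning \emph{tree}. Second, the key tool is a layer-splitting lemma: since all gates in a layer commute, one may partition the edges of a single cluster-merging graph into several sub-layers without changing the operator at all; this is how one pays in depth rather than in $\sqrt q$. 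Third, one layer is spent contracting ``maximally weighted paths'' (repeatedly walking from each node to its heaviest child), which collapses the tree to one of height at most $\lfloor\log_2(N{+}1)\rfloor-1$; this is where the inner $\log_2$ comes from. Fourth, a tree of height $h$ is reduced to a point in $\lceil\log_2(h{+}1)\rceil$ rounds by, at each round, absorbing every odd-depth node into its even-depth parent in parallel; each such round is a ``star contraction'' that costs at most $4$ layers of isolated strings, giving $4\lceil\log_2\lfloor\log_2(N{+}1)\rfloor\rceil$ layers and supplying the outer $\log_2$. Finally, each layer of isolated strings is split into two brickwork-compatible layers to handle odd-weight interior nodes, producing the factor of $2$ and the additive $+2$. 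The constants $8$ and $2$ thus come from $(4\text{ per round})\times(2\text{ for odd nodes})$ and $1\times 2$, not from per-level bookkeeping on address bits.
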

The proof is found in Appendix~\ref{app:treebound}.

\subsection{Aperiodic architectures}
\label{subsection:aperiodic_proof}
So far we have restricted ourselves to periodic architectures to simplify the exposition. However, our results generalize quite directly to aperiodic architectures.

\begin{theorem}
\label{thm:aperiodic}
Define \(C(q,t)\) as in Theorem \ref{thm:periodic_complete}. For a not-necessarily-periodic \(L\)-layer circuit \(U_C\), choose a decomposition
\begin{gather}
    \label{eq:circuit_split}
    U_{C} = V_{k+1} U_{k} V_{k} U_{k-1} V_{k-1}  ...U_{1} V_1
\end{gather}
where each \(U_{i}\) is a connected $\ell_i$-block and each \(V_i\) is some contiguous block of layers which may not connect all the sites. The architecture is an \(\epsilon\)-approximate \(t\)-design if
\begin{gather}
k \geq \frac{2Nt \log q + \log \frac{1}{\epsilon}}{\log\frac{1}{s_*}}
\end{gather}
where we have defined the effective averaged singular value
\[s_* = 1 - \left(1 - e^{-\frac{1}{2C(\sqrt{q},t)}}\right)^{\bar{\ell}-1}\]
in terms of the mean block size
\[\bar{\ell} = \frac{1}{k}\sum_{i=1}^k \ell_i\]
\end{theorem}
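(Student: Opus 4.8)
The plan is to lift the single-period analysis of Theorems~\ref{thm:framepotential}--\ref{thm:1deigfromtdesign} to a product of distinct blocks. Write the moment operator as the layer product $\tmoment_\rqc = P_L\cdots P_1$. The structural fact that makes everything go through --- already present, if unremarked, in the periodic case --- is that the globally uniform permutation states $Z=\mathrm{span}\{\ket{\sigma}^{\otimes N}\}$ are unit eigenvectors of \emph{every} gate projector, hence of every layer $P_j$ \emph{and} of every $P_j^\dagger$. Letting $\Pi$ be the orthogonal projector onto $Z$, this gives $P_j\Pi=\Pi P_j=\Pi$, so $Z$ is a reducing (not merely invariant) subspace of each layer, and therefore of any contiguous product of layers $M$: such an $M$ splits as $\Pi\oplus(1-\Pi)M(1-\Pi)$, with $(1-\Pi)M(1-\Pi)$ carrying all the non-unit singular values of $M$. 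When $M$ connects all $N$ sites one has $\bigcap_j\mathrm{range}(P_j)=Z$ by Lemma~\ref{lemma:global_is_unique}, so $\|(1-\Pi)M(1-\Pi)\|<1$ is exactly the subleading singular value.

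Given the decomposition $U_C=V_{k+1}U_kV_k\cdots U_1V_1$, set $M_i=U_iV_i$. Because $Z$ reduces every factor, the cross terms vanish and $(1-\Pi)\tmoment_\rqc(1-\Pi)=[(1-\Pi)V_{k+1}(1-\Pi)]\,[(1-\Pi)M_k(1-\Pi)]\cdots[(1-\Pi)M_1(1-\Pi)]$; submultiplicativity of the operator norm then bounds the subleading singular value of $\tmoment_\rqc$ by $\prod_{i=1}^k\sigma_i$, where $\sigma_i=\|(1-\Pi)M_i(1-\Pi)\|$. Since $V_i$ is norm--nonincreasing and also has $Z$ reducing, $\sigma_i=\|[(1-\Pi)U_i(1-\Pi)]\,[(1-\Pi)V_i(1-\Pi)]\|\le\|(1-\Pi)U_i(1-\Pi)\|$, i.e.\ the incomplete ``filler'' blocks $V_i$ can only help. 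For the connected $\ell_i$-block $U_i$ I would repeat the argument of Sections~\ref{section:layer_gap_bound}--\ref{section:1d_solution}: first pass to a complete architecture of $2$-site gates of local dimension $\sqrt q$ via the site-splitting reduction of Section~\ref{subsection:incomplete_proof} / Appendix~\ref{app:irrational_q} (which preserves the layer count $\ell_i$ and the connectivity, and leaves $2Nt\log q$ unchanged), then chain Theorem~\ref{thm:singvals1} over the $\ell_i$ layers, Theorem~\ref{thm:clusterbound}, and Theorem~\ref{thm:1deigfromtdesign} to obtain
\[ \|(1-\Pi)U_i(1-\Pi)\|\;\le\;1-\bigl(1-e^{-1/(2C(\sqrt q,t))}\bigr)^{\ell_i-1}. \]

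To assemble the $k$ blocks, put $x=1-e^{-1/(2C(\sqrt q,t))}\in(0,1)$ and observe that $y\mapsto\log(1-x^{y})$ is concave on $(0,\infty)$ --- its second derivative is $-c^2e^{cy}/(e^{cy}-1)^2<0$ with $c=-\log x$ --- so Jensen's inequality gives $\prod_{i=1}^k(1-x^{\ell_i-1})\le(1-x^{\bar\ell-1})^k=s_*^k$ for $\bar\ell=\frac1k\sum_i\ell_i$. Hence every non-unit singular value of $\tmoment_\rqc$ is at most $s_*^k$, and substituting into the frame-potential estimate of Theorem~\ref{thm:framepotential} (equivalently, Appendix~\ref{app:ssv_bound}) yields $\|\Phi_\rqc-\Phi_\text{Haar}\|_\diamond\le q^{2Nt}s_*^k$, which drops below $\epsilon$ exactly when $k\ge(2Nt\log q+\log\tfrac1\epsilon)/\log\tfrac1{s_*}$. (If every layer is already complete one may skip the $\sqrt q$ reduction and replace $C(\sqrt q,t)$ by $C(q,t)$ throughout, recovering the complete aperiodic case.)

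The conceptual crux is the reduction $\sigma_*(\prod_i M_i)\le\prod_i\sigma_*(M_i)$: this is \emph{false} for generic contractions that merely share a common unit eigenspace, and it is the left-eigenvector identity $P_j^\dagger\Pi=\Pi$ --- which upgrades $Z$ from invariant to genuinely reducing --- that makes it hold here. The remaining points to handle carefully are that the $\sqrt q$ site-splitting is a \emph{global} relabelling of sites (so it acts compatibly on all blocks at once) and that it really does not inflate any $\ell_i$ or disconnect any $U_i$; modulo that, the Jensen step and the passage to the diamond norm are routine and identical to the periodic proof.
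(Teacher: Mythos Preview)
Your proposal is correct and follows essentially the same route as the paper: the reducing-subspace argument you spell out is exactly the content of Lemma~\ref{lemma:leftright_eigenspace} and Lemma~\ref{lemma:product_ssv_bound}/Theorem~\ref{thm:normbound_with_extras} in Appendix~\ref{app:ssv_bound}, the per-block bound is obtained the same way (site-splitting to $\sqrt q$, then Theorems~\ref{thm:singvals1}, \ref{thm:clusterbound}, \ref{thm:1deigfromtdesign}), and the Jensen step on $y\mapsto\log(1-x^{y})$ is identical to the paper's convexity-of-$-\log(1-c^x)$ step. Your explicit emphasis on $Z$ being \emph{reducing} (not merely invariant) is a nice clarification of why the submultiplicativity of subleading singular values holds here.
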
 
If we also require that the layers be complete, we can replace \(\sqrt{q}\) with \(q\). Notice that \(k\) counts the total number of times the circuit is connected, while \(\bar{\ell}\) is the size of the typical connected block. We see that the \((\epsilon,t)\)-design depth is controlled by the frequency and size of the connected blocks. However, for a aperiodic architecture \(\bar{\ell}\) might depend on depth, so this does not give us any explicit expression for the critical depth. 

\begin{proof}
A decomposition of the circuit into blocks of layers induces a decomposition of the associated \(t\)-fold channel as
\[\hat{\Phi}_\text{RQC} = R_k T_{k} R_{k-1} T_{k-1}...R_1 T_1\]
where the $R_i$ are some set of norm-nonincreasing transfer matrices. Let \(s_*^{(i)}\) be the subleading singular value of each \(T_i\). Equation \ref{eq:final_singvalbound} still applies for each \(s_*^{(i)}\), with \(\ell\) replaced by \(\ell_i\).\footnote{We can further tighten the bound for certain architectures by observing that \(\ell_i\) only needs to count layers that merge clusters. Layers consisting of only internal edges within already-merged clusters need not be counted.} 
Theorem \ref{thm:normbound_with_extras} of Appendix \ref{app:ssv_bound} shows that the \(R_i\) are essentially irrelevant. Equation \ref{eq:diamondbound} now becomes
\begin{gather}
    ||\Phi_\text{RQC} - \Phi_\text{Haar}||_\diamond \leq q^{2Nt} \prod_{i=1}^k s_*^{(i)}
\end{gather}
The condition to obtain an \(\epsilon\)-approximate \(t\)-design is then
\begin{gather}
    \label{eq:generic_errorbound}
    \epsilon \leq q^{2Nt} \prod_{i=1}^k \left(1 - \left(1 - s_\text{1d}\right)^{\ell_i - 1}\right)
\end{gather}
We may rearrange this as
\begin{gather}
    2Nt \log q + \log \frac{1}{\epsilon} \geq -\sum_{i=1}^k \log\left(1 - \left(1 - s_\text{1d}\right)^{\ell_i - 1}\right)
\end{gather}
Furthermore, \(-\log(1 - c^x)\) is convex, so by Jensen's inequality
\begin{gather}
    \label{eq:other_generic_errorbound}
    2Nt \log q + \log \frac{1}{\epsilon} \geq -k\log\left(1 - \left(1 - s_\text{1d}\right)^{\bar{\ell} - 1}\right)
\end{gather}
\end{proof}

This formula implies Theorem \ref{thm:periodic_incomplete} as a special case. Furthermore, it is also simple for \textit{regularly-connected} architectures for which all of the \(\ell_i\) are equal. 

There is still a question of the choice of decomposition in Equation \ref{eq:circuit_split}. This decomposition is not unique; different choices of decomposition will give different bounds. Note in particular that \(V_i\) may be empty, which corresponds to the identity.  

Furthermore, the optimal decomposition may depend on \(q\) and \(t\). For example, consider an architecture consisting of alternating connected \(2\)-blocks and connected \(4\)-blocks. If we count all the blocks, then \(\bar{\ell} = 3\) and the depth is \(d = 3k\). But if we count only the \(2\)-blocks and lump the \(4\)-blocks into the \(V_i\), we obtain \(\bar{\ell} = 2\) and \(d = 6k\). The former is better for small \(C\) and the latter for large \(C\), with a crossover point \(C\left(\sqrt{q},t\right) \approx 1.157\).

\section{Further extensions} \label{section:further_extensions}
\subsection{\(\ell\)-independent bound}
\label{subsection:connection_conjecture}
It is interesting to note that our Theorem~\ref{thm:periodic_complete} gives a bound which loosens as the period \(\ell\) increases. It seems likely that the approximate \(t\)-design depth actually decreases\cite{Harrow2009}  with \(\ell\) for certain well-connected structures, such as higher-dimensional brickworks. But the scaling of our bound, which is determined by \textit{worst-case architectures}, suggests that there may exist strange ``tenuously connected'' architectures at larger \(\ell\). 

It seems intuitively clear that the open-boundary-condition brickwork architecture is in some sense the most ``spread out'' arrangement of gates possible. Any other connected architecture must involve more total gates, and graph distances between sites must be shorter. Indeed, optimization of the subleading singular value of small circuits using simulated annealing (see Appendix \ref{app:connection_numerics}) have failed to find any of these ``tenuously connected'' architectures: All subleading singular values are bounded by that of the open-boundary-condition brickwork. This motivates the following conjecture:
\begin{conjecture}
    \label{conjecture:connection_count}
    Every connected architecture on \(N\) sites has subleading singular value \(s_* \leq s_{\text{1D,open}}(N,q,t)\)
\end{conjecture}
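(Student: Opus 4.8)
The plan is to prove that the open-boundary 1D brickwork is the extremal connected architecture by a chain of reductions, each of which cannot decrease the subleading singular value (SSV), terminating at a 1D open brickwork on at most $N$ sites. A prerequisite is monotonicity of the target: I would first show that $s_{\text{1D,open}}(N,q,t)$ is non-decreasing in $N$. This should be accessible either through the domain-wall statistical-mechanics picture of ref.~\onlinecite{Hunter-Jones2019}, in which lengthening the chain only enlarges the relevant transfer-matrix sector and hence can only shrink the gap, or directly by realizing the $(N-1)$-site brickwork transfer matrix as a compression of the $N$-site one. Granting monotonicity, it suffices to dominate an arbitrary connected architecture on $N$ sites by a 1D open brickwork on some $m \le N$ sites.

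Second, I would reduce to \emph{minimal} connected architectures. If a connected architecture $A$ contains a connected sub-architecture $A_0$ on the same sites using a subset of the gates, then by Lemma~\ref{lemma:global_is_unique} the transfer matrices $T_A$ and $T_{A_0}$ share the unit eigenspace $Z$, and since each omitted gate is a norm-nonincreasing projector, $s_*(A)=\max_{v\perp Z}\|T_A v\|/\|v\| \le \max_{v\perp Z}\|T_{A_0}v\|/\|v\| = s_*(A_0)$. So the worst case occurs when the gate graph is a spanning tree, which, trees being bipartite, may be realized in a number of layers equal to its maximum degree.

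Third, the sub-case of a \emph{path} spanning tree closes immediately. Since any first layer $P_1$ is itself a projector, it equals its own intermediate unit-eigenspace projector $Q_1$; for a two-layer architecture the transfer matrix is therefore $T = P_2 P_1 = P_2 Q_1$, so $s_*(T) = \mathscr{s}_2$ is exactly the SSV of a single cluster-merging graph. Theorem~\ref{thm:clusterbound} bounds it by $\max_{m\le N} s_{\text{1D}}(m)$ (the loop SSV), Theorem~\ref{theorem:open_brickwork} replaces the loop by the open line, and monotonicity in $m$ finishes the path case.

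The main obstacle --- and the reason this is stated as a conjecture --- is the remaining case of spanning trees of maximum degree at least $3$, which require three or more layers. Then $T = P_\ell\cdots P_1$ is a genuine product that no longer coincides with a single cluster-merging graph, and the layerwise bound of Theorem~\ref{thm:singvals1} only yields $s_*^2 \le 1-\prod_i(1-\mathscr{s}_i^2)$, which can exceed $s_{\text{1D,open}}(N)^2$; the conjecture is thus strictly stronger than anything the per-layer machinery gives, and must exploit correlations between layers that it discards. I would attack this in one of two ways: (i) establish an ``unbranching'' rewriting move --- an analogue of Lemmas~\ref{lemma:split} and \ref{lemma:splitlink} valid for the full multilayer transfer matrix --- that replaces a degree-$\ge 3$ vertex and its pendant subtrees by a longer path without decreasing $s_*$, thereby reducing every spanning tree to a path; or (ii) prove a comparison/Poincar\'e-type inequality in the permutation-state Hilbert space, mirroring the classical fact that among connected graphs on $n$ vertices the path minimizes the spectral gap. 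Either route requires a genuinely new inequality, and I expect this to be the crux; the simulated-annealing searches of Appendix~\ref{app:connection_numerics} make the conjecture plausible but do not indicate which route will succeed.
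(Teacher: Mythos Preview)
The statement is labeled a \emph{Conjecture} in the paper and is not proven there; the paper's only support is the numerical simulated-annealing search of Appendix~\ref{app:connection_numerics}, which fails to find a counterexample. Your proposal goes further than the paper by sketching a reduction strategy and correctly isolating where it stalls --- spanning trees of maximum degree $\ge 3$, where the multilayer transfer matrix is no longer a single cluster-merging operator and the per-layer bound of Theorem~\ref{thm:singvals1} discards exactly the inter-layer correlations one would need. Your two proposed attacks (an ``unbranching'' rewriting move for the full product, or a Poincar\'e-type comparison in the permutation-state space) are reasonable directions, but neither appears in the paper; you have accurately diagnosed why the statement is still open, and there is no gap to flag since you do not claim a complete proof.

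One minor correction to your path case: invoking Theorem~\ref{thm:clusterbound} there is both unnecessary and not quite licit, since its Eulerian-circuit step requires nodes of even degree, which the path's cluster-merging graph lacks at the endpoints. The conclusion is immediate anyway --- a path spanning tree realized in two alternating layers \emph{is} the open 1D brickwork on $N$ sites, so $s_* = s_{\text{1D,open}}(N)$ with equality and no further bounding is needed.
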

The immediate consequence of this conjecture is that there is a universal \(t\)-design \textit{connection count} which does not depend on the circuit architecture even via \(\ell\). In particular, any circuit architecture that can be divided into 
\begin{equation}
    \label{eq:connection_count}
    k_* = \frac{2Nt \log q + \log \frac{1}{\epsilon}}{\log \frac{1}{s_{\text{1D,open}}}}
\end{equation}
connected blocks forms an \((\epsilon,t)\)-design. 

To obtain Theorem \ref{thm:best_guess}, we also follow ref.~\onlinecite{Hunter-Jones2019} in making the following guess:
\begin{conjecture}
    \label{conjecture:s_exact_open}
    \begin{equation}
        \label{eq:C_exact_open}
        s_{\text{1D,open}}(q,t) = \frac{2q}{q^2 + 1}
    \end{equation}
\end{conjecture}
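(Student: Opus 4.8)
\emph{Proof strategy.} The plan is to follow the statistical-mechanics route of ref.~\onlinecite{Hunter-Jones2019} but to push the analysis through for arbitrary $t$ rather than only $t=2$. Combining Theorem~\ref{thm:2layer_eigvsing} with the CS (Jordan) decomposition of a pair of projectors, $s_{\text{1D,open}}(N,q,t)^2$ is the second-largest eigenvalue of $P_1P_2P_1$, where $P_1,P_2$ are the orthogonal projectors onto the two layers of the open-boundary brickwork, each a tensor product of two-site projectors onto $\operatorname{span}\{|\sigma\rangle|\sigma\rangle\}$. Contracting the doubled layers rewrites $\langle\psi|P_1P_2P_1|\psi\rangle$ as a classical partition function over $S_t$-valued spins $\{\sigma_i\}_{i=1}^N$ with nearest-neighbor Boltzmann weights built from the single-site overlaps $\langle\sigma|\tau\rangle=q^{\,c(\sigma^{-1}\tau)-t}$ and the two-site Weingarten weights; for $t=2$ this is precisely the Ising domain-wall gas of ref.~\onlinecite{Hunter-Jones2019}. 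The subleading eigenvalue is then $e^{-\beta F_1}$ with $F_1$ the free-energy cost of the cheapest excitation above the ferromagnetic ground states $\sigma_i\equiv\sigma$.

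For the lower bound $s_{\text{1D,open}}(q,t)\ge \frac{2q}{q^2+1}$ I would first establish monotonicity in $t$: the inclusion $S_t\hookrightarrow S_{t+1}$ (permutations fixing the last copy) induces an isometric embedding of the $t$-fold moment problem into the $(t+1)$-fold one that intertwines the transfer matrices, so $s_{\text{1D,open}}(N,q,t)$ is nondecreasing in $t$; together with the exact $t=2$ value $\frac{2q}{q^2+1}$ of ref.~\onlinecite{Hunter-Jones2019} (the $t=2$ entry of Section~\ref{subsection:cvals}) this gives the bound for all $t\ge2$. Equivalently, one exhibits directly the single-transposition domain-wall trial states equal to $|\sigma\rangle^{\otimes j}|\sigma\cdot(ab)\rangle^{\otimes(N-j)}$, orthogonalized against the uniform states; the transfer matrix restricted to their span is a one-particle hopping matrix whose top eigenvalue is $\frac{2q}{q^2+1}$ in the $N\to\infty$ limit, the single wall being allowed precisely because the boundary is open (the periodic geometry forces a pair of walls, hence the squared value there).

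The substance of the proof is the matching upper bound $s_{\text{1D,open}}(q,t)\le\frac{2q}{q^2+1}$, i.e. the uniform (in $N$ and $t$) gap estimate $1-s^2\ge\frac{(q^2-1)^2}{(q^2+1)^2}$ for the frustration-free parent Hamiltonian $H=\sum_i(I-G_i)$. I would organize the permutation-spin configurations by their domain-wall content and prove a Peierls-type estimate: the weight of any configuration with walls $w_1,\dots,w_r$ is bounded, modulo the $t!$-dimensional uniform subspace, by $\prod_\alpha\big(\frac{2q}{q^2+1}\big)^{|w_\alpha|}$, with strict improvement whenever some $w_\alpha$ is labeled by a non-transposition (since $c(g)<t-1$ forces a larger overlap deficit). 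In operator form this is $P_1P_2P_1\preceq\frac{4q^2}{(q^2+1)^2}(I-\Pi)+\Pi$ with $\Pi$ the projector onto the uniform states. A practical route is a finite-size (Knabe / Gosset--Mozgunov) criterion: verify that on an $O(1)$-site window the local gap already clears the threshold and show the constant propagates — but the window computation must be carried out for general $t$, which forces a $t$-uniform combinatorial bound on the $S_t$ Weingarten weights.

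The main obstacle is exactly that last point: ruling out that some exotic excitation — a composite wall carrying a non-transposition permutation, or a bound state of several walls — yields a transfer-matrix eigenvalue above $\frac{2q}{q^2+1}$. For $t=2$ there is a single nontrivial spin value and the domain-wall gas is exactly solvable, which is why ref.~\onlinecite{Hunter-Jones2019} could settle that case (and the $q\to\infty$ limit, where the Weingarten weights degenerate); for general finite $t$ the number of spin states grows and no exact solution is known, so any proof hinges on a convexity/monotonicity property of the $S_t$ Weingarten weights strong enough to certify the single-transposition wall as the global optimizer for every $t$ and $q>1$. Establishing that property is, I expect, the crux; once it is available, Theorem~\ref{theorem:open_brickwork} and Lemma~\ref{lemma:add_new_link} already feed the open-boundary value into all the downstream bounds.
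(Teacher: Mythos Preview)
The statement you are attempting to prove is labeled a \emph{conjecture} in the paper, and the paper does not prove it. The only support the paper offers is numerical: Appendix~\ref{app:brickwork_t_independence} reports Lanczos and DMRG computations of $s_{\text{1D,open}}(N,q{=}2,t)$ for small $N$ and $t$, observing that the values never exceed $\tfrac{2q}{q^2+1}$ and that the subleading singular value almost always already appears at $t=2$. So there is no ``paper's own proof'' to compare against; your proposal is an outline of how one might attack an open problem.

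As a strategy, your lower-bound half is essentially sound and in fact standard: the embedding $S_t\hookrightarrow S_{t+1}$ corresponds at the level of moment operators to restricting the extra $U\otimes\bar U$ copy to the maximally entangled state, on which it acts trivially, so the spectrum of the $t$-fold transfer matrix sits inside that of the $(t{+}1)$-fold one and $s_{\text{1D,open}}(q,t)$ is nondecreasing in $t$. Combined with the exact $t=2$ value from ref.~\onlinecite{Hunter-Jones2019} this gives $s_{\text{1D,open}}(q,t)\ge \tfrac{2q}{q^2+1}$, which the paper does not state explicitly but which is implicit in its discussion. The upper-bound half, however, is exactly where the conjecture is open, and you correctly identify this yourself: the Peierls/Knabe route requires a $t$-uniform control on the $S_t$ Weingarten weights sufficient to rule out that a non-transposition domain wall or a bound state beats the single-transposition excitation. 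No such bound is currently known; your proposal names the crux but does not supply it, so it remains a proof sketch rather than a proof. In short, you have not found a gap in the paper---the paper makes no claim of proof here---and your outline is a reasonable roadmap, but the decisive combinatorial estimate is still missing.
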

Numerical evidence for this formula is given in Appendix \ref{app:brickwork_t_independence}. 

\subsection{Nondeterministic architectures}
Our theorems focus on deterministic architectures, in which the contents of the gates are random but their arrangement is fixed. Another interesting class of ensembles is nondeterministic architectures, in which the locations of the gates are also random\cite{Brandao2016}.  
In this case, the \(t\)-design property is obtained by averaging over both the spatial arrangement of the gates and their content. Bounds on nondeterministic and deterministic architectures can often be related to each other by the union bound or detectability lemma\cite{Aharonov2010, Haferkamp2022}. 

We can also show more directly that bounds for particular spatial structures imply bounds for averages over ensembles of structures.
We use the triangle inequality on \(||\Phi_\text{RQC} - \Phi_\text{Haar}||_\diamond\), where \(\Phi_\text{RQC}\) is drawn from some distribution \(\rho_\Phi\). Then
\begin{gather}
    ||\langle \Phi_\text{RQC} \rangle_{\rho_\Phi} - \Phi_\text{Haar}||_\diamond \leq \left \langle|| \Phi_\text{RQC} - \Phi_\text{Haar}||_\diamond\right\rangle_{\rho_\Phi}
\end{gather}
In particular Equation \ref{eq:generic_errorbound} with the right-hand side replaced with its average over the architecture. We can use Jensen's inequality again to see
\begin{gather}
    \log \left \langle|| \Phi_\text{RQC} - \Phi_\text{Haar}||_\diamond\right\rangle_{\rho_\Phi} \leq 
    \left \langle\log || \Phi_\text{RQC} - \Phi_\text{Haar}||_\diamond\right\rangle_{\rho_\Phi}
\end{gather}
so we can also apply Equation \ref{eq:other_generic_errorbound}  with the right-hand side replaced by its average. 

The connection count \(k\) and mean block size \(\bar{\ell}\) will both differ between realizations, and it is not clear that there is any general strategy for calculating their distribution. Presumably they are not independent, so we can't solve Equation \ref{eq:other_generic_errorbound} for \(\langle k \rangle\). An interesting open question is whether there exists any simple relationship between the averaged bound and the distribution from which the circuit structure is drawn.

A commonly-studied example is that of gates which are drawn sequentially from some a uniform distribution over some set of pairs of sites \cite{Brandao2016, Haferkamp2022, Oszmaniec2022}. In the worst case, such circuits may require \(N-1\) nontrivial layers to connect all the sites\footnote{The upper bound is \(N-1\) because we ignore layers with completely disconnected cluster-merging graphs.}. However, \textit{typical} instances probably require far fewer distinct layers, perhaps as few as \(O(1)\). Further work may wish to explore the relationship between bounds obtained by this strategy and known bounds for such architectures \cite{Oszmaniec2022}. 

If we assume Conjecture \ref{conjecture:connection_count}, the nondeterministic case becomes more tractable. We find that it forms an approximate $t$-design when
\begin{equation}
\langle k \rangle \geq \frac{2Nt \log q + \log \frac{1}{\epsilon}}{\log \frac{1}{s_{\text{1D,open}}}}
\end{equation}
Suppose we have \(n_g\) gates sampled i.i.d from the uniform distribution over the edges of some large connected graph. There is quite a bit more we can say about the relationship between \(n_g\) and \(\langle k \rangle \). The fully-connected graph, for example, undergoes a percolation phase transition and becomes connected after \(O(N\log N)\) gates, so \(n_g = O\left(N\log N\right)\langle k \rangle\).\cite{Erdos1959} This corresponds to an approximate $t$-design threshold size of $n_g = O(N^2 \log N)$. For $t=2$ this is worse than the known bound by a factor of $\log N$.\cite{Ambainis2007} From the coupon collector problem, we see that the linear graph also becomes connected after $O(N \log N)$ gates, so its threshold is the same.\cite{Erdos1961} Again a bound which is better by a factor of $\log N$ (and which does not rely on Conjecture \ref{conjecture:connection_count}) is already known.\cite{Brandao2016}

For a graph with \(E\) edges, we can again use the coupon collector problem to see that it must be connected after $O(E \log E)$ gates. Since $E \leq O(N^2)$, this suggests every graph gives an approximate $t$-design threshold size of at most $n_g = O(N^3 \log N)$. For graphs which admit Hamiltonian paths, this result is also weaker by $\log N$ than what was already known, but for graphs without Hamiltonian paths it would be a new result.\cite{Oszmaniec2022}

\subsection{Highly-connected architectures}
Previous work on random circuits\cite{Brandao2016, Hunter-Jones2019, Haferkamp2022, Harrow2009} has focused largely on brickwork architectures. Brickwork architectures are in some sense exceptionally well-connected, so they should be expected to converge to the Haar distribution relatively quickly. Indeed, ref.~\cite{Harrow2023} shows certain higher-dimensional brickwork circuits approach the Haar measure at a rate that increases with geometric dimension, which corresponds to increasing the period \(\ell\) of the architecture. Here we suggest an extension of our techniques to obtain tighter bounds for such special architectures. 

The first observation is that highly-connected cluster-merging graphs give small \(\mathscr{s}_i\). Second, certain families of $N$-site architectures form clusters of sizes which scale with $N$ after only a few layers. If we join two $m$-site clusters with $m$ gates, Appendix \ref{app:brickworks} shows that the subleading value is \(q^{-\Omega(m)}\) as \(m \rightarrow \infty\). In other words, any layers which only join clusters with an extensive number of edges do not contribute to our bound.

For the specific case of higher-dimensional brickwork circuits and \(t=2\), we can give a more explicit calculation. Consider a $D$ dimensional lattice of $N=L^D$ points at positions $\vec{x} \in \{1 ... L\}^D$, with $\hat{e}_i$ being the unit vector in direction $i$, and connect the lattice with the following sequence of layers: 
\begin{itemize}
    \item At layers $1 \leq i \leq D$, we apply gates on pairs $(\vec{x}, \vec{x}+\hat{e}_i)$ for each $\vec{x}$ such that $x_i$ is even (all addition of components is performed modulo $L$, to make this architecture periodic).
    \item At layers $D+1 \leq i \leq 2D$, we apply gates on pairs $(\vec{x}, \vec{x}+\hat{e}_{i-D})$ for each $\vec{x}$ such that $x_{i-D}$ is odd.
\end{itemize}
This is a higher dimensional generalization of the brickwork architecture that emphasizes accessing all dimensions as quickly as possible, instead of repeating the one-dimensional brickwork across multiple directions. 

The first \(D\) (``even'') layers of the \(D\)-dimensional brickwork form hypercube clusters of size \(2^D\). The \(D+1\)th layer, which is the first ``odd'' layer, then connects these hypercubes into rows of \(2^D N^\text{1/D}\) sites. The remaining odd layers are then highly connected, so in the limit of large \(N\) their contribution can be ignored. This argument, given in detail in Appendix \ref{app:brickworks}, shows that the \((\epsilon,t=2)\)-design depth of the \(D\)-dimensional brickwork is at most
\begin{gather}
    d_* = 2D \frac{4N \log q + \log \frac{1}{\epsilon}}{\log \left[\left(1 - \left(1 - \exp\left(-\frac{1}{2C(q,2)}\right)\right)^2\right)^{-1}\right]}
\end{gather}
It seems likely that this argument could be improved to give a bound that actually decreases with \(D\), since the dominant \(D+1\)\textsuperscript{th}-layer cluster-merging graph has a relatively simple structure. It also may be possible to extend such techniques to larger \(t\) and for other extensively-connected architectures. 

\section{Conclusion} \label{section:conclusion}
We show that bounds on approximate $t$-designs for 1D brickworks imply bounds for general architectures. This process not only gives us an immediate bound linear in $N$ for the $t$-th moments of all sufficiently well-connected architectures, but allows us to convert any improved 1D bounds into bounds on generic structures. We also show that our bounds can be extended by an averaging procedure to an implicit bound for nondeterministic architectures.

Any architecture consisting of connected $\ell$-blocks of $O(1)$ depth can be bounded this way. So our result implies that any sufficiently regularly-connected circuit ensemble approximates global information scrambling in at most linear depth, albeit with exponential dependence on the connection frequency.

We conjecture that this bound can be tightened to one which depends on the circuit architecture only via the \textit{connection count}. This suggests rapid scrambling is inescapable for any sufficiently well-connected architecture.

\textit{Note added:}  During the final stages of completing the present manuscript, we became aware of a related parallel independent  work with similar results on nondeterministic architectures which will appear in the same arXiv posting~\cite{Mittal2023}. 

\begin{acknowledgements}
We acknowledge useful conversations with Nicholas Hunter-Jones.  D.B. acknowledges Felix Leditzky for suggesting relevant literature. This material is based upon work supported by the U.S. Department of Energy, Office of Science, National Quantum Information Science Research Centers. B.F. and S.G. acknowledge support from AFOSR (FA9550-21-1-0008) and DOE QuantISED grant DE-SC0020360.
This material is based upon work partially
supported by the National Science Foundation under Grant CCF-2044923 (CAREER).

\end{acknowledgements}
\bibliographystyle{unsrt}
\bibliography{references2}

\begin{thebibliography}{10}

\bibitem{Emerson2005a}
Joseph Emerson, Robert Alicki, and Karol Zyczkowski.
\newblock {Scalable noise estimation with random unitary operators}.
\newblock {\em Journal of Optics B: Quantum and Semiclassical Optics}, 7(10), 2005.

\bibitem{Magesan2012}
Easwar Magesan, Jay~M. Gambetta, and Joseph Emerson.
\newblock {Characterizing quantum gates via randomized benchmarking}.
\newblock {\em Physical Review A}, 85(4):042311, 4 2012.

\bibitem{Hayden2003}
Patrick Hayden, Debbie Leung, Peter~W. Shor, and Andreas Winter.
\newblock {Randomizing quantum states: Constructions and applications}.
\newblock {\em Communications in Mathematical Physics}, 250(2):371--391, 7 2003.

\bibitem{boixo2016characterizing}
Sergio Boixo, Sergei~V Isakov, Vadim~N Smelyanskiy, Ryan Babbush, Nan Ding, Zhang Jiang, Michael~J Bremner, John~M Martinis, and Hartmut Neven.
\newblock Characterizing quantum supremacy in near-term devices.
\newblock {\em Nature Physics}, page~1, 2018.

\bibitem{Bouland2019a}
Adam Bouland, Bill Fefferman, Chinmay Nirkhe, and Umesh Vazirani.
\newblock {On the complexity and verification of quantum random circuit sampling}.
\newblock {\em Nature Physics}, 15(2):159--163, 2019.

\bibitem{ac}
Scott Aaronson and Lijie Chen.
\newblock Complexity-theoretic foundations of quantum supremacy experiments.
\newblock In Ryan O'Donnell, editor, {\em 32nd Computational Complexity Conference, {CCC} 2017, July 6-9, 2017, Riga, Latvia}, volume~79 of {\em LIPIcs}, pages 22:1--22:67. Schloss Dagstuhl - Leibniz-Zentrum f{\"{u}}r Informatik, 2017.

\bibitem{BFLL}
Adam Bouland, Bill Fefferman, Zeph Landau, and Yunchao Liu.
\newblock Noise and the frontier of quantum supremacy.
\newblock In {\em 62nd {IEEE} Annual Symposium on Foundations of Computer Science, {FOCS} 2021, Denver, CO, USA, February 7-10, 2022}, pages 1308--1317. {IEEE}, 2021.

\bibitem{movassagh}
Ramis Movassagh.
\newblock The hardness of random quantum circuits.
\newblock {\em Nature Physics}, 19(11):1719--1724, 2023.

\bibitem{KMM21}
Yasuhiro Kondo, Ryuhei Mori, and Ramis Movassagh.
\newblock Quantum supremacy and hardness of estimating output probabilities of quantum circuits.
\newblock In {\em 62nd {IEEE} Annual Symposium on Foundations of Computer Science, {FOCS} 2021, Denver, CO, USA, February 7-10, 2022}, pages 1296--1307. {IEEE}, 2021.

\bibitem{Hayden2007}
Patrick Hayden and John Preskill.
\newblock {Black holes as mirrors: quantum information in random subsystems}.
\newblock {\em Journal of High Energy Physics}, 2007(09):120, 9 2007.

\bibitem{Sekino2008}
Yasuhiro Sekino and L.~Susskind.
\newblock {Fast scramblers}.
\newblock {\em Journal of High Energy Physics}, 2008(10):065, 10 2008.

\bibitem{Hayden2016}
Patrick Hayden, Sepehr Nezami, Xiao~Liang Qi, Nathaniel Thomas, Michael Walter, and Zhao Yang.
\newblock {Holographic duality from random tensor networks}.
\newblock {\em Journal of High Energy Physics}, 2016(11), 2016.

\bibitem{Skinner2019}
Brian Skinner, Jonathan Ruhman, and Adam Nahum.
\newblock {Measurement-Induced Phase Transitions in the Dynamics of Entanglement}.
\newblock {\em Physical Review X}, 9(3), 2019.

\bibitem{Liao2022}
Yunxiang Liao and Victor Galitski.
\newblock {Effective Field Theory of Random Quantum Circuits}.
\newblock {\em Entropy}, 24(6), 2022.

\bibitem{Arute2019}
Frank Arute, Kunal Arya, Ryan Babbush, Dave Bacon, Joseph~C. Bardin, Rami Barends, Rupak Biswas, Sergio Boixo, Fernando~G.S.L. Brandao, David~A. Buell, Brian Burkett, Yu~Chen, Zijun Chen, Ben Chiaro, Roberto Collins, William Courtney, Andrew Dunsworth, Edward Farhi, Brooks Foxen, Austin Fowler, Craig Gidney, Marissa Giustina, Rob Graff, Keith Guerin, Steve Habegger, Matthew~P. Harrigan, Michael~J. Hartmann, Alan Ho, Markus Hoffmann, Trent Huang, Travis~S. Humble, Sergei~V. Isakov, Evan Jeffrey, Zhang Jiang, Dvir Kafri, Kostyantyn Kechedzhi, Julian Kelly, Paul~V. Klimov, Sergey Knysh, Alexander Korotkov, Fedor Kostritsa, David Landhuis, Mike Lindmark, Erik Lucero, Dmitry Lyakh, Salvatore Mandr{\`{a}}, Jarrod~R. McClean, Matthew McEwen, Anthony Megrant, Xiao Mi, Kristel Michielsen, Masoud Mohseni, Josh Mutus, Ofer Naaman, Matthew Neeley, Charles Neill, Murphy~Yuezhen Niu, Eric Ostby, Andre Petukhov, John~C. Platt, Chris Quintana, Eleanor~G. Rieffel, Pedram Roushan, Nicholas~C. Rubin, Daniel Sank, Kevin~J.
  Satzinger, Vadim Smelyanskiy, Kevin~J. Sung, Matthew~D. Trevithick, Amit Vainsencher, Benjamin Villalonga, Theodore White, Z.~Jamie Yao, Ping Yeh, Adam Zalcman, Hartmut Neven, and John~M. Martinis.
\newblock {Quantum supremacy using a programmable superconducting processor}.
\newblock {\em Nature}, 574(7779):505--510, 2019.

\bibitem{Pirker2018}
A.~Pirker, J.~Walln{\"{o}}fer, and W.~D{\"{u}}r.
\newblock {Modular architectures for quantum networks}.
\newblock {\em New Journal of Physics}, 20(5):53054, 2018.

\bibitem{Vidal2008}
G.~Vidal.
\newblock {Class of quantum Many-Body states that can be efficiently simulated}.
\newblock {\em Physical Review Letters}, 101(11), 2008.

\bibitem{Kim2017}
Isaac~H. Kim and Brian Swingle.
\newblock {Robust entanglement renormalization on a noisy quantum computer}.
\newblock 11 2017.

\bibitem{Emerson2005}
Joseph Emerson, Etera Livine, and Seth Lloyd.
\newblock {Convergence conditions for random quantum circuits}.
\newblock {\em Physical Review A - Atomic, Molecular, and Optical Physics}, 72(6), 2005.

\bibitem{Ambainis2007}
Andris Ambainis and Joseph Emerson.
\newblock {Quantum t-designs: T-wise independence in the quantum world}.
\newblock {\em Proceedings of the Annual IEEE Conference on Computational Complexity}, pages 129--140, 2007.

\bibitem{Dalzell2022}
Alexander~M. Dalzell, Nicholas Hunter-Jones, and Fernando~G.S.L. Brand{\~{a}}o.
\newblock {Random Quantum Circuits Anticoncentrate in Log Depth}.
\newblock {\em PRX Quantum}, 3(1):10333, 2022.

\bibitem{Brandao2016}
Fernando~G.S.L. Brand{\~{a}}o, Aram~W. Harrow, and Michał Horodecki.
\newblock {Local Random Quantum Circuits are Approximate Polynomial-Designs}.
\newblock {\em Communications in Mathematical Physics}, 346(2):397--434, 2016.

\bibitem{Haferkamp2022}
Jonas Haferkamp.
\newblock {Random quantum circuits are approximate unitary t-designs in depth O(nt5+o(1))}.
\newblock {\em Quantum}, 6, 2022.

\bibitem{Hunter-Jones2019}
Nicholas Hunter-Jones.
\newblock {Unitary designs from statistical mechanics in random quantum circuits}.
\newblock {\em arXiv preprint arXiv:1905.12053}, 2019.

\bibitem{Harrow2023}
Aram~W. Harrow and Saeed Mehraban.
\newblock {Approximate Unitary t-Designs by Short Random Quantum Circuits Using Nearest-Neighbor and Long-Range Gates}.
\newblock {\em Communications in Mathematical Physics}, 401(2):1531--1626, 2023.

\bibitem{Oszmaniec2022}
Michał Oszmaniec, Michał Horodecki, and Nicholas Hunter-Jones.
\newblock Saturation and recurrence of quantum complexity in random quantum circuits.
\newblock {\em arXiv preprint arXiv:2205.09734}, 2022.

\bibitem{Mittal2023}
Shivan Mittal and Nicholas Hunter-Jones.
\newblock {Local random quantum circuits form approximate designs on arbitrary architectures}.
\newblock 2023.

\bibitem{Brown2010}
Winton~G. Brown and Lorenza Viola.
\newblock {Convergence rates for arbitrary statistical moments of random quantum circuits}.
\newblock {\em Physical Review Letters}, 104(25):250501, 6 2010.

\bibitem{Barenco1995}
Adriano Barenco, Charles~H. Bennett, Richard Cleve, David~P. Divincenzo, Norman Margolus, Peter Shor, Tycho Sleator, John~A. Smolin, and Harald Weinfurter.
\newblock {Elementary gates for quantum computation}.
\newblock {\em Physical Review A}, 52(5):3457--3467, 1995.

\bibitem{Harrow2009}
Aram~W. Harrow and Richard~A. Low.
\newblock {Random quantum circuits are approximate 2-designs}.
\newblock {\em Communications in Mathematical Physics}, 291(1):257--302, 2009.

\bibitem{Aharonov2010}
Dorit Aharonov, Itai Arad, Zeph Landau, and Umesh Vazirani.
\newblock {Quantum Hamiltonian complexity and the detectability lemma}.
\newblock {\em arXiv preprint arXiv:1011.3445}, 2010.

\bibitem{Erdos1959}
Paul Erd{\H{o}}s and Alfr{\'e}d R{\'e}nyi.
\newblock On random graphs i.
\newblock {\em Publ. math. debrecen}, 6(290-297):18, 1959.

\bibitem{Erdos1961}
Paul Erd{\H{o}}s and Alfr{\'e}d R{\'e}nyi.
\newblock On a classical problem of probability theory.
\newblock {\em Magyar Tud. Akad. Mat. Kutat{\'o} Int. K{\"o}zl}, 6(1):215--220, 1961.

\end{thebibliography}

\begin{appendices}

\section{Properties of projector products} \label{app:ssv_bound}
In this section we prove some general results about products of orthogonal projection operators. Consider some set of subspaces \(X_i, i \in \{1...n\}\) of a Hilbert space.
Define \(P_i\) to be the orthogonal projector on to \(X_i\) and
\[T = P_n...P_2 P_1\]
In general we are interested in understanding the singular value spectrum of \(T\).

\subsection{Structure of the unit eigenspace}
\begin{lemma}
    The left unit eigenspace, right unit eigenspace, and unit singular value space of $T$ are all \(\bigcap_{i=1}^n X_i\). \label{lemma:leftright_eigenspace}
\end{lemma}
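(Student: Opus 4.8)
The plan is to prove both inclusions, the easy one being almost immediate. If $v \in \bigcap_{i=1}^n X_i$ then $P_i v = v$ for every $i$, so at once $Tv = v$ and $T^\dagger v = P_1 P_2 \cdots P_n v = v$; since moreover $\|Tv\| = \|v\|$, such a $v$ lies in the right unit eigenspace, the left unit eigenspace, and the unit-singular-value space simultaneously. Hence $\bigcap_{i=1}^n X_i$ is contained in all three spaces.

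For the reverse direction the key elementary fact is that for an orthogonal projector $P$ the Pythagorean identity $\|Pw\|^2 + \|(I-P)w\|^2 = \|w\|^2$ holds, so $\|Pw\| = \|w\|$ forces $(I-P)w = 0$, i.e. $Pw = w$. Now suppose $\|Tv\| = \|v\|$; this hypothesis covers the right unit eigenspace (where $Tv = v$) and, since $T$ is a contraction as a product of orthogonal projectors, it is exactly the defining condition of the unit-singular-value space as well. Because each $P_i$ is norm-nonincreasing we get the telescoping chain
\[\|v\| = \|P_n \cdots P_1 v\| \leq \|P_{n-1} \cdots P_1 v\| \leq \cdots \leq \|P_1 v\| \leq \|v\|,\]
which forces every inequality to be an equality. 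From $\|P_1 v\| = \|v\|$ and the key fact, $P_1 v = v$, so $v \in X_1$; then $\|P_2 P_1 v\| = \|P_1 v\|$ reads $\|P_2 v\| = \|v\|$, giving $P_2 v = v$ and $v \in X_2$; iterating up the chain yields $v \in X_i$ for all $i$, hence $v \in \bigcap_{i=1}^n X_i$. In particular $Tv = v$, so the unit-singular-value space actually coincides with the right unit eigenspace.

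Finally, for the left unit eigenspace one applies exactly the same telescoping argument to $T^\dagger = P_1 P_2 \cdots P_n$, which is again a product of the same orthogonal projectors taken in reverse order, peeling off $P_n$ first, then $P_{n-1}$, and so on; this shows the left unit eigenspace equals $\bigcap_{i=1}^n X_i$ as well. I do not expect any genuine obstacle: the only points requiring care are stating the norm-nonincreasing/Pythagorean fact cleanly and applying the equality cascade in the correct order (innermost projector first for $T$, outermost first for $T^\dagger$).
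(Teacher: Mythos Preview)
Your proof is correct and follows essentially the same approach as the paper's: both arguments use that each orthogonal projector is norm-nonincreasing and that equality $\|P_i w\| = \|w\|$ forces $P_i w = w$, then cascade this through the product to identify the unit eigenspace with $\bigcap_i X_i$ and observe that the argument is symmetric under reversing the order of the factors. Your write-up is in fact more explicit than the paper's, spelling out the Pythagorean identity and the telescoping chain of inequalities rather than leaving them implicit.
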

\begin{proof}
Let \(Y\) be the unit eigenspace of \(T\). A projector is norm-nonincreasing (i.e. \(||P_i||_\infty = 1\)). Furthermore, it acts as the identity on any vector whose norm it does not decrease. It follows that a unit eigenvector of \(T\) must be a unit eigenvector of each of the \(P_i\). It's easy to see that the converse also holds, so \(Y = \bigcap_{i=1}^n X_i\). 
This argument works the same from the left and the right, so the left and right unit eigenspaces are the same.

Now consider the singular value spaces of \(T\). These are the square roots of the eigenvalues of \(T^\dagger T\). Since the left and right eigenspaces are the same, they are in the unit eigenspace of \(T^\dagger T\). Since both \(T^\dagger\) and \(T\) are norm-nonincreasing, they must be the whole unit eigenspace. So the unit singular-value-space of \(T\) is exactly the unit eigenspace. 
\end{proof}

\begin{lemma}
    The unit eigenspace of $T$ is orthogonal to all other eigenstates of $T$, all other eigenstates of $T^{\dagger k} T^k$, and remains orthogonal no matter how many factors of $T$ or $T^\dagger$ are applied to the other eigenstate.\label{lemma:eig_is_ortho}
\end{lemma}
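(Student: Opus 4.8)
The plan is to reduce everything to one observation: writing $Y := \bigcap_{i=1}^n X_i$, which by Lemma \ref{lemma:leftright_eigenspace} is the unit eigenspace of $T$, both $T$ \emph{and} $T^\dagger$ act as the identity on $Y$. Indeed, if $v \in Y$ then $P_i v = v$ for every $i$, so $T v = P_n \cdots P_1 v = v$; and since each $P_i$ is self-adjoint, $T^\dagger = P_1 \cdots P_n$, whence $T^\dagger v = v$ as well. Consequently, for \emph{any} word $W$ in the letters $T$ and $T^\dagger$ we have $W^\dagger v = v$, because $W^\dagger$ is again such a word and can be peeled off one letter at a time. This is the fact that is special to products of orthogonal projectors — $T$ is in general not normal, so there is no orthogonal eigenbasis to appeal to, and everything will instead lean on this identity action on $Y$.

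With this in hand I would first handle eigenvectors of $T$. Let $w$ be an eigenvector with eigenvalue $\mu \neq 1$ (the case $\mu = 0$, i.e. $w \in \ker T$, is included). Moving the operator across the inner product and using $T^\dagger v = v$,
\[
\langle v | w \rangle = \langle T^\dagger v | w \rangle = \langle v | T w \rangle = \mu \langle v | w \rangle ,
\]
so $(1-\mu)\langle v | w \rangle = 0$ and hence $\langle v | w \rangle = 0$. For the stability claim, write $W = A_1 A_2 \cdots A_r$ with each $A_j \in \{T, T^\dagger\}$; peeling off letters from the left and using $A_j^\dagger v = v$ at each step gives $\langle v | W w \rangle = \langle v | w \rangle = 0$. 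The same argument applies to $M := T^{\dagger k} T^k$: since $T^k v = v$ and $(T^\dagger)^k v = v$, the vector $v$ is a unit eigenvector of the self-adjoint operator $M$, and for an eigenvector $w$ of $M$ with eigenvalue $\nu \neq 1$ the identical manipulation (now using $M^\dagger = M$ and $M v = v$) yields $\langle v | w \rangle = \nu \langle v | w \rangle$, so $\langle v | w \rangle = 0$; and again $\langle v | W w \rangle = \langle v | w \rangle = 0$ for every word $W$ in $T, T^\dagger$.

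I do not anticipate a genuine obstacle: the content is essentially the three-line computation above together with the bookkeeping that adjoints of words in $T$ and $T^\dagger$ still fix $Y$ pointwise. The only point requiring care is the temptation to invoke orthogonality of eigenspaces via the spectral theorem, which is unavailable since $T$ need not be normal; the proof must route around this through the "identity on $Y$" property, which is exactly what Lemma \ref{lemma:leftright_eigenspace} and the projector structure supply.
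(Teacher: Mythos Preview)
Your proposal is correct and follows essentially the same approach as the paper: both arguments hinge on the fact that a unit eigenvector $u$ satisfies $u^\dagger T = u^\dagger$ (equivalently $T^\dagger u = u$), then compute $u^\dagger w = u^\dagger T w = \mu\, u^\dagger w$ to force orthogonality, and both absorb extra factors of $T,T^\dagger$ into $u$ for the stability claim. The only cosmetic difference is that for $T^{\dagger k}T^k$ the paper invokes Hermiticity directly while you rerun the same computation; your explicit caution about not using the spectral theorem for $T$ itself is a nice touch.
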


\begin{proof}
From Lemma \ref{lemma:leftright_eigenspace} we know that the left and right unit eigenspaces of \(T\) are the same. Let \(u\) be a unit eigenvector and \(v\) a right-eigenvector with eigenvalue \(\lambda < 1\). We can compute 
\[u^\dagger v = u^\dagger T v = \lambda u^\dagger v\addtag\] which can hold only if \(u^\dagger v = 0\). So \(Y\) is orthogonal to every non-unit eigenspace of \(T\). 

Since $T^{\dagger k} T^k$ is Hermitian, its eigenstates of different eigenvalues are automatically orthogonal, so $u^\dagger v = 0$ for any subunit eigenstates $v$ of $T^{\dagger k} T^k$.

These proofs still hold if we apply extra factors of $T$ or $T^\dagger$ to $v$, because we can freely absorb these extra factors into $u$.

\end{proof}

\subsection{Bound from layer-restricted subleading singular values}
\begin{theorem}
\label{thm:singvals}
Let \(Q_i\) to be the orthogonal projector on to \(\bigcap_{j=1}^i X_j\). Let \(\mathscr{s}_i\) be the largest non-unit singular value of \(P_i Q_{i-1}\). Let \(s_*\) be the largest non-unit singular value of \(T\). Then we have the bound
\begin{gather}
    \label{eq:projprod_singval_bound}
    s_*^2 \leq 1 - \prod_{i=2}^n (1 - \mathscr{s}_i^2)
\end{gather}
\end{theorem}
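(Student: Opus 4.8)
The plan is to prove the bound by induction on the number of projectors, tracking how much each new layer shrinks the "almost-unit" part of a vector. The natural inductive quantity is $g_i \equiv \inf\{\|P_i \cdots P_1 v\|^2 : v \perp \bigcap_{j=1}^i X_j,\ \|v\|=1\}$, i.e. one minus the squared subleading singular value of $P_i\cdots P_1$ restricted to the orthogonal complement of its unit eigenspace. By Lemma~\ref{lemma:leftright_eigenspace} and Lemma~\ref{lemma:eig_is_ortho} the unit eigenspace $\bigcap_{j=1}^i X_j$ is exactly cancelled and stays orthogonal under further applications, so the statement $s_*^2 \le 1 - \prod_{i=2}^n(1-\mathscr{s}_i^2)$ is equivalent to $g_n \ge \prod_{i=2}^n(1-\mathscr{s}_i^2)$. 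The base case $i=1$ is trivial since $P_1$ acts as the identity on $X_1$ and $Q_1 = P_1$, giving $g_1 = 1$.

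For the inductive step I would take a unit vector $v \perp \bigcap_{j=1}^n X_j$ that (nearly) achieves the infimum defining $g_n$, decompose it relative to the intermediate unit eigenspace $\bigcap_{j=1}^{n-1} X_j = \mathrm{im}\,Q_{n-1}$ as $v = Q_{n-1}v + (1-Q_{n-1})v$, and estimate $\|P_n \cdots P_1 v\|$. On the piece $(1-Q_{n-1})v$, the product $P_{n-1}\cdots P_1$ already contracts the norm by a factor at least $\sqrt{g_{n-1}}$ (this is exactly the inductive hypothesis, using that $(1-Q_{n-1})v \perp \bigcap_{j=1}^{n-1}X_j$), and then $P_n$ is norm-nonincreasing. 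On the piece $Q_{n-1}v$, the earlier projectors act as the identity (it lies in every $X_j$, $j<n$), so what survives is $P_n Q_{n-1} v$; since $v \perp \bigcap_{j=1}^n X_j$ and the component of $v$ in that intersection is the same as the component of $Q_{n-1}v$ in it, $Q_{n-1}v$ is (up to the part already removed) orthogonal to the unit eigenspace of $P_n Q_{n-1}$, so $\|P_n Q_{n-1} v\| \le \mathscr{s}_n \|Q_{n-1}v\|$, i.e. the norm of this piece is contracted by $\mathscr{s}_n$ — equivalently its squared norm loses a factor $(1-\mathscr{s}_n^2)$ relative to being fully preserved. Combining the two pieces and using that $P_n\cdots P_1$ maps $Q_{n-1}v$ and $P_{n-1}\cdots P_1(1-Q_{n-1})v$ in a controlled way should yield $\|P_n\cdots P_1 v\|^2 \ge g_{n-1}(1-\mathscr{s}_n^2)$, completing the induction.

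The main obstacle is the cross term: $Q_{n-1}v$ and $(1-Q_{n-1})v$ are orthogonal, but after applying $P_{n-1}\cdots P_1$ (which fixes the first and contracts the second) and then $P_n$, the two images need not be orthogonal, so I cannot simply add squared norms. The cleanest fix is to work one projector at a time with the elementary two-subspace estimate — for orthogonal projectors $P$, $Q$ and any $w$, $\|Pw\|^2 \ge \|w\|^2 - \mathrm{dist}$-type bounds controlled by the principal angle between $\mathrm{im}\,P$ and $\mathrm{im}\,Q$ — and to set up the induction so that at each stage one only compares $P_i(P_{i-1}\cdots P_1 v)$ with $P_{i-1}\cdots P_1 v$, using that the unit eigenspace of the running product is nested decreasingly ($Q_i \le Q_{i-1}$ as projectors). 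Concretely, writing $w_i = P_i\cdots P_1 v$, one shows $\|w_i\|^2 \ge (1-\mathscr{s}_i^2)\|w_{i-1}\|^2$ whenever $w_{i-1}\perp \mathrm{im}\,Q_i$, which telescopes to the claim; the subtlety is verifying that $w_{i-1}$ indeed stays orthogonal to $\mathrm{im}\,Q_i$, which follows from Lemma~\ref{lemma:eig_is_ortho} since $\mathrm{im}\,Q_i \subseteq \mathrm{im}\,Q_{i-1}$ is fixed by all of $P_1,\dots,P_{i-1}$ and $v$ was chosen orthogonal to $\mathrm{im}\,Q_n \subseteq \mathrm{im}\,Q_i$. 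I would present the per-layer inequality as a short lemma about two projectors and a vector orthogonal to the intersection of their ranges, then assemble the telescoping product.
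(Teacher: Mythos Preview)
Your setup in the first two paragraphs is essentially the paper's: split $v$ along $\mathrm{im}\,Q_{i-1}$, use the inductive hypothesis on the orthogonal piece and $\mathscr{s}_i$ on the $Q_{i-1}$-piece, and worry about the cross term. But the ``cleanest fix'' you propose in the third paragraph does not work, and this is where the real content of the proof lies.

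First, a bookkeeping issue: your $g_i$ cannot be both $\inf\{\|P_i\cdots P_1 v\|^2\}$ and $1-s_*(T_i)^2$; these are different numbers (the infimum is typically zero). More importantly, the per-layer inequality you aim for, $\|w_i\|^2 \ge (1-\mathscr{s}_i^2)\|w_{i-1}\|^2$, points the wrong way: telescoping it gives a \emph{lower} bound on $\|Tv\|$, which says nothing about $s_* = \sup\|Tv\|$. And even reversed, no such per-layer bound is available from $\mathscr{s}_i$. The quantity $\mathscr{s}_i$ is the subleading singular value of $P_iQ_{i-1}$, so it controls $\|P_i u\|$ only for $u\in\mathrm{im}\,Q_{i-1}=\bigcap_{j<i}X_j$. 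Your $w_{i-1}=P_{i-1}\cdots P_1 v$ lives in $\mathrm{im}\,P_{i-1}=X_{i-1}$, which is generally much larger than $\mathrm{im}\,Q_{i-1}$; the principal angles between $X_i$ and $X_{i-1}$ are not the same as those between $X_i$ and $\bigcap_{j<i}X_j$. So $\mathscr{s}_i$ gives you no handle on $\|P_i w_{i-1}\|$, and the telescoping collapses.

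The paper resolves the cross term without leaving the decomposition of paragraph~2. With $v=v_1+v_2$, $v_1\in\mathrm{im}\,Q_{i-1}$, one writes $\|P_i(v_1+T_{i-1}v_2)\|\le \|v_1\|\cos\theta + s_*(T_{i-1})\|v_2\|\cos\phi$, where $\theta,\phi$ are the angles that $v_1$ and $T_{i-1}v_2$ make with the output direction. The constraint $v_1\perp T_{i-1}v_2$ (which follows from Lemma~\ref{lemma:eig_is_ortho}) forces $\theta+\phi\ge\pi/2$, and the definition of $\mathscr{s}_i$ forces $\cos\theta\le\mathscr{s}_i$. Optimizing over $\|v_1\|^2+\|v_2\|^2=1$ and then over the angles on the line $\theta+\phi=\pi/2$ yields $s_*(T_i)^2\le \mathscr{s}_i^2 + (1-\mathscr{s}_i^2)s_*(T_{i-1})^2$, i.e.\ $1-s_*(T_i)^2\ge(1-\mathscr{s}_i^2)(1-s_*(T_{i-1})^2)$. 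That is the inductive step you need; the cross term is absorbed into the angle optimization rather than avoided.
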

\begin{proof}
This is Theorem \ref{thm:singvals1} of the main text.
We will prove Equation \ref{eq:projprod_singval_bound} by induction. Let \(T_i = P_i ... P_1\) so that \(T = T_n\). Suppose that \(T_{i-1}\) satisfies equation \ref{eq:projprod_singval_bound}. We will prove that \(T_i = P_i T_{i-1}\) also satisfies \ref{eq:projprod_singval_bound}. \\
Let \(Y_i\) be the unit eigenspace of \(T_i\). Let \(v\) be a unit vector such that \(||T_i v || = s_*(T_i)\). We may take an orthogonal decomposition \(v = v_1 + v_2\), where \(v_1 \in Y_{i-1}\) and \(v_2 \perp Y_{i-1}\). We wish to compute
\[s_*(T_i) = ||P_i T_{i-1}v|| = ||P_i (v_1 + T_{i-1} v_2)||\]
Let \(\theta, \phi\) be the angles between \(T_i v\) and \(v_1, T_{i-1} v_2\), respectively. Note that \(||T_{i-1}v_2|| \leq s(T_{i-1}) ||v_2||\). Then \begin{gather}
||P_i (v_1 + T_{i-1} v_2)|| \leq ||v_1|| \cos \theta + s_*(T_{i-1}) ||v_2|| \cos \phi
\end{gather}
We next wish to optimize this bound over \(||v_1||, ||v_2||, \theta, \phi\) to obtain an unconditional bound. Our first constraint is \(||v_1||^2 + ||v_2||^2 = 1\), so after optimizing over the norms we find
\[||P_i (v_1 + T_{i-1}v_2)|| \leq \sqrt{\cos^2 \theta + s_*(T_{i-1})^2 \cos^2 \phi} \equiv f(\theta,\phi)\addtag\] 
Now we optimize over the angles. By their definition we must have \(0 \leq \theta \leq \frac{\pi}{2}\) and likewise for \(\phi\). Notice that \(f\) is a monotonically decreasing function of both \(\theta\) and \(\phi\) in this region, so the maximum will be attained somewhere on the boundary of the feasible set.

We also have some additional constraints. Since \(v_1 \perp M v_2\), we must have \(\theta + \phi \geq \frac{\pi}{2}\). And we know that the angle between \(v_1\) and \(P_i\) must be at least \(\cos^{-1} \mathscr{s}_i\) by the definition of \(\mathscr{s}_i\).

The Pareto frontier where both \(\theta\) and \(\phi\) are as small as possible lies along \(\theta + \phi = \frac{\pi}{2}\), so the optimum must be somewhere on this line. On this line we have \(\cos^2 \phi = 1 - \cos^2 \theta\), so
\[f(\theta) = \sqrt{(1 - s_*(T_{i-1})^2)\cos^2 \theta + s_*(T_{i-1})^2}\addtag\]
Since \(s_*(T_{i-1}) < 1\) this is again a decreasing function of \(\theta\), so it attains its maximum when \(\theta\) is minimized, i.e. \(\theta = \cos^{-1} \mathscr{s}_i\). We thus find
\[s_*(T_{i}) \leq \sqrt{(1 - s_*(T_{i-1})^2) \mathscr{s}_i^2 + s_*(T_{i-1})^2}\addtag\]
By assumption \(s_*(T_{i-1})^2 \leq 1 - \prod_{j=2}^{i-1} (1 - \mathscr{s}_j^2)\)
and so
\begin{align}
    s_*(T_{i})^2 &\leq \mathscr{s}_i^2 + (1-\mathscr{s}_i)^2 \left(1 - \prod_{j=2}^{i-1} (1 - \mathscr{s}_j^2)\right) \n 
    &\leq 1 - \prod_{j=2}^i (1 - \mathscr{s}_j^2)
\end{align}
completing the induction. Finally, for the base case, \(T_1\) is a single projector, so \(s_*(T_1) = 0\).
\end{proof}

The role of this theorem in our proof is analogous to that of the detectability lemma in e.g. ref.~\onlinecite{Haferkamp2022}. The detectability lemma is based on counting the number of projectors which do not commute, whereas here the \(\mathscr{s}_i\) in some sense quantify the amount of noncommutativeness.

\subsection{Other bounds on subleading singular values}
\begin{lemma}
\label{lemma:eigvsing}
Let \(s_*\) be the largest non-unit singular value of \(T\). Let \(\lambda_*\) be the non-unit eigenvalue of \(T\) with the largest magnitude. Then 
\[|\lambda_*| \leq s_*\addtag\]
\end{lemma}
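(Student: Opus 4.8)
The plan is to leverage the structural results already established: that the unit eigenspace $Y = \bigcap_i X_i$ coincides with the unit singular-value space (Lemma \ref{lemma:leftright_eigenspace}) and is orthogonal to every non-unit eigenvector (Lemma \ref{lemma:eig_is_ortho}). The point is that once we restrict attention to $Y^\perp$, the operator $T$ has operator norm exactly $s_*$, and the spectral radius of any operator is bounded by its operator norm; the only care needed is to ensure the unit part of $T$ truly decouples from the eigenvector we feed in.

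Concretely, I would first pick a right eigenvector $v$ of $T$ with $Tv = \lambda_* v$ and $|\lambda_*| < 1$. By Lemma \ref{lemma:eig_is_ortho}, $v \perp Y$. Next I would write a singular value decomposition $T = \sum_i s_i\, u_i w_i^\dagger$ with orthonormal families $\{u_i\}$, $\{w_i\}$. By Lemma \ref{lemma:leftright_eigenspace} the unit singular values are precisely those whose left \emph{and} right singular vectors lie in $Y$, so for every index $i$ with $s_i = 1$ we have $w_i \in Y$, hence $\langle w_i , v\rangle = 0$. Therefore $Tv = \sum_{s_i < 1} s_i\, u_i \langle w_i, v\rangle$, and since each such $s_i \le s_*$ while $\{w_i\}$ is an orthonormal basis,
\[
|\lambda_*|^2 \|v\|^2 = \|Tv\|^2 = \sum_{s_i<1} s_i^2\, |\langle w_i,v\rangle|^2 \le s_*^2 \sum_i |\langle w_i,v\rangle|^2 = s_*^2 \|v\|^2 ,
\]
which yields $|\lambda_*| \le s_*$.

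An equivalent route avoids the SVD: since $Y$ is simultaneously the left and right unit eigenspace, $Y^\perp$ is $T$-invariant, so $T$ splits as the identity on $Y$ plus a contraction $T'$ on $Y^\perp$ with $\|T'\|_\infty = s_*$ (the unit singular-value space being exactly $Y$); then $v \in Y^\perp$ is an eigenvector of $T'$ and $|\lambda_*| = \|T'v\|/\|v\| \le \|T'\|_\infty = s_*$. I do not expect a real obstacle here: the entire content of the argument is the bookkeeping that, thanks to Lemma \ref{lemma:leftright_eigenspace}, the \emph{singular} vectors attached to unit singular values all sit in $Y$, so that the $s_i = 1$ terms vanish against $v$. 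Everything else is the textbook "spectral radius $\le$ operator norm" estimate.
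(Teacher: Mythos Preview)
Your proposal is correct and essentially the same as the paper's proof: the paper writes \(s_* = \max_{v \perp Z} \|Tv\|/\|v\|\), picks a non-unit eigenvector \(v\) (which lies in \(Z^\perp\) by Lemma~\ref{lemma:eig_is_ortho}), and reads off \(|\lambda_*| = \|Tv\|/\|v\| \le s_*\) --- exactly your ``equivalent route''. Your SVD argument just makes explicit the identification \(s_* = \|T|_{Y^\perp}\|_\infty\) that the paper takes for granted.
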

\begin{proof}
We may write
\[s_* = \max_{\left\{v | v \perp Z \right\}} \frac{||T v||}{||v||}\addtag\]
Let \(v\) be an eigenvector of \(T\) with eigenvalue \(\lambda \ne 1\). From before \(v \perp Z\), and clearly \(\frac{||T v||}{||v||} = \lambda\), so we have \(s_* \geq \lambda\). The lemma follows immediately.
\end{proof}

\begin{theorem}
\label{thm:2layer_eigvsing}
Consider two projectors \(P_1, P_2\). Let \(\lambda_*\) be the largest non-unit eigenvalue of \(P_2 P_1\) and let \(s_*\) be the largest non-unit singular value. Then
\[s_* = \sqrt{\lambda_*}\addtag\]
\end{theorem}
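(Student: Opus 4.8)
The plan is to exploit the special two-projector structure to promote the general inequality $|\lambda_*| \leq s_*$ of Lemma \ref{lemma:eigvsing} to an equality with $s_*^2$. Write $T = P_2 P_1$. First I would observe that the nonzero singular values of $T$ are the square roots of the nonzero eigenvalues of $T^\dagger T = P_1 P_2 P_1$, which is a positive semidefinite operator supported on $X_1 = \mathrm{range}(P_1)$. So the problem reduces to showing that the largest non-unit eigenvalue of $P_1 P_2 P_1$ equals the square of the largest non-unit eigenvalue of $P_2 P_1$.

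The key step is to pass to Jordan's canonical form for a pair of projectors: the Hilbert space decomposes into an orthogonal direct sum of subspaces invariant under both $P_1$ and $P_2$, each of dimension one or two. On the one-dimensional blocks both projectors act as $0$ or $1$, contributing only eigenvalues/singular values in $\{0,1\}$, which are exactly the unit and zero parts we are instructed to discard (the unit eigenspace is $X_1 \cap X_2$ by Lemma \ref{lemma:leftright_eigenspace}, the zero part comes from $X_1^\perp$ and $X_2^\perp \cap X_1$). On each two-dimensional block there is an angle $\theta \in (0,\pi/2)$ such that, in a suitable basis, $P_1 = \begin{pmatrix} 1 & 0 \\ 0 & 0\end{pmatrix}$ and $P_2 = \begin{pmatrix} \cos^2\theta & \cos\theta\sin\theta \\ \cos\theta\sin\theta & \sin^2\theta \end{pmatrix}$. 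A direct computation on this block gives $P_2 P_1 = \begin{pmatrix} \cos^2\theta & 0 \\ \cos\theta\sin\theta & 0\end{pmatrix}$, whose nonzero eigenvalue is $\cos^2\theta$, while $P_1 P_2 P_1 = \begin{pmatrix}\cos^2\theta & 0 \\ 0 & 0\end{pmatrix}$, whose nonzero eigenvalue is also $\cos^2\theta$; hence the corresponding singular value of $P_2 P_1$ on this block is $\sqrt{\cos^2\theta} = |\cos\theta|$, which equals the square root of the eigenvalue $\cos^2\theta$. Taking the maximum over all two-dimensional blocks and noting that the one-dimensional blocks contribute nothing new, we get $s_* = \sqrt{\lambda_*}$.

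The main obstacle is simply the bookkeeping of which blocks of the Jordan two-projector decomposition correspond to the "unit" and "zero" parts that the statement excludes, so that "largest non-unit eigenvalue" on the left matches "largest non-unit singular value" on the right; this requires invoking Lemma \ref{lemma:leftright_eigenspace} to identify the unit eigenspace with $X_1 \cap X_2$ and checking that no principal angle equal to $0$ contributes. An alternative that avoids the structure theorem would be to combine Lemma \ref{lemma:eigvsing} (giving $|\lambda_*| \leq s_*$) with the reverse bound obtained by noting $P_1 P_2 P_1 = (P_2 P_1)^\dagger (P_2 P_1)$ is conjugate (on $X_1$) to $(P_1 P_2 P_1)|_{X_1}$ and that $P_2 P_1$ and $P_1 P_2$ share nonzero eigenvalues, so $\lambda_*(P_2 P_1)^2 = \lambda_*(P_1 P_2 P_1) = s_*^2$; but making the "shared eigenvalues" and "restriction to $X_1$" steps rigorous on the non-unit part is essentially the same bookkeeping, so I would present the Jordan-form version as cleaner.
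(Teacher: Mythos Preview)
Your proof is correct but takes a genuinely different route from the paper. You invoke the Halmos/Jordan two-projection decomposition and verify the claim block by block, which makes the equality $s_* = \sqrt{\lambda_*}$ transparent via the principal angles. The paper instead gives a direct, self-contained argument that avoids any structure theorem: starting from a non-unit eigenvector $v_*$ of $P_2 P_1$ it sets $w_* = P_1 v_*$ and computes $\|P_2 P_1 w_*\|^2/\|w_*\|^2 = \lambda_*$ to get $s_* \geq \sqrt{\lambda_*}$; conversely, starting from a subleading eigenvector $w_*$ of $P_1 P_2 P_1$ (eigenvalue $s_*^2$) it sets $v_* = P_2 P_1 w_*$ and checks $P_2 P_1 v_* = s_*^2 v_*$ to get $\lambda_* \geq s_*^2$, with Lemma \ref{lemma:eig_is_ortho} supplying the orthogonality to the unit eigenspace at each step. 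Your approach is cleaner conceptually and immediately yields the full spectral correspondence (not just the top non-unit values), at the cost of importing an external theorem; the paper's approach is more elementary and stays entirely within the projector-product machinery already set up in Appendix \ref{app:ssv_bound}. Interestingly, the alternative you sketch at the end---using that $P_2 P_1$ and $P_1 P_2$ share nonzero eigenvalues---is essentially the paper's argument in disguise, so your instinct that the bookkeeping there is manageable was right.
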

\begin{proof}
    Any eigenvector $v_*$ corresponding to $\lambda_*$ must lie in the unit eigenspace of $P_2$. Otherwise, the output $P_2 P_1 v_*$ will not be parallel to $v_*$. Also, if we take the vector $w_* \equiv P_1 v_*$, by Lemma~\ref{lemma:eig_is_ortho} this vector is orthogonal to the unit eigensapce of $P_2 P_1$, and
    \begin{align}
        \frac{||P_2 P_1 w_*||^2}{||w_*||^2} &= \frac{v_*^T P_1 P_2^2 P_1 v_*}{v_*^T P_1 v_*}\n
        &= \frac{\lambda_*^2}{v_*^T (P_2 P_1 v_*)}\n
        &= \lambda_*
    \end{align}
    So the subleading singular value of $P_2 P_1$ is at least $\sqrt{\lambda_*}$.
    
    Now, take $w_*$ to be the subleading eigenvector of $(P_2 P_1)^\dagger (P_2 P_1) = P_1 P_2 P_1$. By definition the corresponding eigenvalue is $s_*^2$. Take $v_* = P_2 P_1 w_*$ (still orthogonal to the unit eigenspace of $P_2 P_1$ by Lemma~\ref{lemma:eig_is_ortho}). Then
    \begin{align}
        P_2 P_1 v_* &= P_2 P_1^2 P_2 P_1 w_*\n
        &= s_*^2 P_2 P_1 w_*\n
        &= s_*^2 v_*
    \end{align}
    We see that the subleading eigenvalue of $P_2 P_1$ is at least $s_*^2$. Combining these two inequalities establishes Theorem \ref{thm:2layer_eigvsing}.
\end{proof}

\begin{lemma}
\label{lemma:product_ssv_bound}
Let \(T_i, i\in \{1...k\}\) be a sequence of projector products that all share the same unit eigenspace. Let \(s_*^{(i)}\) be the largest non-unit singular value of \(T_i\). Then the largest non-unit singular value of \(\prod_{i=1}^k T_i\) is at most \(\prod_i s_*^{(i)}\).
\end{lemma}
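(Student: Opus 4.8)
The plan is to show that every $T_i$, and hence the product, block-diagonalizes with respect to the orthogonal splitting $\mathcal{H} = Z \oplus Z^\perp$, where $Z$ denotes the common unit eigenspace, and then to reduce the claim to submultiplicativity of the operator norm restricted to $Z^\perp$.

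First I would use Lemma \ref{lemma:leftright_eigenspace} to note that for each $i$ the left unit eigenspace, the right unit eigenspace, and the unit singular-value space of $T_i$ all equal $Z$; in particular $T_i u = u$ and $T_i^\dagger u = u$ for every $u \in Z$. The key structural observation is then that each $T_i$ preserves $Z^\perp$: for $v \perp Z$ and $u \in Z$ one has $\langle u, T_i v\rangle = \langle T_i^\dagger u, v\rangle = \langle u, v\rangle = 0$. Together with $T_i|_Z = \mathrm{id}$, this shows $T_i = \mathrm{id}_Z \oplus \tilde T_i$ with $\tilde T_i := T_i|_{Z^\perp}$, so $T_i^\dagger T_i = \mathrm{id}_Z \oplus \tilde T_i^\dagger \tilde T_i$. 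Hence the singular values of $T_i$ are precisely $1$ (with multiplicity $\dim Z$) together with those of $\tilde T_i$, and no singular value of $\tilde T_i$ can equal $1$ (such a unit singular vector would lie in $Z^\perp$, contradicting that the unit singular-value space of $T_i$ is $Z$). Therefore $\|\tilde T_i\|_\infty = s_*^{(i)} < 1$.

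Next I would apply the same decomposition to $T := \prod_{i=1}^k T_i$. Since each factor fixes $Z$ and preserves $Z^\perp$, so does $T$, giving $T = \mathrm{id}_Z \oplus \tilde T$ with $\tilde T = \prod_{i=1}^k \tilde T_i$. As above, the non-unit singular values of $T$ are exactly the singular values of $\tilde T$, and submultiplicativity of the operator norm gives $\|\tilde T\|_\infty \le \prod_{i=1}^k \|\tilde T_i\|_\infty = \prod_{i=1}^k s_*^{(i)}$, which is the asserted bound.

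I do not expect a genuine obstacle: the argument is essentially bookkeeping once the block structure is in place. The one step that must be handled with care — and the only place the hypotheses really enter — is the invariance of $Z^\perp$ under each $T_i$, which relies on the coincidence of the left and right unit eigenspaces established in Lemma \ref{lemma:leftright_eigenspace}; without it the block form would fail to be diagonal and the norm estimate could break down.
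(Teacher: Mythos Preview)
Your proposal is correct and follows essentially the same approach as the paper: both rely on Lemma~\ref{lemma:leftright_eigenspace} to identify the common unit eigenspace $Z$ as simultaneously the left and right unit eigenspace of each $T_i$, which is exactly what makes $Z^\perp$ invariant and reduces the problem to submultiplicativity of the operator norm on $Z^\perp$. The only cosmetic difference is that you make the block decomposition $T_i = \mathrm{id}_Z \oplus \tilde T_i$ explicit and invoke $\|\prod_i \tilde T_i\|_\infty \le \prod_i \|\tilde T_i\|_\infty$ directly, whereas the paper phrases the same estimate as an induction on the variational formula $s(M_j) = \max_{v \perp Z} \|M_j v\|/\|v\|$; your presentation is if anything cleaner, since it makes the invariance of $Z^\perp$ (the only nontrivial step) fully explicit.
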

\begin{proof}
Let \([M]_1\) denote the unit singular value space of \(M\), and let \(M_j = \prod_{i=j}^k T_i\)
 We can write the subleading singular value of \(M_k\) as
\[s(M_j) = \max_{v \perp \left[M_j\right]_1} \frac{||M_j v||}{||v||}\addtag\]
Since the \(T_i\) all share a unit eigenspace, the unit eigenspace of 
\(M_j\) is the same as that of each \(T_i\). But by Lemma \ref{lemma:leftright_eigenspace}, the unit eigenspaces and unit singular value spaces of both \(T_i\) and \(M_j\) are the same, so 
\[\left[M_j\right]_1 = [T_1]\addtag\]
We thus have
\[s(M_j) = \max_{v \perp [T]_1} \frac{||M_j v||}{||v||}\addtag\]
Furthermore, 
\[||M_j v|| = ||M_{j+1} T_j v)|| \leq S_j || M_{j+1} v||\addtag\]
and so
\[s(M_{j}) \leq s_*^{(j)} s(M_{j+1})\addtag\]
Induction then gives the desired bound.
\end{proof}

\subsection{Bounds on Frobenius norms}
\begin{lemma}
Let \(T\) be a projector product with largest non-unit singular value \(s_*\). Let \(d\) be the dimension of the space on which \(T\) acts, let \(m_1\) be the dimension of the unit eigenspace of \(T\), and let \(m_0\) be the dimension of the zero eigenspace. Then
    \[||T^k||_F^2 \leq m_1 + (d-m_1-m_0)s_*^2\addtag\]
\label{lemma:normbound_from_ssv}
\end{lemma}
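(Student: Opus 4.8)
The plan is to put \(T\) in block-diagonal form with respect to the orthogonal splitting \(\mathcal{H} = Y \oplus Y^{\perp}\), where \(Y = \bigcap_{i=1}^{n} X_i\) is the common unit eigenspace identified in Lemma~\ref{lemma:leftright_eigenspace} and \(\dim Y = m_1\). First I would note that both \(T\) and \(T^{\dagger}\) leave \(Y^{\perp}\) invariant: by Lemma~\ref{lemma:leftright_eigenspace} every \(u \in Y\) is also a left unit eigenvector, so \(u^{\dagger} T v = u^{\dagger} v = 0\) whenever \(v \perp Y\) (this is also immediate from Lemma~\ref{lemma:eig_is_ortho}). Hence \(T = I_{Y} \oplus \tilde{T}\), acting as the identity on \(Y\) and as some operator \(\tilde{T}\) on \(Y^{\perp}\), and consequently \(T^{k} = I_{Y} \oplus \tilde{T}^{k}\). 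Since the unit singular-value space of \(T\) is exactly \(Y\), the operator \(\tilde{T}\) has no unit singular value, so its largest singular value is precisely the largest non-unit singular value of \(T\), i.e. \(\|\tilde{T}\|_{\infty} = s_{*}\). Similarly \(\ker T\) meets \(Y\) only in \(0\) and is therefore contained in \(Y^{\perp}\), which gives \(\dim \ker \tilde{T} = m_0\) and \(\operatorname{rank} \tilde{T} = (d - m_1) - m_0\).

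The Frobenius norm now splits cleanly as
\[
\|T^{k}\|_{F}^{2} = \|I_{Y}\|_{F}^{2} + \|\tilde{T}^{k}\|_{F}^{2} = m_1 + \|\tilde{T}^{k}\|_{F}^{2}.
\]
To control the remaining term I would expand \(\|\tilde{T}^{k}\|_{F}^{2}\) as the sum of the squared singular values of \(\tilde{T}^{k}\): at most \(\operatorname{rank}(\tilde{T}^{k}) \le \operatorname{rank}(\tilde{T}) = d - m_1 - m_0\) of them are nonzero, and each is bounded by \(\|\tilde{T}^{k}\|_{\infty} \le \|\tilde{T}\|_{\infty}^{k} = s_{*}^{k}\) by submultiplicativity of the operator norm. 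Hence \(\|\tilde{T}^{k}\|_{F}^{2} \le (d - m_1 - m_0)\, s_{*}^{2k}\), giving \(\|T^{k}\|_{F}^{2} \le m_1 + (d - m_1 - m_0)\, s_{*}^{2k}\), which is the form actually used (e.g.\ in Theorem~\ref{thm:framepotential}) and reduces to the displayed statement at \(k = 1\) (the general-\(k\) bound being no weaker, since \(s_{*} < 1\)).

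The one genuinely delicate point — and the main obstacle — is that \(T\) is a product of projectors and hence generically non-normal, so one may not diagonalize it, nor claim that the singular values of \(T^{k}\) are the \(k\)-th powers of those of \(T\). The reduction to \(Y \oplus Y^{\perp}\) is exactly what sidesteps this: it peels off the \(m_1\) exact unit directions, and on the complement it replaces any spectral argument by the crude but valid pair \(\|\tilde{T}^{k}\|_{\infty} \le s_{*}^{k}\) and \(\operatorname{rank} \tilde{T}^{k} \le d - m_1 - m_0\). I would also take care to justify the two auxiliary dimension counts used above, namely that \(\tilde{T}\) inherits no unit singular value and that \(\ker \tilde{T}\) has dimension exactly \(m_0\); both follow directly from the identification of the unit and zero singular-value spaces of \(T\) in Lemma~\ref{lemma:leftright_eigenspace}.
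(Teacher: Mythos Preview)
Your proof is correct and follows essentially the same idea as the paper: classify the singular values of $T$ into $m_1$ unit values, $m_0$ zero values, and at most $d-m_1-m_0$ values bounded by $s_*$, then sum the squares. The paper does this in one line by listing the singular values of $T$ directly, without the explicit block decomposition $T = I_Y \oplus \tilde{T}$; your framing makes the role of Lemma~\ref{lemma:leftright_eigenspace} more explicit but is otherwise the same argument. One small difference worth noting: the paper's proof as written treats only $k=1$ (the $T^k$ in the displayed statement appears to be a typo for $T$), and obtains the general-$k$ bound separately by combining this lemma with Lemma~\ref{lemma:product_ssv_bound} to reach Theorem~\ref{thm:periodic_normbound}. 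Your route via $\|\tilde{T}^k\|_\infty \le \|\tilde{T}\|_\infty^k$ and $\operatorname{rank}\tilde{T}^k \le \operatorname{rank}\tilde{T}$ gets the $s_*^{2k}$ bound in one shot, which is a mild streamlining.
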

\begin{proof}
Let \(\sigma_i\) be all the singular values of \(T\). We have \(\sigma_i = 1\) for \(i \in \{1...m_1\}\), \(\sigma_j \leq s_*\) for \(i \in \{m_1+1...d-m_0\}\), and \(\sigma_k = 0\) for \(i \{\in d-m_0+1...d\}\). We compute
\[||T||_F^2 = \sum_i \sigma_i^2 \leq \sum_{i=1}^m 1 + \sum_{i=m+1}^{d-m_0} s_*^2\addtag\]
from which the result follows immediately. We can further tighten the bound by replacing \(d\) with the number of nonzero singular values.
\end{proof}

\begin{theorem}
\label{thm:generic_normbound}
Let \(T_i, i\in \{1...k\}\) be a sequence of projector products that all have the same unit eigenspace. Let \(d\) be the dimension of the space on which \(T_i\) acts, let \(m_1\) be the dimension of the unit eigenspace of \(T_i\), and let \(m_0\) be the dimension of the zero eigenspace of any particular \(T_i\). Let \(s_*^{(i)}\) be the largest non-unit singular value of \(T_i\). Then
\begin{gather}
    \left| \left|\prod_i T_i \right| \right|_F^2 \leq m_1 + (d-m_1-m_0)\prod_{i=1}^k \left(s_*^{(i)}\right)^2
\end{gather}
\end{theorem}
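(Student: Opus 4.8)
The plan is to obtain the result by combining Lemma~\ref{lemma:product_ssv_bound} with (the proof of) Lemma~\ref{lemma:normbound_from_ssv}, both applied to the single operator $M \equiv \prod_{i=1}^k T_i$.

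First I would record the singular-value structure of $M$. Since each $T_i$ is itself a product of orthogonal projectors, so is $M$; hence all of its singular values lie in $[0,1]$. By Lemma~\ref{lemma:leftright_eigenspace} the unit eigenspace (equivalently, the unit singular-value space) of a projector product is the intersection of the subspaces being projected onto, so the unit eigenspace of each $T_i$ is the common space $Y$ with $\dim Y = m_1$, and the unit eigenspace of the full product $M$ is again the intersection of \emph{all} the subspaces appearing in \emph{all} the $T_i$, which is still $Y$, of dimension $m_1$. Next, $\ker M$ contains the kernel of whichever $T_i$ is applied first (the rightmost factor), so $M$ has at least $m_0$ zero singular values. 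Finally, because all the $T_i$ share the unit eigenspace $Y$, Lemma~\ref{lemma:product_ssv_bound} applies and tells us that every non-unit singular value of $M$ is at most $\prod_{i=1}^k s_*^{(i)}$.

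With these three facts in hand, I would essentially repeat the one-line computation in the proof of Lemma~\ref{lemma:normbound_from_ssv}: ordering the singular values $\sigma_1 \geq \sigma_2 \geq \dots \geq \sigma_d$ of $M$, we have $\sigma_1 = \dots = \sigma_{m_1} = 1$, then at most $d - m_1 - m_0$ further values each bounded by $\prod_i s_*^{(i)}$, and the remaining (at least $m_0$) values equal to $0$. Summing the squares gives
\[
\|M\|_F^2 = \sum_{a=1}^d \sigma_a^2 \leq m_1 + (d - m_1 - m_0)\prod_{i=1}^k \big(s_*^{(i)}\big)^2 ,
\]
which is the claimed bound. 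As in Lemma~\ref{lemma:normbound_from_ssv}, one could tighten $d$ to the number of nonzero singular values of $M$.

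The only real subtlety — and hence the step I would be most careful about — is the bookkeeping for the zero eigenspace: Lemma~\ref{lemma:normbound_from_ssv} is stated for a power $T^k$, whereas here the factors differ, so one must check that the kernel of the \emph{product} still has dimension at least $m_0$. This holds because $\ker M \supseteq \ker T_{i_0}$ for $T_{i_0}$ the first factor applied, and it is this $T_{i_0}$ whose zero eigenspace should be taken as the reference $m_0$ in the statement (in the intended application all the $T_i$ share the same first layer, so the choice is immaterial). Everything else is a direct consequence of Lemmas~\ref{lemma:product_ssv_bound}, \ref{lemma:leftright_eigenspace}, and~\ref{lemma:normbound_from_ssv}.
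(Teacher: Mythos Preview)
Your proposal is correct and follows exactly the route the paper takes: the paper's proof is the single sentence ``This follows directly from the results of Lemmas~\ref{lemma:product_ssv_bound} and~\ref{lemma:normbound_from_ssv},'' and you have spelled out precisely that combination.

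One small remark on your closing caveat about $m_0$: your caution is in fact unnecessary. Since $\operatorname{rank}(M) \leq \operatorname{rank}(T_i)$ for \emph{every} factor $T_i$ (not just the first one applied), the number of zero singular values of $M$ is at least $d - \operatorname{rank}(T_i) = m_0$ for any choice of $i$. So the phrase ``any particular $T_i$'' in the statement is literally correct, and you need not restrict to the rightmost factor.
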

\begin{proof}
This follows directly from the results of lemmas \ref{lemma:product_ssv_bound} and \ref{lemma:normbound_from_ssv}.
\end{proof}

\begin{theorem}
\label{thm:normbound_with_extras}
Let \(T_i, s_*^{(i)}, d, m_1, m_0\) be as in Theorem \ref{thm:generic_normbound}. Let \(R_1...R_n\) be a sequence of projector products such that the unit eigenspace of \(T_1\) is contained within the unit eigenspace of each \(R_i\). Let \(M\) be the product of all of the \(T_i\) and all of the \(R_i\) in any ordering. Then
\begin{gather}
    ||M||_F^2 \leq m_1 + (d-m_1-m_0)\prod_{i=1}^k \left(s_*^{(i)}\right)^2
\end{gather}
\end{theorem}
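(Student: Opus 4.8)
The plan is to show that the extra norm-nonincreasing factors $R_i$, which act as the identity on the common unit eigenspace of the $T_i$, are harmless: they neither disturb the block structure that underlies Theorem \ref{thm:generic_normbound} nor inflate the operator-norm estimate, so the same bound carries over.

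First I would let $Y$ denote the common unit eigenspace of the $T_i$ (so $\dim Y = m_1$), which by hypothesis also sits inside the unit eigenspace of every $R_i$, and decompose the Hilbert space as $Y \oplus Y^\perp$. The structural claim to establish is that $M$ is block diagonal with respect to this splitting. It acts as the identity on $Y$ because every factor of $M$ does. To see that it maps $Y^\perp$ into itself, I would invoke Lemma \ref{lemma:leftright_eigenspace}: for any projector product $A$ whose unit eigenspace contains $Y$, the left and right unit eigenspaces coincide, so both $A$ and $A^\dagger$ act as the identity on $Y$; hence for $u \in Y$ and $v \perp Y$ we have $\langle u, Av \rangle = \langle A^\dagger u, v \rangle = \langle u, v \rangle = 0$, i.e. $Av \perp Y$. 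Feeding a vector of $Y^\perp$ through the factors of $M$ one at a time then keeps it in $Y^\perp$. Writing $M' = M|_{Y^\perp}$, the block-diagonal form gives $\|M\|_F^2 = m_1 + \|M'\|_F^2$.

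Next I would bound $\|M'\|_F^2 \le \operatorname{rank}(M')\,\|M'\|_\infty^2$. For the operator norm: restricted to $Y^\perp$ each $T_i$ has norm $s_*^{(i)}$ (its unit singular values all live on $Y$, again by Lemma \ref{lemma:leftright_eigenspace}), while each $R_i$ has norm at most $1$, so $\|M'\|_\infty \le \prod_{i=1}^k s_*^{(i)}$. For the rank: since every $T_i$ — in particular the distinguished one whose zero eigenspace has dimension $m_0$ — occurs as a factor of $M$, we have $\operatorname{rank}(M) \le d - m_0$, and with the block form this yields $\operatorname{rank}(M') \le d - m_0 - m_1$. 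Hence $\|M'\|_F^2 \le (d - m_1 - m_0)\prod_i (s_*^{(i)})^2$, and adding the $m_1$ unit singular values from the identity block on $Y$ gives the stated inequality. (Equivalently, one may deduce $s_*(M) \le \prod_i s_*^{(i)}$ from the same estimate on $M'$ and then quote Lemma \ref{lemma:normbound_from_ssv} verbatim.)

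The only real subtlety I expect is the block-diagonality of $M$, and in particular the claim that no weight leaks from $Y^\perp$ back into $Y$ — this is exactly where the hypothesis that $Y$ lies in the unit eigenspace of each $R_i$ (rather than merely $\|R_i\|_\infty \le 1$) is used, via the left/right eigenspace coincidence. Once that is in place the operator-norm and rank bookkeeping is routine and mirrors the proof of Theorem \ref{thm:generic_normbound}, with the $R_i$ simply absorbed into $\prod_i s_*^{(i)}$ as unit factors. A minor point to state carefully is that $m_0$ refers to the zero eigenspace of a particular $T_i$ that is genuinely one of the factors of $M$, which is automatic since $M$ contains all the $T_i$.
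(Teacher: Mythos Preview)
Your proposal is correct and matches the paper's approach: the paper's proof is a two-line remark that the $R_i$ preserve the common unit eigenspace and are norm-nonincreasing, so the argument of Lemma~\ref{lemma:product_ssv_bound} still goes through to give $s_*(M)\le\prod_i s_*^{(i)}$, after which Lemma~\ref{lemma:normbound_from_ssv} finishes the job. You have simply made explicit the block-diagonality with respect to $Y\oplus Y^\perp$ that the paper leaves implicit, and your parenthetical at the end is exactly the paper's route.
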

\begin{proof}
The \(R_i\) also preserve the unit eigenspace of \(T_i\) and are also norm-nonincreasing, so the proof of Lemma \ref{lemma:product_ssv_bound} still goes through. We can then again use Lemma \ref{lemma:normbound_from_ssv} to obtain the final formula.
\end{proof}

\begin{theorem}
Let \(T\) be a projector product with largest non-unit singular value \(s_*\). Let \(d, m_1, m_0\) be as before. Then
    \[||T^k||_F^2 \leq m_1 + (d-m_1-m_0)s_*^{2k}\addtag\]
\label{thm:periodic_normbound}
\end{theorem}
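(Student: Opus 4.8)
The plan is to recognize Theorem~\ref{thm:periodic_normbound} as the constant-sequence special case of Theorem~\ref{thm:generic_normbound}. Set $T_i = T$ for every $i \in \{1,\dots,k\}$. This sequence trivially satisfies the hypotheses of Theorem~\ref{thm:generic_normbound}: all the $T_i$ act on the same $d$-dimensional space, all share the same unit eigenspace (that of $T$, of dimension $m_1$), and each has zero eigenspace of dimension $m_0$. Moreover $s_*^{(i)} = s_*$ for every $i$, so $\prod_{i=1}^k \left(s_*^{(i)}\right)^2 = s_*^{2k}$. Plugging into the conclusion of Theorem~\ref{thm:generic_normbound} gives $||T^k||_F^2 = ||\prod_{i=1}^k T_i||_F^2 \leq m_1 + (d - m_1 - m_0)\, s_*^{2k}$, which is exactly the claim.

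Unwinding what Theorem~\ref{thm:generic_normbound} supplies, the argument is: by Lemma~\ref{lemma:product_ssv_bound}, since $T$ and all of its powers share the unit eigenspace (which by Lemma~\ref{lemma:leftright_eigenspace} coincides with the unit singular-value space), the subleading singular value is submultiplicative under composition, so $s_*(T^k) \leq s_*^{\,k}$. Then Lemma~\ref{lemma:normbound_from_ssv}, applied to the operator $T^k$, accounts for the $m_1$ unit singular values, the at-least-$m_0$ zero singular values (clear since $Tv=0 \Rightarrow T^k v = 0$, and such $v$ contributes nothing to the Frobenius norm), and bounds each of the remaining at most $d - m_1 - m_0$ singular values by $s_*(T^k)^2 \leq s_*^{2k}$.

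I do not expect a genuine obstacle here; the statement is a corollary of results already established in this appendix. The one point worth flagging is that $T$, being a product of orthogonal projectors, need not be Hermitian or even normal, so one cannot shortcut the bound by simply raising the eigenvalues of $T$ to the $k$-th power and invoking diagonalization; the substantive input is precisely the singular-value submultiplicativity of Lemma~\ref{lemma:product_ssv_bound}, which relies on $T$ and its powers sharing a unit eigenspace. Given that lemma, everything else is bookkeeping.
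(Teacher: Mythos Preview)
Your proposal is correct and matches the paper's own proof, which likewise observes that Theorem~\ref{thm:periodic_normbound} is the special case of Theorem~\ref{thm:generic_normbound} with all $T_i$ equal to $T$. Your additional unwinding of the argument via Lemmas~\ref{lemma:product_ssv_bound} and~\ref{lemma:normbound_from_ssv} is accurate and more explicit than the paper's one-line justification.
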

\begin{proof}
This is an immediate consequence of Theorem \ref{thm:generic_normbound} in the case where the \(T_i\) are all the same.
\end{proof}

\section{Proof of the 1D brickwork spectral gap for $t=2$}
\label{app:1d_t2_bound}

For a string of $N$ sites which we act a 1D brickwork on, we can consider the non-orthogonal basis
\begin{gather}
    |\vec{X} \rangle = \bigotimes_i |X_i\rangle_i
\end{gather}
where $X_i$ means the one of the two $k=2$ permutation states $|I\rangle$ (Identity) or $|S\rangle$ (Swap). This basis is complete on the image of any layer of the brickwork. We start with
\begin{lemma}
    Take $P_1$ to be the projector into the unit eigenspace of $T$. Then there exists a depth-independent constant $c_{X' X}$ such that
    \begin{gather}
        \langle \vec{X'} | T^\np |\vec{X}\rangle \leq \langle \vec{X'} | P_1 |\vec{X}\rangle + c_{X' X} \lambda_1^{\np}
    \end{gather} \label{lemma:1d_brickwork_bound}
    for every $\np > \frac{\log(N)}{-\log(\lambda_1)} + 1$, where $\lambda_1 = \left(\frac{2q}{q^2+1}\right)^4$
\end{lemma}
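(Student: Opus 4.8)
\emph{Reduction to one block.} The plan is to extract the unit-eigenspace part of the matrix element exactly, reduce the remainder to a power of a single non-normal operator, and bound that power by a domain-wall expansion whose radius of convergence is exactly the stated depth threshold. Work inside the span $\mathcal V$ of the $|\vec X\rangle$. Every brickwork layer projects onto a subspace of $\mathcal V$, so $\mathcal V$ is $T$-invariant and for each $\np\geq 1$ one has $T^{\np}|\vec X\rangle=(T|_{\mathcal V})^{\np}|\vec X\rangle\in\mathcal V$; thus it suffices to analyze $T|_{\mathcal V}$. By Lemma~\ref{lemma:leftright_eigenspace}, the unit eigenspace of $T$ — the two-dimensional span of $|I\rangle^{\otimes N}$ and $|S\rangle^{\otimes N}$ — is simultaneously the left and right unit eigenspace and the unit singular-value space, and $P_1$ is the orthogonal projector onto it. Hence $W:=\mathrm{range}(P_1)^{\perp}\cap\mathcal V$ is $T$-invariant — if $w\perp\mathrm{range}(P_1)$ then $\langle u|Tw\rangle=\langle T^{\dagger}u|w\rangle=0$ for all $u\in\mathrm{range}(P_1)$ — so $\mathcal V=\mathrm{range}(P_1)\oplus W$ is a $T$-invariant orthogonal decomposition on whose first summand $T$ acts as the identity, giving $(T|_{\mathcal V})^{\np}=P_1\oplus(T|_W)^{\np}$. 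Consequently
\[
\langle\vec X'|T^{\np}|\vec X\rangle-\langle\vec X'|P_1|\vec X\rangle=\langle(\mathbb 1-P_1)\vec X'\,|\,(T|_W)^{\np}\,|\,(\mathbb 1-P_1)\vec X\rangle ,
\]
whose magnitude is at most $||(T|_W)^{\np}||$ since the $|\vec X\rangle$ are unit vectors. It remains to show $||(T|_W)^{\np}||\leq c\,\lambda_1^{\np}$ once $\np>\log N/(-\log\lambda_1)+1$.

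\emph{Domain-wall structure.} For $t=2$, writing the single-gate projector in the $I/S$ basis, $G_{ij}$ fixes a pair of equal labels and sends $|IS\rangle,|SI\rangle\mapsto\tfrac{q}{q^2+1}(|II\rangle+|SS\rangle)$, so the number of $I/S$ domain walls is non-increasing under $T$. Thus $W$ is filtered by domain-wall number and $T|_W$ is block-triangular for it; the diagonal block on the $d$-domain-wall sector ($d$ even, $d\geq2$ under periodic boundary conditions) is, up to lower-triangular corrections, the transfer matrix of $d$ weakly interacting domain walls, with leading eigenvalue $\bigl(\tfrac{2q}{q^2+1}\bigr)^{2d}=\lambda_1^{d/2}$, which equals $\lambda_1$ in the minimal sector $d=2$. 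Establishing this — together with a power estimate $||(T_d)^{\np}||\leq\mathrm{poly}(N)\,\lambda_1^{d\np/2}$ that absorbs the non-normality of $T_d$ and the wall-annihilating off-diagonal pieces — is the technical heart of the proof, essentially the $t=2$ spectral-gap computation of ref.~\cite{Hunter-Jones2019}, and I expect it to be the main obstacle.

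\emph{Summation and threshold.} The $d$-domain-wall sector spans at most $\binom{N}{d}\leq N^{d}$ basis states, so its contribution to $||(T|_W)^{\np}||$ is, up to polynomial factors, at most $\sqrt{N^{d}}\,\lambda_1^{d\np/2}=(N\lambda_1^{\np})^{d/2}$; summing the block-triangular contributions over even $d\geq2$ yields a geometric-type series in $(N\lambda_1^{\np})^{1/2}$. This converges precisely when $N\lambda_1^{\np}<1$, i.e.\ $\np>\log N/(-\log\lambda_1)$, and the hypothesis $\np>\log N/(-\log\lambda_1)+1$ forces $N\lambda_1^{\np}\leq\lambda_1<1$, so the sum is at most $C\,N\lambda_1^{\np}/(1-\lambda_1)$ with $C$ depending only on $q$. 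Setting $c_{X'X}$ equal to this $\np$-independent (but $N$- and $q$-dependent) constant completes the proof, and the same mechanism explains why no such collapse is available below the threshold, where the exponentially many higher-wall sectors have not yet decayed.
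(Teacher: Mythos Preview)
Your orthogonal splitting $\langle\vec X'|T^{\np}|\vec X\rangle=\langle\vec X'|P_1|\vec X\rangle+\langle(\mathbb 1-P_1)\vec X'|(T|_W)^{\np}|(\mathbb 1-P_1)\vec X\rangle$ is correct, as is the observation that domain-wall number is non-increasing so $T|_W$ is block-triangular. But the step you yourself flag as ``the main obstacle'' is not merely a technicality you have deferred: it is the entire content of the lemma. You assert that the $d$-wall diagonal block has spectral radius $\lambda_1^{d/2}$ and then claim a power bound $\|(T_d)^{\np}\|\le\text{poly}(N)\,\lambda_1^{d\np/2}$, but for a non-normal contraction the spectral radius does not control $\|(T_d)^{\np}\|$ with a dimension-polynomial prefactor; your $\sqrt{N^d}$ factor has no justification. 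Worse, the off-diagonal (wall-annihilating) blocks couple all sectors, and bounding powers of a block-triangular matrix requires resolvent-type control on every diagonal block simultaneously, not just their spectral radii. You are essentially assuming the sector-by-sector spectral gap in order to prove the lemma that the paper uses to \emph{establish} the spectral gap, so the argument is either circular or, if you import the gap from elsewhere, still missing the passage from eigenvalues to operator-norm decay.

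The paper avoids all of this by never touching operator norms or spectra. It works directly with the matrix element as a sum of \emph{nonnegative} domain-wall trajectory weights: each trajectory factors as $w(\gamma)=w(\gamma_o)w(\gamma_c)$ into the weight of the walls that survive to depth $\np$ and those that annihilate; the final configuration depends only on $\gamma_o$; the closed part is bounded by its infinite-depth limit $W(p)=|P_1|\vec X_C(p)\rangle|$; and the open part obeys $\sum_{\gamma_o}w(\gamma_o)\le(2q/(q^2{+}1))^{2N_o(\np-1)}$ simply because each surviving wall has at most two moves per layer, each of weight $q/(q^2{+}1)$. The $N_o=0$ term reproduces $\langle\vec X'|P_1|\vec X\rangle$ exactly, and since periodic boundary conditions force $N_o\geq2$ otherwise, every remaining term decays at least as $\lambda_1^{\np}$. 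Positivity of all contributions means there are no cancellations to track and no non-normality to absorb; the bound is purely combinatorial. Your spectral route could perhaps be pushed through with enough additional machinery, but as written it leaves precisely the hard part undone.
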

Proof: We use a domain wall trajectory approach\cite{Dalzell2022}. Each gate sends $|II\rangle \rightarrow |II\rangle$, $|SS\rangle \rightarrow |SS\rangle$, and the non-uniform $|IS\rangle, |SI\rangle \rightarrow q/(q^2+1)(|II\rangle + |SS\rangle)$. The point is that the transfer matrix sends each configuration into a sum of other configurations, depending on the position of the I,S domain walls in the system. If a gate in the transfer matrix crosses a domain wall (and all domain walls will be crossed by a gate each layer after the 1st layer), it either moves left with weight $q/(q^2+1)$ or right with weight $q/(q^2+1)$. So we can represent the transfer matrix of $\np$ layers as a series of domain wall trajectories with their accompanying weights. A domain wall trajectory is a sequence of $\vec{X}^j$'s such that $\vec{X}^0 = \vec{X}$ and $\vec{X}^{d} = \vec{X}'$. Specifically we're looking out for the domain walls in each layer $\vec{X}^j$, because the total number of domain walls never increases - each domain wall either moves around or annihilates with a neighbor, possibly eventually reaching the steady states $|I\rangle^N, |S\rangle^N$ with no domain walls. We have, for domain wall trajectories $\gamma$ with final state $F_\gamma$ and weight $w(\gamma)$,
\begin{gather}
    \langle \vec{X}' | T^\np |\vec{X}\rangle = \sum_{\gamma} w(\gamma) \langle \vec{X}' | F_\gamma \rangle \label{eq:trajectory_start}
\end{gather}
We can categorize each domain wall trajectory according to which domain walls annihilate before the end of the circuit and which remain to the end. We can separate the total domain wall trajectory into the annihilating (``closed'') domain walls $\gamma_c$ from the ``open'' domain wall trajectories $\gamma_o$ that stay to the end. The weight of the whole domain wall trajectory is just a product of the two components:
\begin{gather}
    w(\gamma) = w(\gamma_o) w(\gamma_c)
\end{gather}
Moreover
\begin{lemma}
    The final state $F_\gamma$ is determined solely by the open component of the trajectory.
\end{lemma}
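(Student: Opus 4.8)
The plan is to reduce the whole statement to tracking a single bit. First observe that an $\{|I\rangle,|S\rangle\}$-string on the chain is completely determined by the set of edges carrying a domain wall together with the value at one fixed reference site $r$: on a cycle the number of domain walls is necessarily even, the configuration is then determined up to a global spin flip, and the one bit $X_r$ resolves that flip. The domain-wall set of $F_\gamma$ is, by definition, precisely the set of terminal edges of the open trajectories in $\gamma_o$, so it is enough to show that the reference value of $F_\gamma$ is a function of $\gamma_o$ (and of the fixed input $\vec X$, whose own reference value and domain-wall positions are of course known).

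Identify $|I\rangle\leftrightarrow +1$ and $|S\rangle\leftrightarrow -1$. The reference value flips under a layer exactly when a domain wall hops across site $r$ (from the edge immediately left of $r$ to the edge immediately right of it, or vice versa), and also when an annihilating pair collides at site $r$. Hence the reference value of $F_\gamma$ equals $X_r\,(-1)^{N_r}$, where $N_r$ counts all such crossings over the trajectory. Write $N_r=N_r^{\mathrm o}+N_r^{\mathrm c}$ according to whether the crossing is made by a surviving (open) wall or by a closed wall; $N_r^{\mathrm o}$ is visibly a function of $\gamma_o$ alone, so it remains to pin down $N_r^{\mathrm c}\bmod 2$.

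For this I would argue in the space-time diagram (space horizontal, layer index vertical). A closed pair $w_1,w_2$ with starting edges $c_i<c_j$ and collision site $p^*$ contributes the worldline of $w_1$ from $(c_i,0)$ to its final edge and that of $w_2$ from $(c_j,0)$ to its final edge, these two final edges being adjacent with $p^*$ the site between them; joining the worldlines by a short segment at the top and by the piece of the initial layer from $c_i$ to $c_j$ at the bottom produces a closed curve, which the vertical line through $r$ must meet an even number of times. The bottom segment contributes a meeting iff $r$ lies strictly between $c_i$ and $c_j$, and the top segment contributes one iff $p^*=r$; since the latter also occurs in $N_r^{\mathrm c}$ as the collision flip, it cancels mod $2$, leaving that this pair's contribution to $N_r^{\mathrm c}$ has the parity of $[\,c_i<r<c_j\,]$. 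Summing over all closed pairs gives $N_r^{\mathrm c}\equiv\sum_{\text{pairs}}[\,c_i<r<c_j\,]\pmod 2$, and a short counting argument — for any pairing of the closed-wall starting edges, the number of pairs straddling $r$ has the same parity as the number of closed-wall starting edges on one side of $r$ — identifies this with the parity of the number of closed walls to the left of $r$. That number depends only on $\vec X$ and on which of $\vec X$'s walls survive to the end, i.e. only on $\gamma_o$. Therefore the reference value of $F_\gamma$, and hence $F_\gamma$ itself, is determined by $\gamma_o$.

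The step I expect to be the real work is making the ``even number of crossings'' claim watertight under periodic boundary conditions, where ``between $c_i$ and $c_j$'' and ``on one side of $r$'' require a fixed cut of the cycle and the worldlines of a closed pair may wind around the resulting cylinder. A convenient fix is to cut the cycle at an edge never occupied by a wall during the trajectory, after which the open-chain Jordan-curve argument applies verbatim; when no such edge exists one argues directly on the cylinder with only minor modifications. The remaining ingredients — the per-layer rule for how a hop or a collision flips $X_r$, and the pairing-parity lemma — are routine.
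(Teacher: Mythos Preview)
Your approach is sound in outline but is a substantial detour compared to the paper's three-sentence argument. The paper does not fix a reference site at all. Instead it observes that the $I/S$ value on the site immediately to the right of a surviving wall is invariant along the entire trajectory: if that value ever changed, some other wall would have had to hop onto the surviving wall and annihilate it. Since the surviving wall's starting edge is part of $\vec X$, the value on its right is read off directly from $\vec X$, and the rest of $F_\gamma$ follows by flipping at each terminal wall position. In your language, the paper lets the reference site \emph{travel with an open wall}, so that $N_r=0$ automatically and the Jordan-curve and pairing-parity machinery never enters.

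Two remarks on your version. First, the extra ``collision at site $r$'' flip is spurious in this model: annihilations occur at edges, and the accompanying site flip is already one of the two walls' final hops, hence already counted in $N_r$. Correspondingly the top segment of your closed curve is a single edge-point and contributes no crossing of the site-$r$ line; the two errors cancel and your formula $N_r^{\mathrm c}\equiv\sum_{\text{pairs}}[c_i<r<c_j]\pmod 2$ survives. Second, the periodic-boundary difficulty you flag is precisely where the argument needs $N_o\ge 1$: the open-wall worldlines cut the cylinder into strips, forcing every closed pair's curve to be null-homotopic and making ``between $c_i$ and $c_j$'' unambiguous as ``within the same strip''. When $N_o=0$ no such confinement exists---e.g.\ $ISSI$ on four periodic sites can close to either $IIII$ or $SSSS$---so the lemma as literally stated fails there; the paper's proof likewise presupposes a surviving wall, and the $N_o=0$ partition is bounded separately later in that appendix.
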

\begin{proof}
    If we look at one of the surviving domain walls, the I/S values to the immediate left or right of the domain wall must remain the same through the whole trajectory (otherwise the domain wall would be annihilated). Then the rest of the final configuration can be determined by just flipping the sign when crossing every other domain wall. Because the signs to the immediate left and right of the domain wall were fixed by the original configuration $\vec{X}$, there is no global symmetry ambiguity either.
\end{proof}

With this we can separate the sum (\ref{eq:trajectory_start}) into a sum over the partitions $p$ that specify which domain walls are closed and which remain open. The partition in turn specifies the set of possible open and closed domain wall trajectories, $\mathcal{O}(p)$ and $\mathcal{C}(p)$:
\begin{align}
    \langle \vec{X}' | T^\np |\vec{X}\rangle &= \sum_p \sum_{\gamma_o \in \mathcal{O}(p)} \sum_{\gamma_c \in \mathcal{C}(p)} w(\gamma_o) w(\gamma_c) \langle \vec{X}' | F_\gamma \rangle\\
    &= \sum_p \sum_{\gamma_o \in \mathcal{O}(p)} w(\gamma_o)  \langle \vec{X}' | F_\gamma \rangle \left(\sum_{\gamma_c \in \mathcal{C}(p)} w(\gamma_c) \right)
\end{align}
Now we start to bound things. Firstly, 
\begin{lemma}
    $\sum_{\gamma_c \in \mathcal{C}(p)} w(\gamma_c)$ is bounded by the infinite depth limit, i.e. the sum of all the possible trajectories in an infinite depth circuit that start at the domain walls specified by $p$ but annihilate to either the uniform $|I\rangle$ or $|S\rangle$ states.
\end{lemma}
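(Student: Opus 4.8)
The plan is to exploit positivity: every trajectory weight is a product of nonnegative factors, so enlarging the set of trajectories summed over can only increase the total. First I would record that each elementary step of a domain-wall trajectory carries a nonnegative weight --- indeed a weight in $\{1,\, q/(q^2+1)\}$: a wall left in place by a layer contributes $1$, while a wall that a straddling gate pushes one site to the left or to the right (and, likewise, a pair of adjacent walls that annihilate under a single gate) contributes $q/(q^2+1)$. Consequently $w(\gamma_c)\ge 0$ for every closed trajectory $\gamma_c$, and similarly $w(\gamma_o)\ge 0$ and $w(\gamma)=w(\gamma_o)w(\gamma_c)\ge 0$.

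Second, fix the partition $p$, which designates which of the domain walls of $\vec{X}$ are ``closed''. A closed trajectory $\gamma_c\in\mathcal{C}(p)$ of the depth-$\np$ circuit is, by definition, a prescription that brings exactly the closed walls to mutual annihilation at some layer $j\le \np$, after which those walls no longer exist. I would then observe that this same prescription is a closed trajectory, of the \emph{same} weight, for an infinite-depth circuit started from the domain-wall configuration of $p$: the closed walls annihilate at layer $j$ exactly as before, and the further layers $j+1,j+2,\dots$ of the infinite circuit involve none of these walls and therefore do not appear in $\gamma_c$ at all. Thus $\mathcal{C}(p)$ for the depth-$\np$ circuit injects, preserving weights, into $\mathcal{C}_\infty(p)$, the set of infinite-depth closed trajectories that start from the walls of $p$ and annihilate them (so that the configuration relaxes to the uniform $|I\rangle^N$ or $|S\rangle^N$ on the affected segments).

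Finally, because all weights are nonnegative and the finite-depth index set is a subset of the infinite-depth one, monotonicity of sums of nonnegative terms over a countable index set yields
\[
\sum_{\gamma_c\in\mathcal{C}(p)} w(\gamma_c)\;\le\;\sum_{\gamma_c\in\mathcal{C}_\infty(p)} w(\gamma_c),
\]
which is the claim. I expect the only delicate point to be the bookkeeping in the injection: making it precise that an annihilated wall genuinely drops out of the trajectory, so that the extra temporal layers of the infinite circuit contribute nothing to $\gamma_c$ and there is no truncation artifact at the last layer of the finite circuit. Note one does not need the right-hand side to be finite here --- if it were infinite the inequality would be vacuous --- and establishing that it is in fact finite (a geometric-type estimate on the meeting times of the colliding random walkers traced out by the closed walls) is precisely what the subsequent lemmas accomplish.
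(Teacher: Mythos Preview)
Your proposal is correct and is essentially the same argument the paper gives: the finite-depth set $\mathcal{C}(p)$ is a subset of the infinite-depth closed trajectories, the weight of a trajectory that has already annihilated is unchanged by appending further layers, and nonnegativity of the weights then yields the inequality. You have simply made the positivity and the injection explicit where the paper leaves them implicit.
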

This is because $\mathcal{C}(p)$ is just a subset of all the possible closed trajectories, and increasing the depth on a trajectory that's already closed doesn't change its weight. The sum of weights in the infinite depth limit, in turn, has to be
\begin{gather}
    \lim_{\np \rightarrow \infty} \big|T^\np | \vec{X}_C(p)\rangle\big| = \big|P_1 | \vec{X}_C(p)\rangle\big|
\end{gather}
where $|\vec{X}_C(p)\rangle$ is the initial configuration given by only the closed domain walls, and $P_1$ is the unit eigenspace projector. That is, it's the component of $|\vec{X}_C(p)\rangle$ in the unit eigenspace of $T$, because the unit eigenspace is all that survives in the infinite depth limit. Let's call this infinite depth limit $W(p)$. 
We also have
\begin{lemma}
    \begin{gather}
        \sum_{\gamma_o \in \mathcal{O}(p)} w(\gamma_o) \leq \left(\frac{2q}{q^2+1}\right)^{2N_o (\np-1)}
    \end{gather}
    where $N_o$ is the number of open domain walls.
\end{lemma}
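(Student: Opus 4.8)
The plan is to treat each surviving (``open'') domain wall as an independent nearest-neighbour walk and to bound the total open weight wall-by-wall. Recall that, by definition of the partition $p$, an open domain wall is one that is never annihilated during the $\np$ periods of the circuit; since each period is two brickwork layers, the $\np$-period circuit has $2\np$ layers, and ``all domain walls are crossed by a gate each layer after the first,'' so at least $2\np-1\ge 2(\np-1)$ of the layers are \emph{active}. In an active layer every domain wall sits inside exactly one gate, which acts on the pattern $|IS\rangle$ or $|SI\rangle$ and sends it to $\tfrac{q}{q^2+1}\bigl(|II\rangle+|SS\rangle\bigr)$: the two terms correspond to the wall hopping one site left or one site right, each with weight $\tfrac{q}{q^2+1}$. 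If another open wall is immediately adjacent, one of these branches would annihilate the pair and is therefore forbidden, which only reduces the number of choices. Hence in each active layer an open wall picks up a factor $\tfrac{q}{q^2+1}$ and at most two directional choices.

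Next I would bound the contribution of a single open wall. Overcounting by summing over \emph{all} $\pm1$ walks of length $L$, where $L$ is the number of active layers, each of weight $\bigl(\tfrac{q}{q^2+1}\bigr)^{L}$, gives $\bigl(\tfrac{2q}{q^2+1}\bigr)^{L}$; since $q^2+1\ge 2q$ by AM--GM the base is at most $1$ and $L\ge 2(\np-1)$, so this is at most $\bigl(\tfrac{2q}{q^2+1}\bigr)^{2(\np-1)}$. Finally, the $N_o$ open walls retain their cyclic order for the entire trajectory (they can neither cross one another nor, being open, annihilate), so $\mathcal{O}(p)$ injects into the product of the single-wall trajectory sets, while $w(\gamma_o)$ factorizes as the product of the weights of the individual open walls' sub-trajectories (each gate-crossing moves a single wall). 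Therefore $\sum_{\gamma_o\in\mathcal{O}(p)} w(\gamma_o)\le \prod_{j=1}^{N_o}\bigl(\tfrac{2q}{q^2+1}\bigr)^{2(\np-1)}=\bigl(\tfrac{2q}{q^2+1}\bigr)^{2N_o(\np-1)}$, which is the claim.

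The only delicate point is the bookkeeping in the second step: one must verify that the per-layer, per-wall weight is genuinely $\le\tfrac{q}{q^2+1}$ with at most two directional choices even when open walls are close together (adjacency removes the annihilating branch and so only helps), and that summing the multiplicative weights over \emph{joint} trajectories cannot exceed the product of the single-wall sums, which is exactly what order-preservation of the open walls guarantees. The off-by-one in the active-layer count ($2\np$ or $2\np-1$ versus $2(\np-1)$) is immaterial precisely because $\tfrac{2q}{q^2+1}\le 1$, so a larger exponent only tightens the bound.
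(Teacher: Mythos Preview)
Your argument is correct and is essentially the paper's proof, just with the order of decomposition swapped: the paper builds the sum layer by layer, observing that appending one active layer multiplies the total open-trajectory weight by at most $\bigl(\tfrac{2q}{q^2+1}\bigr)^{N_o}$ (each of the $N_o$ walls has at most two moves, each carrying weight $\tfrac{q}{q^2+1}$), and then iterates over $2(\np-1)$ layers; you instead bound the contribution of each wall over all layers and then take the product over walls. Both are the same double-counting, and your handling of the off-by-one via $\tfrac{2q}{q^2+1}\le 1$ and of adjacency (which only removes branches) matches the paper's reasoning.
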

\begin{proof}
    We build up $\sum_{\gamma_o \in \mathcal{O}(p)} w(\gamma_o)$ layer by layer. At each layer beyond the first, each open trajectory $\gamma_o$ needs to move each of its $N_o$ domain walls. Each domain wall has at most 2 possible directions to move in (and could be less than 2 if other domain walls are blocking the way). No matter the direction, the domain wall acquires a weight $q/(q^2+1)$ by moving. So the total weight of all the possible trajectories created by adding a layer onto $\gamma_o$ is at most $w(\gamma_o)\left(\frac{2q}{q^2+1}\right)^{N_o}$. A transfer matrix is composed of two layers, so this is an extra factor of $\left(\frac{2q}{q^2+1}\right)^{2N_o}$ per transfer matrix applied. Extrapolating back to the first layer gives us the equation above.
\end{proof}

With these two inequalities, we have
\begin{align}
    \langle \vec{X}' | T^\np |\vec{X}\rangle &\leq \sum_p W(p) \sum_{\gamma_o \in \mathcal{O}(p)} w(\gamma_o)  \langle \vec{X}' | F_{\gamma_o} \rangle \\
    &\leq \sum_p W(p) x_F \sum_{\gamma_o \in \mathcal{O}(p)} w(\gamma_o) \\ 
    &\leq \sum_p W(p) x_F \lambda_1^{\half N_o (\np-1)}
\end{align}
For $x_F \equiv \max_{\gamma_o \in \mathcal{O}(p)} \langle \vec{X}' | F_{\gamma_o} \rangle$ and $\lambda_1 = \left(\frac{2q}{q^2+1}\right)^4$. The RHS is now a weighted sum of terms that are exponential in depth, with base $\lambda_1^{\half N_o}$ dependent on the number of open trajectories in the partition. The gentlest exponential is the single partition where $N_o=0$ and all the domain walls are closed - in that case $W(p)$ is the component of $|\vec{X}\rangle$ in the unit eigenspace and $\left[\max_{\gamma_o \in \mathcal{O}(p)} \langle \vec{X}' | F_{\gamma_o} \rangle\right]$ is the component of $|\vec{X'}\rangle$ in the unit eigenspace. This term is therefore bounded by $\lim_{\np \rightarrow \infty} \langle \vec{X}' | T^\np |\vec{X}\rangle$, which by Lemma~\ref{lm:cluster} is the same as $\langle \vec{X}' | P_1 |\vec{X}\rangle$

The rest of the partitions have $N_o \geq 2$ in periodic boundary conditions. Hence they decay at a rate $\lambda_1^{\np}$ or faster. Specifically, we have
\begin{gather}
    \langle \vec{X}' | T^\np |\vec{X}\rangle \leq \langle \vec{X}' | P_1 |\vec{X}\rangle + c_{X' X} \lambda_1^{\np}
\end{gather}
for some depth-independent constant $c_{X' X}$ (note that while $\max_{\gamma_o \in \mathcal{O}(p)} \langle \vec{X}' | F_{\gamma_o}\rangle$ is depth dependent, it's bounded above by 1).
Now we can prove that $\lambda_1 = \left(\frac{2q}{q^2+1}\right)^4$ genuinely is the subleading eigenstate of $T$, using
\begin{lemma}
    For a complete basis $|\vec{X}\rangle$, if $\langle \vec{X}' | T^\np | \vec{X} \rangle \leq \langle \vec{X}' | P_1 | \vec{X} \rangle + c_{X' X} \lambda_1^\np$, and $\langle \vec{X}' | T^\np | \vec{X} \rangle \geq \langle \vec{X}' | P_1 | \vec{X} \rangle$, then $T^\np$ has no eigenstate $|\psi_2\rangle$ with eigenvalue $\lambda_1 < \lambda_2 < 1$.\label{lemma:1dgapconfirmed}
\end{lemma}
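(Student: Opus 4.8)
I would prove the contrapositive-flavored statement that the stated matrix-element bound forces every non-unit eigenvalue of $T$ to have modulus at most $\lambda_1$. Suppose instead that $T$ has an eigenvector $|\psi_2\rangle$ with eigenvalue $\lambda_2\in(\lambda_1,1)$. Since $\lambda_2\neq0$ we have $|\psi_2\rangle\in\mathrm{Im}\,T$, and because $T$ is a product of two layer projectors, each of which maps into the span $\mathcal H_X$ of the configuration states $|\vec X\rangle$, in fact $|\psi_2\rangle\in\mathcal H_X$. By Lemmas~\ref{lemma:leftright_eigenspace} and~\ref{lemma:eig_is_ortho}, $|\psi_2\rangle$ is orthogonal to the unit eigenspace, so $P_1|\psi_2\rangle=0$ and hence $(T^\np-P_1)|\psi_2\rangle=\lambda_2^\np|\psi_2\rangle$. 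Since both $T$ and $P_1$ leave $\mathcal H_X$ invariant, this shows that $T^\np-P_1$, regarded as an operator on $\mathcal H_X$, has operator norm at least $\lambda_2^\np$.

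The second step is to bound that same operator norm from above using the hypothesis. The two one-sided bounds together give $|\langle\vec X'|(T^\np-P_1)|\vec X\rangle|\le c_{X'X}\lambda_1^\np$ for all sufficiently large $\np$ and all configurations $\vec X',\vec X$; let $c_{\max}$ be the (finite) maximum of the $c_{X'X}$ over the finitely many pairs. For $q\ge2$ the states $|\vec X\rangle$ are linearly independent, so their Gram matrix $\mathcal G$ is invertible, and standard condition-number estimates turn a uniform bound on all matrix elements of an operator on $\mathcal H_X$ into a bound on its operator norm, at the cost of an $\np$-independent factor depending only on $N$. Thus $\|T^\np-P_1\|_{\mathrm{op}}\le C_N\lambda_1^\np$. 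Combining the two estimates, $\lambda_2^\np\le C_N\lambda_1^\np$ for all large $\np$, i.e.\ $(\lambda_2/\lambda_1)^\np\le C_N$, which is impossible since $\lambda_2>\lambda_1$. This contradiction proves the lemma, and in fact rules out any non-unit eigenvalue of modulus exceeding $\lambda_1$; together with a two-domain-wall configuration realizing $\lambda_1=(2q/(q^2+1))^4$, this pins the subleading eigenvalue of $T$ exactly.

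The step I expect to be the main obstacle is this passage from componentwise control of matrix elements to operator-norm control in the non-orthogonal basis: one must confirm that $\mathcal H_X$ is genuinely $T$-invariant and contains $\mathrm{Im}\,P_1$, that the $|\vec X\rangle$ are linearly independent (fine for $q\ge2$), and that $\mathrm{cond}(\mathcal G)$ contributes only an $\np$-independent constant. An alternative that avoids the Gram matrix is to expand $\langle\vec X'|(T^\np-P_1)|\vec X\rangle$ via the Jordan decomposition of $T|_{\mathcal H_X}$: if the non-unit spectral radius $\rho$ exceeds $\lambda_1$, then for a suitable pair $\vec X',\vec X$ this matrix element behaves like $\np^{m-1}\rho^\np$ times an almost-periodic prefactor that is not identically zero in $\np$ (with $m$ the largest Jordan block at modulus $\rho$), again contradicting the $O(\lambda_1^\np)$ bound; there the only delicate point is showing the prefactor does not vanish for all $\np$, which follows from a Cesàro-average argument together with completeness of the $|\vec X\rangle$.
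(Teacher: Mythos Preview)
Your argument is correct and somewhat cleaner than the paper's. The paper proceeds more elementarily: it picks a single pair $(\vec X,\vec X')$ such that $|\vec X\rangle$ has a nonzero $|\psi_2\rangle$-component in an eigen-expansion and $\langle\vec X'|\psi_2\rangle\neq0$, then argues that the term $a_2\lambda_2^k\langle\vec X'|\psi_2\rangle$ eventually dominates and violates either the upper bound (if the coefficient is positive) or the lower bound (if it is negative). That is why the paper needs the two-sided hypothesis---the lower bound handles the negative-overlap case---whereas in your operator-norm route the lower bound enters only to turn the one-sided inequality into the absolute-value bound $|\langle\vec X'|(T^k-P_1)|\vec X\rangle|\le c_{X'X}\lambda_1^k$. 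Your approach buys robustness: it does not assume diagonalizability of $T$, it automatically covers complex or negative eigenvalues of modulus exceeding $\lambda_1$, and it avoids the paper's implicit ``$\lambda_2$-term eventually dominates the rest'' step. The price is the Gram-matrix argument, which indeed goes through here since for $t=2$ and $q\ge2$ the configuration states are linearly independent and $\lambda_{\min}(\mathcal G)$ contributes only a $k$-independent factor; your alternative Jordan-block argument is essentially a rigorous version of what the paper sketches.
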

\begin{proof}
Suppose there exists such an eigenstate $|\psi_2\rangle$. Then because $|\vec{X}\rangle$ is complete, there must exist some $\vec{X}$ which has a nonzero component of $|\psi_2\rangle$ in its eigenstate decomposition of $T$
\begin{gather}
    |\vec{X}\rangle = a_1 |\psi_1\rangle + a_2 |\psi_2\rangle + ..., \qquad a_2 \neq 0
\end{gather}
Moreover, there must exist some $\vec{X'}$ which has a nonzero overlap with $|\psi_2\rangle$, i.e. $\langle \vec{X'} | \psi_2 \rangle \neq 0$. Then we have
\begin{align}
    \langle \vec{X}' | T^\np |\vec{X}\rangle &= a_1 1^\np \langle \vec{X'} |\psi_1\rangle + a_2 \lambda_2^\np \langle \vec{X'} |\psi_2\rangle + ...\\
    &= 1^\np \langle \vec{X'} |P_1 | \vec{X}\rangle + a_2 \lambda_2^\np \langle \vec{X'} |\psi_2\rangle + ...
\end{align}
In particular, because $\lambda_2 > \lambda_1$, for any constant $c_{X' X}$ there must be some $\np$ for which $\langle \vec{X}' | T^\np | \vec{X} \rangle > \langle \vec{X}' | P_1 | \vec{X} \rangle + c_1 \lambda_1^\np$. This is a contradiction of our assumption, so no such $\lambda_2$ can exist. 

One caveat is that the eigenstate overlap $a_2 \langle \vec{X'} |\psi_2\rangle$ could be negative instead. Which is where the lower bound comes in - provided $\lambda_2$ is the largest subleading eigenvalue, there must also exist some $\np$ for which $a_2 \lambda_2^\np \langle \vec{X'} |\psi_2\rangle$ overtakes every nonunit term in the sum and the lower bound is violated instead. This lower bound is naturally satisfied in our case because $P_1$ is the infinite depth limit of $T^\np$, but $T$ is contractive, so adding more layers to $\langle \vec{X}' | T^\np |\vec{X}\rangle$ always decreases its norm. Because it's a sum of positive trajectories, it's also positive, so the value of $\langle \vec{X}' | T^\np |\vec{X}\rangle$ cannot increase as it goes to $P_1$. 
\end{proof}

Once we have $\lambda_1 = \left(\frac{2q}{q^2+1}\right)^4$, by Lemma~\ref{lemma:eigvsing}, $s_* = \left(\frac{2q}{q^2+1}\right)^2$. Therefore, our trace decays at a rate $C(q,2) = \frac{1}{2\log(1/s_*)} = \frac{1}{4\log \frac{q^2+1}{2q}}$.

\section{Mapping incomplete to complete layers for non-integer \(\sqrt{q}\)}
\label{app:irrational_q}

\newcommand{\splitx}[1]{{X_{\! \! \!\sqrt{#1}}}}

If \(\sqrt{q}\) is not an integer, we cannot draw Haar-random \(\sqrt{q} \times \sqrt{q}\) unitaries. There is thus no such thing as the 1D brickwork circuit ensemble. So how can we apply the site-splitting trick used in the proof of Theorem \ref{thm:periodic_incomplete}? Instead of defining \(s_\text{1D}(q)\) to be the subleading singular value of the transfer matrix corresponding to some underlying circuit ensemble, we will define the transfer matrix directly. 

The first step is to rephrase our site-splitting strategy from something done in the quantum circuits to something done at the level of transfer matrices. Consider a set of vector spaces labeled by a positive real-valued $r>1$:
\begin{gather}
    X_r = \text{span}\{\ket{\sigma}_{r}| \; \sigma \in S_t\}
\end{gather}
where each vector space is equipped with a basis $|\sigma\rangle_{X_r}$ and an inner product
\begin{gather}
    \braket{\sigma|\tau}_{r} = r^{|\sigma \tau^{-1}|}
\end{gather}
Here \(|\sigma|\) is the length of the cycle structure of \(\sigma\). Note that this inner product is positive semidefinite only for integer \(r\). 

Now, consider the mapping \(V: X_r \rightarrow \splitx{r}^{\otimes 2}\) defined on basis elements by
\[V\ket{\sigma}_{r} = \ket{\sigma}_{\sqrt{r}}^{\otimes 2}\addtag\]
We first show that $V$ is an isometry. Compute
\[\Big\langle V\ket{\sigma}, V \ket{\tau}\Big\rangle_{\sqrt r} = \braket{\sigma \sigma |\tau \tau}_{\sqrt{r}} = \left(\sqrt{r}^{|\sigma \tau^{-1}|}\right)^2\addtag\]
and
\[\braket{\sigma|\tau}_r = r^{|\sigma \tau^{-1}|}\addtag\]
which are the same. This implies that the restriction of the metric on $\splitx{r}^{\otimes 2}$ to the image of \(V\) is positive semidefinite (for integer $r$). 

For each $X_r$, let us define the $k$-site gate $G_k^{(X_r)}: X_r^{\otimes k} \rightarrow X_r^{\otimes k}$ as the following projector onto the span of $\{|\sigma\rangle_{r}^{\otimes k} | \; \sigma \in S_t\}$:
\begin{gather}
    G_k^{(X_r)} = \sum_{\sigma, \tau} |\tau\rangle^{\otimes k} \Wg(r^k)_{\sigma \tau} \langle \sigma|^{\otimes k}
\end{gather}
Where $\Wg(q)$ is the Moore-Penrose pseduoinverse of the metric $g(q)_{\sigma \tau} = \langle \sigma | \tau \rangle_r$. This formula reproduces the usual $k$-site gate when $r$ is an integer. 
We now observe the following:
\begin{lemma} \label{lemma:site_splitting_gate}
    If we replace every $G_k^{X_r}$ in the transfer matrix with $G_{2k}^{\splitx{r}}$, the singular values do not change.
\end{lemma}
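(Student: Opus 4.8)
\emph{Proof strategy.} The plan is to promote the isometry $V$ to a global isometry $W = V^{\otimes N}\colon X_r^{\otimes N} \to \splitx{r}^{\otimes 2N}$ (a tensor power of an isometry is an isometry, and $V$ was just shown to preserve the metric), and then to show that the site-split transfer matrix $T'$ is exactly the conjugate $W T W^{\dagger}$ of the original transfer matrix $T$. Conjugation by an isometry leaves the multiset of nonzero singular values unchanged -- it merely appends zeros coming from the larger ambient space -- so this immediately gives the lemma and in particular preserves the subleading singular value used elsewhere. Throughout, $\dagger$ denotes the adjoint with respect to the (possibly degenerate) forms and every $G$ is the orthogonal projector furnished by the Moore--Penrose construction given earlier in this appendix; the argument below uses only $W^{\dagger} W = I$ and the fact that $G^{(\splitx{r})}_{2k}$ projects onto $\mathrm{span}\{\ket{\sigma}_{\sqrt r}^{\otimes 2k}\}$, never any positivity.

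The heart of the argument is the single-gate intertwining identity $G^{(\splitx{r})}_{2k}\, V^{\otimes k} = V^{\otimes k}\, G^{(X_r)}_{k}$, which I would establish from three observations. First, $V^{\otimes k}$ sends $\ket{\sigma}_r^{\otimes k}$ to $\ket{\sigma}_{\sqrt r}^{\otimes 2k}$, so it maps $\mathrm{Im}\, G^{(X_r)}_k = \mathrm{span}\{\ket{\sigma}_r^{\otimes k}\}$ bijectively onto $\mathrm{Im}\, G^{(\splitx{r})}_{2k} = \mathrm{span}\{\ket{\sigma}_{\sqrt r}^{\otimes 2k}\}$. Second, $\mathrm{Im}\, G^{(\splitx{r})}_{2k} \subseteq \mathrm{Im}\, V^{\otimes k}$ (each $\ket{\sigma}_{\sqrt r}^{\otimes 2k}$ is the image of $\ket{\sigma}_r^{\otimes k}$), so $\mathrm{Im}\, V^{\otimes k}$ is invariant under the orthogonal projector $G^{(\splitx{r})}_{2k}$, and since that projector is self-adjoint its orthogonal complement is invariant too. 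Third, the restriction of $G^{(\splitx{r})}_{2k}$ to $\mathrm{Im}\, V^{\otimes k}$ is therefore still an orthogonal projector, onto $\mathrm{Im}\, G^{(\splitx{r})}_{2k}$; pulling it back along the isometry $V^{\otimes k}$ gives the orthogonal projector onto the preimage of that subspace, which by the first observation is exactly $\mathrm{Im}\, G^{(X_r)}_k$, i.e.\ $G^{(X_r)}_k$. Combining these, and using that $V^{\otimes k} V^{\otimes k\,\dagger}$ acts as the identity on the invariant subspace $\mathrm{Im}\, V^{\otimes k}$, yields the identity. The $k=1$ case says that splitting a single-site gate produces a two-site projector onto the defining subspace $V(X_r)$ of one tensor factor of $\mathrm{Im}\, W$.

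I would then chain the intertwining relation over all the gates of one period: writing $T$ as the ordered product of its $G^{(X_r)}_k$'s and $T'$ as the same product with each replaced by $G^{(\splitx{r})}_{2k}$, the relation slides $W$ past each factor, giving $T' W = W T$. It remains to show $T'(I - W W^{\dagger}) = 0$, i.e.\ that $T'$ annihilates $(\mathrm{Im}\, W)^{\perp}$. For this I would use the hypothesis -- arrangeable by absorbing single-site gates, which is precisely the reduction made just before this lemma -- that the first layer of the period touches every site. After splitting, each first-layer gate has image contained in the tensor product of the subspaces $V(X_r)$ on its sites, so the image of the (commuting) product of all first-layer gates lies in $\bigotimes_a V(X_r) = \mathrm{Im}\, W$; since $T'$ factors through this first layer, $\mathrm{Im}\, T'^{\dagger} \subseteq \mathrm{Im}\, W$, equivalently $\ker T' \supseteq (\mathrm{Im}\, W)^{\perp}$. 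Hence $T' = T' W W^{\dagger} + T'(I-WW^{\dagger}) = W T W^{\dagger}$, and the conclusion follows as explained above.

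The main obstacle is keeping the degenerate-metric bookkeeping honest: for non-integer $r$ one cannot speak of Hilbert spaces, projectors, or singular values in the naive sense, so each step -- that $W$ is an isometry, that $G^{(\splitx{r})}_{2k}$ is the orthogonal projector onto its image, that invariant subspaces of a self-adjoint projector have invariant complements, and that conjugation by an isometry preserves singular values -- has to be re-verified for the Gram/pseudoinverse structure already set up, with no appeal to positivity. A minor secondary point is checking that the ``first layer touches every site'' assumption is genuinely without loss of generality in the situations where this lemma is applied.
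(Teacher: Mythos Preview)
Your proof is correct and closely related to the paper's, but organized differently. The paper works from $T$ outward: it substitutes $G_k^{X_r} = V^{\dagger\otimes k} G_{2k}^{\splitx{r}} V^{\otimes k}$ into the transfer matrix, absorbs the internal $VV^\dagger$ pairs using $VV^\dagger = G_2^{\splitx{r}}$ (which is swallowed by any adjacent $G_{2k}^{\splitx{r}}$), and finally handles the remaining boundary $V$'s and $V^\dagger$'s by cycling them inside $T^\dagger T$. You instead work from $T'$ inward: you prove the stronger intertwining $G_{2k}^{\splitx{r}} V^{\otimes k} = V^{\otimes k} G_k^{X_r}$, chain it through all gates to obtain $T'W = WT$, and then dispose of the complement of $\mathrm{Im}\,W$ by noting that the first split layer already lands in $\mathrm{Im}\,W$; this yields the clean operator identity $T' = W T W^\dagger$, from which equality of nonzero singular values is immediate. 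Your route buys a slightly stronger conclusion (an explicit conjugation identity, not just matching spectra) and formally needs only that the \emph{first} layer touch every site, whereas the paper's cycling argument implicitly needs both the first and last layers to absorb the two $WW^\dagger$ factors that appear; in the intended application single-site gates are inserted everywhere, so both conditions hold automatically. The indefinite-metric caveat you flag is genuine and applies equally to the paper's version of the argument.
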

\begin{proof} 
Let us define a map \(V^{\dagger}\) by
\begin{gather}
    V^\dagger = \sum_{\sigma \tau} \ket{\sigma}_r \Wg(r)_{\sigma \tau} \bra{\tau \tau}_{\sqrt{r}}
\end{gather}
so that
\[V^\dagger \ket{\tau \tau}_{\sqrt{r}} = \ket{\tau}_{r}\]
This map is an adjoint of \(V\) on the image of \(V\), i.e.
\[\bra{\sigma}_r V^\dagger \ket{\tau \tau}_{\sqrt{r}} = \bra{\tau \tau}_{\sqrt{r}} V \ket{\sigma}_r \addtag\]
for all  \(\tau,\sigma\). 

From the definition of \(V\), we see
\[G_k^{X_r} = V^{\dagger \otimes k} G_{2k}^{\splitx{r}} V^{\otimes k}\addtag\]
We may thus rewrite the transfer matrix $T$ by replacing \(G_k^{X_r}\) with the above expression for each gate. Furthermore, \(V V^\dagger = G_{1}^{\splitx{r}}\). Factors of \(G_1\) may be absorbed into \(G_k\) from either side, which means in particular
\[G_{2k}^{\splitx{r}} = (V V^{\dagger})^{\otimes k} G_{2k}^{\splitx{r}} = G_{2k}^{\splitx{r}} (VV^\dagger)^{\otimes k}\addtag\]
We may thus pair up the copies of \(V\) that appear on internal legs of the transfer matrix and absorb them into the adjacent \(G_{2k}^{\splitx{r}}\). The singular values of $T$ are the nonzero eigenvalues of $T^\dagger T$, so the copies of \(V\) which appear on input legs can be cycled to the output and cancelled against the corresponding copies of \(V^\dagger\) without changing the singular values.
\end{proof}

If we return to our original transfer matrix, we see that we can identify each site as a member of $X_q$, and each two-site gate as a copy of $G_2^{X_q}$. Moreover, we can freely apply copies of $G_1^{X_q}$ (the averaged one-site gate) to any site wherever we want, as it is just the identity on that vector space. 
Now we apply the isomorphism $V$ to map each site from an element of $X_q$ to an element of the doubled space $\splitx{q}^{\otimes 2}$ - converting each site into a pair of sites, or \textit{twinned sites}, in the process. From Lemma~\ref{lemma:site_splitting_gate}, we see that the transfer matrix can be rewritten in this new vector space, without changing any singular values, by replacing \(G_k^{X_q}\) with \( G_{2k}^{\splitx{q}}\) everywhere it appears.

In the particular case of an incomplete circuit, we obtain a transfer matrix consisting of \(G_1^{X_q}\) and \(G_2^{X_q}\). We then apply the isomorphism described above to map
\(G_1^{X_q} \rightarrow G_2^{\splitx{q}}\) and
\(G_2^{X_q} \rightarrow G_4^{\splitx{q}}\).

\begin{figure}
    \centering
    \begin{tikzpicture}
        \begin{scope}
            \node[anchor=north west,inner sep=0] (image_a) at (0,0)
            {\includegraphics[width=0.9\columnwidth]{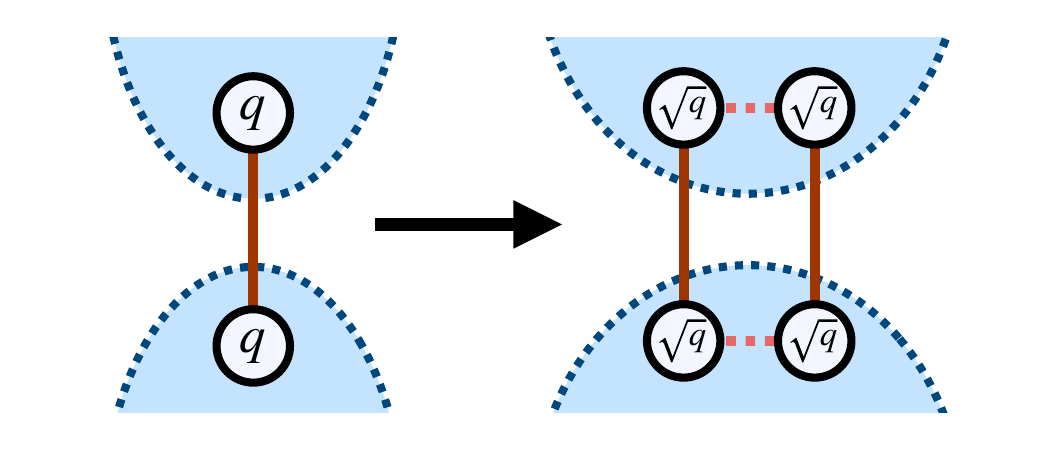}};
        \end{scope}
    \end{tikzpicture}
    \vspace*{-0.6cm}
    \caption{A two-site gate being split under the $X_q \rightarrow \splitx{q}^{\otimes 2}$ isomorphism. Each site on $X_q$ becomes a pair of twinned sites (pink dotted lines) on $\splitx{q}$. Note that the dimensionality of each site does not necessarily decrease - only the metric between the basis vectors changes. If we assume that the two-site gate spans two different clusters (blue dotted circles), then the layer-restricted SSV of a hyperedge connecting all four split sites is equal to two pairs of edges connecting non-twinned sites.}
    \label{fig:site_splitting}
\end{figure}

The former map sends $1$-site gates to $2$-site gates; these correspond to edges across two twinned sites. The first layer of the original circuit involved a one- or two-site gate acting on every site, so the split pairs sites are always joined back into the same cluster by the first layer. The new $4$-site gates are harder to express in the cluster-merging picture. However, we know that a mini-circuit of $2$-site gates $G_2^{\splitx{q}}$, applied in a way that connects all the sites consistently, will approach $G_4^{\splitx{q}}$ in the limit of infinite layers. In particular, we can replace each $4$-site gate $G_4^{\splitx{q}}[a_1, a_2, b_1, b_2]$ acting over sites $a_1...b_2$ (where $a_1$ and $a_2$ are twinned, and so are $b_1$ and $b_2$) with an arbitrarily large ``Jenga tower'' of the gates \small
\begin{gather*}
    \left(G_2^{\splitx{q}}[a_1, a_2] \otimes G_2^{\splitx{q}}[b_1, b_2] \right) \left(G_2^{\splitx{q}}[a_1, b_1] \otimes G_2^{\splitx{q}}[a_2, b_2] \right)
\end{gather*} \normalsize
repeated over and over again. In the cluster-merging picture, only the bottom layer of this tower is capable of joining distinct clusters together. The other layers either join twinned sites which were part of the same cluster to begin with, or are copies of the bottom layer. So every layer above the first has a completely disconnected cluster-merging graph, which corresponds to \(\mathscr{s} = 0\). These layers do not contribute to the right-hand side of Theorem~\ref{thm:singvals1}. So each $4$-site gate can be replaced with two $2$-site gates connecting non-twinned members together without increasing the subleading singular value of the transfer matrix. This process is illustrated in Fig.~\ref{fig:site_splitting}.

We see that we can replace the cluster-merging picture with one where every site is replaced by twins on $\splitx{q}$, and every $2$-site gate is replaced by a pair of $2$-site gates between non-twinned sites. 
Moreover, we can freely apply $2$-site gates between any twinned members where no gate was being originally applied. So all sites in the new picture have two-site gates acting on them; the layer is complete. Since all twinned sites belong to the same cluster, all clusters are of even size. These were the two conditions required to draw an Eulerian cycle on each connected component of the graph. We can now use Lemma~\ref{lemma:merge} to reduce this graph to a periodic 1D brickwork composed of gates $G_2^{\splitx{q}}$ acting on the space $\splitx{q}^{\otimes 2N}$. Any previous work which has found a bound for the $\sqrt{q}$ brickwork therefore imposes a bound on the layer-restricted subleading singular values of arbitrary cluster-merging graphs.

\section{Graph-splitting bounds for incomplete layers}
\label{app:treebound}
In this section we consider strategies for bounding cluster-merging graphs which originate from incomplete layers, without resorting to 1D brickwork bounds on non-integer $q$.

\subsection{Analytical bound in terms of node degree}
By removing edges according to Lemma~\ref{lemma:add_new_link}, we can reduce any connected cluster-merging graph into a spanning tree. Let $d$ be the maximum degree of the tree. 
To bound this tree's singular value, we use the following tool for incomplete graphs:
\begin{lemma}
    The edges of a cluster-merging graph can be split into separate layers without lowering the subleading singular values.
\end{lemma}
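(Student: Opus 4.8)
The plan is to treat this as a direct corollary of Theorem~\ref{thm:singvals1} (= Theorem~\ref{thm:singvals}) together with Lemma~\ref{lm:cluster}, once the sub-layers are encoded as the right list of subspaces. Write the cluster-merging graph as a node projector $Q$ (the tensor product of uniform-permutation-state projectors on the current clusters) together with an edge set $E$, each edge $e$ carrying the averaged gate projector $G_e$; the graph's subleading singular value is $\mathscr{s} = s_*(PQ)$ with $P = \prod_{e\in E} G_e$. Partition $E = E_1 \sqcup \cdots \sqcup E_r$ in any way, and let $P^{(j)}$ be the orthogonal projector onto $\bigcap_{e\in E_j}\operatorname{range}(G_e)$ — the projector of the sub-layer carrying the edges of $E_j$. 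The first step is the elementary observation that distinct gates of a single circuit layer, complete or not, act on pairwise-disjoint pairs of sites, so all the $G_e$ ($e\in E$) commute; hence the $P^{(j)}$ commute and $P^{(r)}\cdots P^{(1)}$ is the orthogonal projector onto $\bigcap_{e\in E}\operatorname{range}(G_e) = \operatorname{range}(P)$, i.e. $P^{(r)}\cdots P^{(1)} = P$.

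Next I would apply Theorem~\ref{thm:singvals1} to the subspaces $X_0 = \operatorname{range}(Q)$ and $X_j = \operatorname{range}(P^{(j)})$, $j = 1,\dots,r$. Its transfer matrix is $T = P^{(r)}\cdots P^{(1)} Q = PQ$, so $s_*(T) = \mathscr{s}$ is exactly the subleading singular value of the unsplit graph, and the theorem gives $\mathscr{s}^2 \le 1 - \prod_{j=1}^{r}\bigl(1 - \mathscr{s}_j^2\bigr)$, where $\mathscr{s}_j$ is the subleading singular value of $P^{(j)} Q^{(j-1)}$ and $Q^{(j-1)}$ is the orthogonal projector onto $X_0 \cap X_1 \cap \cdots \cap X_{j-1}$. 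Since the right-hand side is $\ge \mathscr{s}^2$, processing the edges one sub-layer at a time — and combining the layer-restricted singular values through Theorem~\ref{thm:singvals1} — cannot lower the resulting bound, which is the content of the lemma.

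What remains, and is the only place requiring care, is to check that each $\mathscr{s}_j$ is itself the subleading singular value of a bona fide cluster-merging graph, so that the bound can actually be iterated. By Lemma~\ref{lm:cluster}, the intersection $X_0 \cap \cdots \cap X_{j-1}$ is the unit eigenspace of the partial network obtained by applying the gates of $E_1 \cup \cdots \cup E_{j-1}$ on top of $Q$; that network is a (possibly disconnected) transfer matrix whose connected components are precisely the clusters formed by merging the original nodes along those edges, so $Q^{(j-1)}$ is the tensor product of uniform-permutation-state projectors over the merged clusters and $P^{(j)} Q^{(j-1)}$ is the operator of the cluster-merging graph with those clusters as nodes and $E_j$ as edges. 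I expect the subtlety here to be purely bookkeeping: some edges of $E_j$ may join two sites that already lie in a common merged cluster and have thus become internal, contributing $\mathscr{s}_j = 0$; one must note that such edges may be dropped without changing $\mathscr{s}_j$ because a uniform-permutation projector on a pair of sites within a cluster acts as the identity on that cluster's uniform-permutation span. With this verified the lemma follows, and in the application one takes the $E_j$ to be the color classes of a proper edge coloring of the spanning tree, so that each sub-layer is a matching of two-cluster merges.
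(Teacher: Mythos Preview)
Your proposal is correct and rests on the same key observation as the paper: the edge gates within a single layer act on pairwise-disjoint site pairs and therefore commute, so $P^{(r)}\cdots P^{(1)} Q = PQ$ and the operator --- hence its subleading singular value --- is literally unchanged by the splitting. The paper's proof is just that one line; your additional work (invoking Theorem~\ref{thm:singvals1} on the split product and verifying via Lemma~\ref{lm:cluster} that each intermediate $Q^{(j-1)}$ is itself the node-projector of a bona fide cluster-merging graph) goes beyond what the lemma strictly asserts but is exactly the bookkeeping needed to feed the split layers back into the main bound, so it is worth keeping.
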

\begin{proof}
This follows from a simple reinterpretation of which gates belong to the same layer. We know that the gates corresponding to each edge in the cluster-merging graph commute with each other. Therefore, we are allowed to choose which gates to apply first. Splitting a layer merely means choosing a subset of gates to apply first, then choosing another subset to apply in the next new layer, and so on. This does not lower the subleading singular values because it does not change the gates at all.
\end{proof}
Note that splitting a layer will increase the block width $\ell$, which so our bound on the subleading singular value of the overall transfer matrix via Theorem \ref{thm:singvals} will get looser when a layer is split. Nonetheless, this will be useful for separating the degree-$d$ tree graph into reducible parts: 
\begin{theorem} \label{thm:treebound}
    A cluster merging graph of size $N$ with a spanning tree of maximum degree $d$ can be decomposed into at most 
    \begin{gather*}
        2 \,\min\!\left(\left\lceil d/2 \right\rceil, 6\right) \lceil\log_2(N)\rceil
    \end{gather*}
    layers, such that each layer is reducible to the 1D brickwork.
\end{theorem}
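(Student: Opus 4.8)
\section*{Proof proposal}

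The plan is to realize the single layer of gates attached to the spanning tree $\mathcal{T}$ as a product of sub-layers, using the (unlabeled) splitting lemma of this appendix that breaks the edges of a cluster-merging graph into several layers without decreasing any subleading singular value, and then to bound the number of sub-layers by a parallel tree-contraction argument. Having already passed from the original cluster-merging graph to $\mathcal{T}$ via Lemma~\ref{lemma:add_new_link}, I would organize the sub-layers into $\lceil\log_2 N\rceil$ \emph{rounds} of ``rake and compress'' contraction: one round contracts every internal vertex together with its pendant leaves (a disjoint union of stars), then contracts every maximal chain of degree-$\le 2$ vertices (a disjoint union of paths). Each round shrinks the number of tree nodes by at least a factor of two and replaces the tree by a smaller tree (so the maximum degree never exceeds $d$ at any later round), which is what produces the $\lceil\log_2 N\rceil$ factor; it remains to show that each round can be implemented with at most $2\min(\lceil d/2\rceil,6)$ sub-layers, each reducible to a $1$D brickwork.

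The reducible primitives are paths and stars. A disjoint union of simple paths is reducible: splitting every node down to weight $2$ using Lemma~\ref{lemma:splitlink} (which only increases $\mathscr{s}$, the direction we need for an upper bound) turns each path into the cluster-merging graph of an open-boundary $1$D brickwork, so $\mathscr{s}\le s_\text{1D,open}$; the ``compress'' sub-layer, being such a linear forest, is handled in one sub-layer. A star $K_{1,j}$ is the remaining case. When $j$ is small I split the star into $\lceil j/2\rceil\le\lceil d/2\rceil$ paths of length two (leaf--centre--leaf); since these share the centre they must go into $\lceil j/2\rceil$ distinct sub-layers, but after each such sub-layer the grown centre cluster acquires spare sites, so for large $j$ a bounded ``seed, then brickwork-clean-up'' sequence of at most $6$ sub-layers suffices --- this is the source of the $\min(\cdot,6)$ cap. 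Unrolling the $\lceil\log_2 N\rceil$ rounds and adding the rake and compress contributions gives the stated bound, in the form needed to feed into Theorem~\ref{thm:singvals}.

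The main obstacle is not the overall $\log N$ scaling --- that follows from any constant-factor tree contraction --- but the precise constant $2\min(\lceil d/2\rceil,6)$, which rests on two delicate points. First, clusters of odd weight cannot be split below weight $3$ while staying connected to two neighbors, so the ``path of weight-$2$ nodes'' produced above may retain an irregular node; one must check that such a chain is still bounded by $\max_m s_\text{1D}(m)$ without invoking the site-doubling of Appendix~\ref{app:irrational_q} (the natural route is to argue that a chain with a fatter link is only more connected, or to absorb the defect into one extra sub-layer). Second, the cap at $6$ requires a genuine argument that a high-degree star is reducible in a number of sub-layers independent of $d$: a cluster of weight $w$ has degree at most $w$, so once the central cluster has been merged with two leaves it has weight strictly larger than its remaining degree and can thereafter be split into a brickwork-like path that absorbs a constant fraction of its incident edges per sub-layer. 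Making the ``constant fraction'' quantitative, and verifying it pushes the total to $\le 6$ rather than something larger, is where the bookkeeping is heaviest.
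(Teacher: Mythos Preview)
Your rake-and-compress scheme is a genuinely different recursion from the paper's, though the two share the same hard primitive. The paper does \emph{heavy-path decomposition}: from a root it follows the heaviest child to form a root-to-leaf path, recurses in parallel on all off-path subtrees (each of size $\le \lfloor N/2\rfloor$), then star-contracts the recursively collapsed subtrees into the path nodes and finally contracts the path itself. This gives the recursion $L(N)\le L(\lfloor N/2\rfloor)+\min(\lceil d/2\rceil,6)$ in string-layers, after which Lemma~\ref{lemma:odd_node_splitting} supplies the factor of $2$. Your bottom-up rake-and-compress reaches the same $\lceil\log_2 N\rceil$ depth by a different halving mechanism; what both approaches genuinely need is a constant-layer \emph{star contraction}, and that is where your proposal is thinnest.

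The gap is in your large-$j$ star argument. ``Once the centre has been seeded it can absorb a constant fraction of its incident edges per sub-layer'' is not by itself an $O(1)$ bound: a fixed fraction of the \emph{remaining} edges per step is $\Theta(\log j)$ steps, which would replace your $6$ by something growing with $d$. The paper closes this with a specific trick (its Lemma~\ref{lemma:star_contraction}): split the edge set of the star in half \emph{first}, so that in each half the centre's site count is at least twice its degree; then the centre can be split (via Lemma~\ref{lemma:splitlink}) into a string of $4$-site sub-nodes each carrying two external leaves and two internal string-edges, giving one absorb layer and one recombine layer per half, hence $4$ string-layers total. Your ``seed, then brickwork-clean-up'' can be made to work along the same lines, but only once you observe that the centre's site budget \emph{grows} after each absorption, so that after one half-round it can take all remaining leaves in one more; without that quantitative step the cap at $6$ is unjustified. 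Two smaller points also need attention: the claim that compress is a disjoint union of paths is only true if you contract edges strictly between degree-$\le 2$ vertices (so the high-degree endpoints stay out), and the assertion that one rake-and-compress round halves the node count is standard but not proved in your sketch---the heavy-path argument gets the exact $\lceil\log_2 N\rceil$ for free, whereas for rake-and-compress you must still argue it.
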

We will prove this statement starting with the similar
\begin{lemma}
    \label{lemma:treebound}
    A cluster merging graph of size $N$ with a  spanning tree of maximum degree $d$ can be decomposed into at most $\min\left(\lceil\frac{d}{2} \rceil, 6\right) \lceil \log_2(N) \rceil$ layers, such that each layer is made up of isolated strings.
\end{lemma}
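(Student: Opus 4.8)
The plan is to prove this by a cluster-merging analogue of \emph{parallel tree contraction} (rake and compress). First I would invoke Lemma~\ref{lemma:add_new_link}: deleting every edge of the cluster-merging graph that lies outside a fixed spanning tree $\mathcal{T}$ of maximum degree $d$ only increases the subleading singular value, so it suffices to exhibit a layer decomposition of $\mathcal{T}$. Throughout, I keep the correct semantics of ``decomposing into layers'' in mind: when a layer of vertex-disjoint paths is peeled off, contracting those paths merges their clusters, and the \emph{next} layer acts on the contracted tree. Thus the goal is to contract $\mathcal{T}$ down to a single node by a sequence of layers, each of which is a disjoint union of paths (``isolated strings'') in the current contracted tree.

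Each \emph{round} of the contraction combines a \emph{rake} step and a \emph{compress} step. For the rake, group the current leaves by their parent; a parent $p$ with leaf-children $\ell_1,\dots,\ell_j$ can swallow two of them via the path $\ell_1 p \ell_2$, and these paths are vertex-disjoint across distinct parents, so $\lceil j/2\rceil\le\lceil d/2\rceil$ layers suffice to remove all current leaves simultaneously. For the compress, take a matching on the maximal degree-$\le 2$ paths of the tree; this is one layer of disjoint edges and roughly halves the length of every long path. The classical analysis of rake-and-compress shows that a single round reduces the number of nodes by a constant factor, so re-running it $\lceil\log_2 N\rceil$ times contracts $\mathcal{T}$ completely. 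Charging at most $\lceil d/2\rceil$ layers to each round (with the compress matching executed inside one of the rake layers, away from the vertices that layer touches) yields the bound $\lceil d/2\rceil\,\lceil\log_2 N\rceil$.

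To get the degree-free bound $6\lceil\log_2 N\rceil$, I would first use Lemma~\ref{lemma:splitlink} (together with Lemma~\ref{lemma:split}) to replace each node of $\mathcal{T}$ by a small bounded-degree gadget: string the node's weight out along a short path or small tree whose pieces each inherit only a couple of the original external edges, turning $\mathcal{T}$ into a ``caterpillar-ized'' tree of maximum degree at most $6$ on at most $N$ nodes, without changing any singular value. Rake-and-compress on a bounded-degree tree costs only a constant number of path-layers per round, at most $6$, and still needs only $\lceil\log_2 N\rceil$ rounds; taking the better of the two strategies gives the claimed $\min(\lceil d/2\rceil,6)\,\lceil\log_2 N\rceil$.

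The hard part will be the bookkeeping around three points. First, one must check that each peeled layer really is a disjoint union of simple paths \emph{after} the previous contractions — merging clusters can raise a super-node's degree, so the rake and compress paths selected at each step have to be verified to stay disjoint and acyclic in the contracted graph. Second, the ``constant-factor per round'' claim must be re-established in the weighted cluster-merging setting via the usual structural split (every vertex is a current leaf, lies on a long degree-$2$ run, or is one of $O(\#\text{leaves})$ branch vertices). Third, and most delicate, Lemma~\ref{lemma:splitlink} only applies to a cluster with at least two unoccupied sites, so the caterpillar-ization needs the cluster weights to exceed their degrees by enough; handling the ``tight'' case where weight equals degree may require a warm-up contraction or two to create the necessary slack before the degree reduction can be carried out. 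The leftover slack in all of these estimates is exactly what produces the constant $6$ here and the factor $2$ separating this lemma from Theorem~\ref{thm:treebound}.
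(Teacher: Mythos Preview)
Your outline differs substantially from the paper's argument and, as written, has two gaps that prevent it from reaching the stated bound.

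\textbf{The round count.} You assert that one rake-plus-compress round cuts the node count by a factor of two, so that $\lceil\log_2 N\rceil$ rounds suffice. Classical tree contraction gives only a constant-factor shrink per round, and the constant is not $1/2$ in general. A concrete obstruction: take a complete binary tree $B$ and hang one extra leaf off every leaf of $B$. Rake removes exactly the pendant leaves and returns $B$; the only degree-$2$ vertex of $B$ is its root, so your compress step does nothing. The node count drops from $3\cdot 2^{k}-1$ to $2^{k+1}-1$, a ratio of roughly $2/3$. Iterating such constructions gives trees where your scheme needs about $\log_{3/2}N$ rounds, so even granting your ``absorb compress into a rake layer'' bookkeeping you would end up with roughly $1.7\,\lceil d/2\rceil\lceil\log_2 N\rceil$ layers, not $\lceil d/2\rceil\lceil\log_2 N\rceil$. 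The paper sidesteps this entirely by using \emph{heavy-path recursion}: it picks a root, follows at each step the child of largest subtree, and recurses on all off-path subtrees in parallel. Those subtrees are guaranteed to have size $\le\lfloor N/2\rfloor$, so the recursion depth is exactly $\lceil\log_2 N\rceil$ with no slack to manage.

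\textbf{The caterpillar-ization.} Your degree-free branch relies on using Lemma~\ref{lemma:splitlink} up front to replace every node by a bounded-degree path. But splitting a weight-$w$, degree-$g$ node into a path whose pieces all have degree $\le 6$ requires total weight at least $g+2(k-1)$ with $k\ge(g-2)/4$, i.e.\ roughly $3g/2$ sites; you are only guaranteed $w\ge g$. Your proposed ``warm-up contraction'' to create slack is exactly the nontrivial part, and doing it in $O(1)$ layers for an arbitrary high-degree node is essentially the content of the paper's Lemma~\ref{lemma:star_contraction}. There the splitting is applied not up front but \emph{inside} the recursion, to a star whose centre must absorb $g$ already-contracted neighbours: only half the neighbours are handled per pass, so the path needs total degree $\approx g$ rather than $\approx 3g/2$, and the constraint $w\ge g$ is exactly enough. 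Two such passes (four layers) clear the star; together with one layer for the heavy path itself this gives the $6$ in $\min(\lceil d/2\rceil,6)$.

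So the paper's proof is heavy-path decomposition plus a carefully budgeted star-contraction step, not rake-and-compress plus a global degree reduction. Your approach could plausibly be repaired, but it would need a sharper contraction analysis than the generic one and a replacement for the upfront caterpillar-ization.
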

Call $L(N)$ the largest possible number of layers a tree of size $N$ needs to be decomposed into to satisfy this. We will prove, by induction, $L(N) \leq \min\left(\lceil\frac{d}{2} \rceil, 6\right) \lceil \log_2(N) \rceil$. We have $L(2) = 1$ because two connected sites automatically form an isolated string. It remains to show
\begin{lemma}
    \label{lemma:tree_induction}
    $L(N) \leq L\left(\lfloor\frac{N}{2}\rfloor\right) + \min\left(\lceil\frac{d}{2} \rceil, 6\right)$
\end{lemma}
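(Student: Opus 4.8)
The plan is to prove the single ``halving step'' that drives the outer induction: given an arbitrary tree $T$ on $N$ vertices with maximum degree at most $d$, I would exhibit a family of at most $\min(\lceil d/2\rceil,6)$ layers, each a disjoint union of paths of $T$, whose edges may be deleted so that every connected component of the remaining forest has at most $\lfloor N/2\rfloor$ vertices. Granting this, Lemma~\ref{lemma:tree_induction} follows at once: each residual component has size at most $\lfloor N/2\rfloor$ and maximum degree at most $d$, hence by definition of $L$ it decomposes into at most $L(\lfloor N/2\rfloor)$ layers of isolated strings; since the components are vertex-disjoint, their decompositions may be superposed layer by layer into a single list of $L(\lfloor N/2\rfloor)$ layers, each still a disjoint union of paths. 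Prepending the separating layers gives $L(N)\le \min(\lceil d/2\rceil,6)+L(\lfloor N/2\rfloor)$, which together with $L(2)=1$ yields the bound of Lemma~\ref{lemma:treebound}.

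For the $\lceil d/2\rceil$ branch I would use a centroid. Pick a centroid vertex $c$ of $T$, so that every component of $T-c$ has at most $\lfloor N/2\rfloor$ vertices. The $\deg(c)\le d$ edges incident to $c$ form a star; pair them to obtain $\lceil\deg(c)/2\rceil\le\lceil d/2\rceil$ paths, each of length at most two and each running through $c$, and place one such path in each layer. Deleting these layers removes exactly the edges at $c$, leaving $c$ isolated and leaving each subtree formerly hanging from $c$ as its own component of size at most $\lfloor N/2\rfloor$. Each layer is a single path, hence trivially a disjoint union of paths, so this branch needs no further work.

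The real content is in the universal-constant branch, relevant only once $\lceil d/2\rceil>6$, where one cannot afford to strip every edge from a high-degree vertex. Here I would instead build the separator from a bounded number of longer path-layers: choose a path $P$ in $T$ whose deletion already controls the imbalance of the pieces left behind (a path-separator of $T$); delete the edges of $P$ as one layer; and then observe that the subtrees hanging off the vertices of $P$ are now small, so only a constant number of additional layers of vertex-disjoint paths -- taken simultaneously at all vertices of $P$, which is possible precisely because the hanging stars at distinct vertices of $P$ are vertex-disjoint -- suffice to bring every component below $\lfloor N/2\rfloor$. A careful case analysis of how unbalanced a single vertex's hanging mass can be, together with the option of first invoking the node-splitting rules of Lemmas~\ref{lemma:split} and~\ref{lemma:splitlink} to replace a residual high-degree vertex by a bounded-degree gadget, is what pins the constant at $6$ (equivalently, makes $d=12$ the crossover between the two branches).

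The step I expect to be the main obstacle is exactly this last one: ruling out the high-degree worst case with a number of layers that does not grow with $d$. The centroid argument alone only gives $\lceil d/2\rceil$, and a naive path-separator can still leave behind a vertex whose hanging subtrees sum to more than $\lfloor N/2\rfloor$; the crux is to show that after the first path-deletion layer this residual imbalance can always be removed by $O(1)$ further disjoint-path layers, and it is the bookkeeping behind that claim where the real combinatorial effort lies.
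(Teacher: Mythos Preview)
Your centroid idea for the $\lceil d/2\rceil$ branch is a legitimate alternative to the paper's heavy-path decomposition, and in fact slightly cleaner: the paper roots the tree, follows the maximally-weighted path to a leaf, recurses in parallel on the light subtrees hanging off that path (each of size at most $\lfloor N/2\rfloor$), then star-contracts the resulting subtree-clusters into the path vertices, and finally spends one more layer contracting the path itself. A centroid achieves the same halving with a single vertex, so that final path-contraction layer would be unnecessary. One correction, though: the separating layers must come \emph{after} the recursive layers, not before. In the cluster-merging picture layers act bottom-up; if you apply the centroid layers first, $c$ is merged with all of its neighbours $n_1,\dots,n_{\deg c}$, and the subtrees you then recurse on in parallel all share that merged cluster as their root node, so superposing their layer-decompositions produces a node of degree up to $\deg c$ in a single layer---not an isolated string. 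The paper explicitly recurses first and merges afterward; your argument works once you append rather than prepend.

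The substantive gap is the constant-$6$ branch, which you yourself flag as the obstacle but do not actually resolve. Your path-separator sketch does not control high-degree vertices: in a star on $N$ leaves every path contains at most two edges, and after deleting any bounded number of path-layers the centre is still attached to $N-O(1)$ leaves, so the residual component is not small. The paper's mechanism here is Lemma~\ref{lemma:star_contraction}: a node of degree $g$ in a cluster-merging graph necessarily contains at least $g$ \emph{sites}, so via Lemma~\ref{lemma:splitlink} one may split that node into a string of small nodes each carrying two external edges, which lets an arbitrary star be contracted in four layers of isolated strings regardless of $g$. You mention the node-splitting rules only as an optional ingredient in a ``careful case analysis'', but this splitting-into-a-string trick is the entire reason the degree dependence disappears---it is not a side tool, it is the proof. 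Without making it explicit, no constant bound follows from your outline. (As a side remark, grafting the paper's star contraction onto your centroid step would in fact give $L(N)\le L(\lfloor N/2\rfloor)+\min(\lceil d/2\rceil,4)$, which is tighter than the stated $6$.)
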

\textit{Proof.} Take an arbitrary root node $C_1$ in our tree and construct a path $C_1 C_2 ... C_r$ starting from that root. At each step $i-1$ in the path, we choose the child $C_i$ of parent $C_{i-1}$ with the most nodes in its subtree. Call the other child nodes $b_i^{(1)}, b_i^{(2)}, ..., b_i^{(g)}$, where $g \leq d-2$.

For a given path node $C_{i-1}$, each non-path child $b_i^{(1)}, b_i^{(2)}, ..., b_i^{(g)}$ has at most $\lfloor\frac{N}{2}\rfloor$ nodes in its subtree. We can therefore recursively decompose each subtree in parallel in at most $L\left(\lfloor\frac{N}{2}\rfloor\right)$ layers. After these layers, each subtree of $b_i^{(j)}$ has been combined into $b_i^{(j)}$ to form one cluster, $B_i^{(j)}$. 

We then need to combine all $g$ child nodes $B_i^{(j)}$ into $C_{i-1}$. To accomplish this in an efficient way, we use a method for efficient contraction of high-degree nodes:
\begin{figure}
    \centering
    \begin{tikzpicture}
        \begin{scope}
            \node[anchor=north west,inner sep=0] (image_a) at (0,0)
            {\includegraphics[width=0.99\columnwidth]{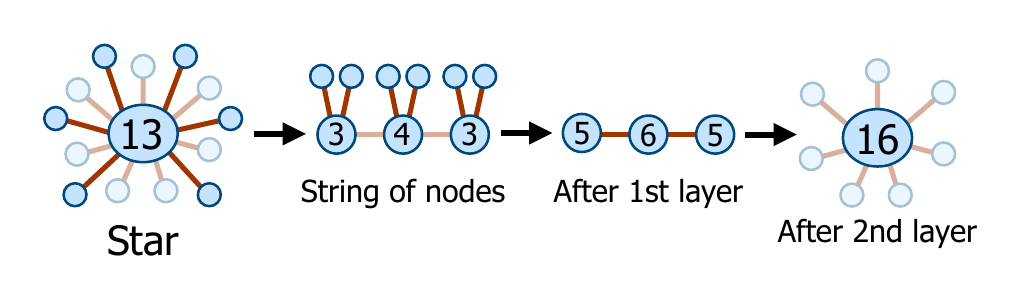}};
        \end{scope}
    \end{tikzpicture}
    \vspace*{-0.6cm}
    \caption{The star contraction of Lemma~\ref{lemma:star_contraction}. We start with a star with $g$ external edges, meaning its site count (the numerical label of the node) is at least $g$. In the first pair of layers \textit{(a)}, half of the edges are considered, then the star node is split into a string of nodes, each with 2 external edges. In the first layer, each node contracts with its 2 external edges in a 3-site string (Note that the site count on each node is just the minimum number of sites required to have the correct degree on the node. Extra sites can be placed in whatever node we want). In the 2nd layer, the resulting string is contracted back into a single node, leaving a star with half the edges remaining. The process is then repeated for the other half of the edges.}
    \label{fig:star_contraction}
\end{figure}
\begin{lemma}\label{lemma:star_contraction}
    We can combine a node with $g$ of its external neighbors using at most 4 layers of isolated strings.
\end{lemma}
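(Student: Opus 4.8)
The plan is to contract the $g$ pendant neighbours into the central node $C$ in two successive rounds, each round costing two layers of isolated strings, for a total of four; this is the construction sketched in Fig.~\ref{fig:star_contraction}. First I would set up one round. Let $e_1,\dots,e_h$ be the external edges of $C$ still to be processed in that round, with $h$ at most $\lceil g/2\rceil$. Using Lemmas~\ref{lemma:split} and~\ref{lemma:splitlink} (which only increase the subleading singular value), I would split $C$ into a path of sub-nodes $c_1 - c_2 - \cdots - c_m$ with $m=\lceil h/2\rceil$, assigning to $c_j$ the two external edges $e_{2j-1},e_{2j}$ (the last sub-node gets a single edge when $h$ is odd), and parking any external edges held back for the next round on one ``left-over'' node glued to an end of the path. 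Because splitting a node with or without an added link never decreases $\mathscr{s}$, it suffices to bound the split graph.

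The first layer of the round would then be the disjoint union of the $3$-node strings $B^{(2j-1)} - c_j - B^{(2j)}$, one for each $j$ (a $2$-node string for the last sub-node if $h$ is odd); this is manifestly a collection of isolated strings, and contracting it merges each $c_j$ with the neighbours assigned to it. After that layer the enlarged clusters, together with the left-over node, sit in a single path, so the second layer of the round is exactly that path, again a union of isolated strings, and contracting it fuses everything into one node. At the end of the round $C$ has absorbed $h$ of its neighbours and what remains is a smaller star; one more identical round finishes the job, so at most four layers of isolated strings are used and, by Lemmas~\ref{lemma:split} and~\ref{lemma:splitlink}, no step decreases $\mathscr{s}$. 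I would dispose of the degenerate cases ($g$ small, $g$ odd, or a node with barely enough sites) directly: the single string $B^{(1)} - C - B^{(2)}$ already absorbs two neighbours in one layer, so any star with at most four pendant edges is contracted in at most two.

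The step I expect to be the real obstacle is verifying that every intermediate cluster-merging graph stays valid -- that no split node ever carries more edges than it has sites -- which is exactly what forces the edges to be handled in two rounds rather than one. A middle sub-node is incident to two external and two internal edges and so needs at least four sites; spreading roughly $g/2$ edges of a round over such sub-nodes, while also reserving a left-over node for the edges postponed to the next round, is possible only because $C$ starts with at least $g$ sites (and, in the tree setting of Lemma~\ref{lemma:treebound}, two further sites reserved for its parent and path-child), with any surplus sites distributed wherever they are needed, whereas trying to handle all $g$ edges in a single round over-commits the site budget. I would carry out this accounting carefully and check the handful of boundary inequalities for small and odd $g$; the arithmetic is routine, but it is where all the bookkeeping lives.
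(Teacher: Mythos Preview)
Your overall strategy matches the paper's exactly: two rounds of two layers each, with each round splitting the centre into a path of sub-nodes, spending one layer on disjoint $3$-strings and a second layer reconnecting the path. The only structural difference is how the $g-h$ edges deferred to the second round are handled: you propose cluster-splitting $C$ first and ``parking'' those edges on a dedicated left-over node, whereas the paper first applies the \emph{layer-splitting} lemma (moving half the commuting edges to a later layer) and only then cluster-splits $C$.

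That difference is precisely where your site-accounting fails, and the arithmetic you flagged as ``routine'' does not close. If you cluster-split before layer-splitting, Lemmas~\ref{lemma:split} and~\ref{lemma:splitlink} are applied to a graph that still contains all $g$ external edges. A middle path node then carries degree $4$ and the left-over node carries degree $g-h+1$, so the total site demand is roughly $4\lceil h/2\rceil + (g-h)$, which is about $3g/2$ when $h\approx g/2$ and exceeds the $g$ sites available in $C$ for all $g\geq 6$. No redistribution of the extra edges among the $c_j$ fixes this: the total degree $4m-2+(g-h)$ always exceeds $g$ once $h>2$. The paper avoids the problem by moving $\lceil g/2\rceil$ edges to a later layer \emph{before} cluster-splitting; in the bottom layer $C$ then has only $\lfloor g/2\rfloor$ incident edges, the path needs just $4\big\lceil\tfrac12\lfloor g/2\rfloor\big\rceil-2\leq g$ sites, and no left-over node is needed at all. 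With that reordering, the rest of your plan goes through unchanged.
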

\begin{proof}
To accomplish this, we combine our layer splitting process with a cluster splitting process. 
The original node has a degree of $g$, so it must contain at least $g$ sites. We first split the layer, with $\lfloor \frac{g}{2} \rfloor$ of the external neighbors on the bottom layer. This gives us a node with $g$ sites and $\lfloor \frac{g}{2} \rfloor$ edges. We then use Lemma \ref{lemma:splitlink} to split this node into a string of nodes such that each node connects to $2$ of the external neighbors (as shown by the first step in Fig. \ref{fig:star_contraction}). Each node therefore has $2$ external neighbors and at most $2$ new internal edges connecting it to the other nodes in the string, so can be made with $4$ sites. The exceptions are the end nodes which need one less site, and if $\lfloor \frac{g}{2} \rfloor$ is odd, one node will have one less neighbor and site. In total, at most $4\lceil \frac{1}{2} \lfloor \frac{g}{2} \rfloor \rceil - 2 \leq g$ sites are required, so there are enough unoccupied sites in the original node to add the necessary new edges for this splitting.

Once we have split the original node in this way, we spend two layers to combine all the neighbors and nodes together. In the first layer, each node in the string combines with its two external neighbors, as together they form a string of length 3 (Fig.~\ref{fig:star_contraction}, second step). With all the neighbors absorbed, the string can be recombined into the original node in the second layer (Fig.~\ref{fig:star_contraction}, third step).

We then spend two more layers to contract the remaining $g - \lfloor \frac{g}{2} \rfloor$ neighbors. This is either $\lfloor \frac{g}{2} \rfloor$ or $\lfloor \frac{g}{2} \rfloor+1$ depending on the parity of $g$. Even if it's the latter, we will have enough sites to contract all the neighbors, due to the $\geq \lfloor \frac{g}{2} \rfloor$ extra sites the root node gained by absorbing its previous neighbors. Therefore all neighbors can be contracted in 4 layers.
\end{proof}

Note that Lemma~\ref{lemma:star_contraction} is inefficient if the number of leaves is 6 or less. This is because we can also contract $g$ leaves in $\lceil \frac{g}{2} \rceil$ layers, by selecting one or two leaves in each layer and combining with $C_{i-1}$ into a string of length $2$ or $3$. So the number of layers we have to spend is $\min(\lceil \frac{g}{2} \rceil, 4) \leq \min(\lceil \frac{d}{2} \rceil-1, 4)$ overall.

By performing all of these operations in parallel for each path node, we have spent at most $L\left(\lfloor\frac{N}{2}\rfloor\right) + \min(\lceil \frac{d}{2} \rceil-1, 5) $ layers to combine each path node's non-path children into the path node. After these layers, the cluster-merging picture of the next layer only consists of the path nodes, which can be combined together with a single, extra, layer. This completes the proof of Lemma \ref{lemma:tree_induction}. \qed

We have seen that we can decompose the size-$N$ cluster into $L(N)$ layers such that the cluster merging graph of each layer consists of isolated strings. Each string can now be split into an open 1D brickwork, as long as each internal node is of even size (endpoint nodes can be either odd or even), using Lemma~\ref{lemma:merge}. We will call such strings \textit{brickwork-compatible}. For strings containing nodes which aren't even and aren't on the ends, we can use:
\begin{lemma} \label{lemma:odd_node_splitting}
    Any layer of isolated strings can be split into two layers of brickwork-compatible strings.
\end{lemma}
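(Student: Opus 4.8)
The plan is to treat each isolated string $S=C_1-C_2-\cdots-C_r$ of the given layer separately and to produce the two new layers purely by reassigning the edges of $S$ between them; no node-splitting is needed at this stage. The only tool required is the preceding lemma that the edges of a cluster-merging graph can be moved into separate, consecutive layers without lowering any layer-restricted subleading singular value. A naive two-colouring of the edges fails, because merging consecutive pairs of nodes tends to create odd-weight clusters in the second layer, so the partition must be chosen with an eye to parities: the ``bad'' nodes are exactly the internal nodes of $S$ of odd weight.

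Concretely, I would list the odd-weight internal nodes $C_{i_1},\dots,C_{i_p}$ of $S$ in order ($2\le i_1<\cdots<i_p\le r-1$), pair them consecutively as $(i_1,i_2),(i_3,i_4),\dots$, and for each pair $(i_{2j-1},i_{2j})$ move to layer B the edge immediately to the left of $C_{i_{2j-1}}$ and the edge immediately to the right of $C_{i_{2j}}$; if $p$ is odd, I would also move to layer B the edge immediately to the left of the leftover node $C_{i_p}$. Every remaining edge stays in layer A. Then layer A's cluster-merging graph is the string $S$ cut at the deferred edges into subpaths, each an interval of $S$; every odd internal node of $S$ has an incident deferred edge by construction, so it occurs only at an endpoint of its subpath, and hence every node interior to a layer-A subpath is an internal node of $S$ of even weight --- i.e.\ each layer-A component is brickwork-compatible. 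Layer B's cluster-merging graph has the contracted layer-A components as its nodes, joined into a single path by the deferred edges; its interior nodes are the ``pair intervals'' $[i_{2j-1},i_{2j}]$ and the ``gap intervals'' lying strictly between consecutive pairs, which contain respectively exactly two or zero odd nodes of $S$ and therefore have even total weight, while the leftover node (when $p$ is odd) lands in the last, endpoint, node of this path, where odd weight is harmless. So layer B is brickwork-compatible as well.

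It then remains only to dispatch the trivial and degenerate cases: a string with no odd internal nodes is already brickwork-compatible and is simply left in layer A, with layer B acting trivially on it; strings of length one or two have no internal nodes at all; and coincidences such as $i_{2j+1}=i_{2j}+1$ merely cause a deferred edge to be named twice or a gap interval to be empty, neither of which causes any trouble. Carrying this out in parallel over all strings of the layer yields the two required layers of brickwork-compatible strings. The step I expect to take the most care is the parity bookkeeping in layer B --- namely checking that pairing the odd internal nodes makes their odd weights cancel inside each interior layer-B component, and that the unavoidable leftover odd node (when $p$ is odd) can always be routed into an endpoint component of layer B rather than an interior one; everything else is routine once this edge partition is written down.
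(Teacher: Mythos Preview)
Your proposal is correct and is essentially the same argument as the paper's: both proofs list the odd-weight nodes along the string, pair them consecutively, cut the string by removing the edge immediately to one side of each odd node so that every layer-A substring has odd nodes only at its endpoints, and then observe that the contracted layer-B nodes each contain either zero or two odd constituents (hence even weight), with the single leftover odd node in the odd-$p$ case routed to an endpoint of the layer-B path. The only cosmetic difference is that the paper enumerates all odd nodes including the string's endpoints (handling a possibly nonexistent boundary edge with an ``if any'' clause), whereas you enumerate only the odd \emph{internal} nodes; the resulting cuts and the parity bookkeeping are otherwise identical.
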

\begin{proof}
We wish to split a string into two layers such that both layers have cluster-merging graphs consisting entirely of brickwork-compatible strings. Let \(k\) be the number of odd-sized nodes in the string and number only the odd nodes \(1...k\).

First suppose \(k\) is even. Remove the left-hand edge, if any, of odd nodes \(1,3...k-1\) and the right-hand edge of odd nodes \(2,4...k\). Now we have split the string into at least \(\frac{k}{2}\) and at most \(\frac{k}{2} + 2\) substrings, each of which either has odd nodes only at both endpoints, or has no odd nodes at all. Since the odd nodes were paired up, contracting the first layer gives only even nodes for the second layer. So both layers are brickwork-compatible.

Now suppose \(k\) is odd. Remove the left-hand edge, if any, of odd nodes \(1,3...k-1\) and the right-hand edge of odd nodes \(2,4...k-1\). Every substring has \(2\) odd endpoints except the first, which may have \(0\), and the last, which has exactly \(1\). After contracting, the second layer has all even nodes except for one. But this one odd node is the right endpoint of the second-layer string, so the second layer is still brickwork-compatible.
\end{proof}

Lemma~\ref{lemma:odd_node_splitting} shows that we can make a set of layers of isolated strings brickwork-compatible with a splitting scheme that at most doubles the number of layers. Combining this with Lemma \ref{lemma:treebound} completes the proof of Theorem~\ref{thm:treebound}.

\subsection{Log log bound on arbitrary graphs}

\begin{theorem}\label{thm:loglog_bound}
    A cluster merging graph of size $N$ can be decomposed into at most
    \begin{gather*}
        8\lceil \log_2 \lfloor \log_2 (N+1) \rfloor \rceil +2
    \end{gather*}
    layers, such that each layer is reducible to the 1D brickwork.
\end{theorem}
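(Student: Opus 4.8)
\textit{Proof proposal.} The plan is to improve on the $O(\log N)$ bound that is already implicit in Theorem~\ref{thm:treebound} (since $\min(\lceil d/2\rceil,6)\le 6$, that theorem gives $12\lceil\log_2 N\rceil$ layers for an arbitrary graph) by first collapsing the tree so that its \emph{depth}, rather than its size, is $O(\log N)$, and then contracting a shallow tree by repeatedly halving its depth. As in Appendix~\ref{app:treebound}, I will work with a spanning tree: by Lemma~\ref{lemma:add_new_link} deleting edges only increases the subleading singular value, so it suffices to bound the decomposition of a spanning tree $\mathcal{T}$ of the cluster-merging graph (disconnected graphs are handled componentwise), and the final factor of $2$ always comes from making isolated strings brickwork-compatible via Lemma~\ref{lemma:odd_node_splitting}.

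\emph{Step 1: shrink the depth with a heavy-path decomposition.} Root $\mathcal{T}$ arbitrarily and perform a heavy-path decomposition weighted by the number of sites in each cluster: at every node call the child whose subtree contains the most sites ``heavy'' and the rest ``light.'' The heavy edges partition $\mathcal{T}$ into vertex-disjoint paths, so contracting all of them at once is a single layer of isolated strings, i.e.\ $2$ brickwork-compatible layers. Because each light edge at least halves the number of sites below it, every root-to-leaf path of $\mathcal{T}$ crosses at most $\lfloor\log_2(N+1)\rfloor$ light edges, so the contracted tree $\mathcal{T}'$ has depth $D\le \lfloor\log_2(N+1)\rfloor$.

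\emph{Step 2: contract a shallow tree by depth-halving.} Given a rooted tree of depth $D$, one ``super-round'' merges, simultaneously, every even-depth node with all of its children. The even-depth nodes together with their children form a vertex-disjoint forest of stars; in any cluster-merging graph a node of degree $g$ contains at least $g$ sites (each external gate occupies a distinct site), and this invariant is preserved by merging, so Lemma~\ref{lemma:star_contraction} contracts each star---and, since they are disjoint, the whole forest in parallel---in $4$ layers of isolated strings, i.e.\ $8$ brickwork-compatible layers. Each super-round halves the depth, so $\lceil\log_2 D\rceil\le \lceil\log_2\lfloor\log_2(N+1)\rfloor\rceil$ super-rounds reduce $\mathcal{T}'$ to a single node. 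Adding the $2$ layers from Step 1 yields exactly $8\lceil\log_2\lfloor\log_2(N+1)\rfloor\rceil+2$ layers, each reducible to the 1D brickwork.

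The main obstacle is the constant-tracking rather than any deep idea: one must (i) check that an entire forest of stars really fits into the same $4$ string-layers that Lemma~\ref{lemma:star_contraction} uses for one star, and that the ``degree $\le$ number of sites'' hypothesis of that lemma survives every depth-halving merge (a priori a merged cluster could fail it, so one needs that it holds for all cluster-merging graphs arising from at-most-one-gate-per-site layers, which it does); (ii) pin down the weighted heavy-path depth bound so it is literally $\lfloor\log_2(N+1)\rfloor$ and not merely $O(\log N)$, including the off-by-one between ``number of light edges on a path'' and ``depth of the contracted tree''; and (iii) confirm that one application of Lemma~\ref{lemma:odd_node_splitting} per super-round is enough, i.e.\ the strings produced by the star contractions never need more than a single parity-fixing split. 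Only once these are in hand does the coefficient $8$ and the additive $2$ follow.
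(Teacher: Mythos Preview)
Your proposal is correct and follows essentially the same route as the paper: one layer of heavy-path (maximally-weighted-path) contraction to reduce the tree to depth $O(\log N)$, followed by repeated depth-halving via parallel star contractions (Lemma~\ref{lemma:star_contraction}), with Lemma~\ref{lemma:odd_node_splitting} supplying the factor of two for brickwork-compatibility. The only cosmetic differences are that the paper weights the heavy-path decomposition by number of \emph{nodes} rather than sites and applies the parity-fixing doubling once at the end rather than per super-round; your off-by-one concerns in (ii) are exactly the ones the paper resolves to get height $\lfloor\log_2(N+1)\rfloor-1$ and hence $\lceil\log_2(h+1)\rceil$ halving rounds.
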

Given Lemma~\ref{lemma:odd_node_splitting}, it is sufficient to prove we can decompose the graph into
\begin{gather*}
    4\lceil \log_2 \lfloor \log_2 (N+1) \rfloor \rceil +1
\end{gather*}
layers of isolated strings. 

We begin with a tree of at most \(N\) sites (Fig.~\ref{fig:tree_reduction}a, left). The first layer of edges we apply will contract substrings of the tree such that the resulting second-layer tree (Fig.~\ref{fig:tree_reduction}a, right) has depth \(O(\log N)\). Choose any root node, which also specifies a direction in the tree from its root down to the leaves. Then label each node by its \textit{subtree weight} - the number of nodes in the subtree starting from that node. The root node has subtree weight $N$, leaves have subtree weight $1$, and so on. We add to the first layer the \textit{maximally weighted path} starting from the root node, i.e. a path from root to leaf that always selects the child node with the highest subtree weight. For every node which neighbors this path (excluding nodes which are part of the path), we take its subtree, add the maximally weighted path of that subtree to the first layer, and repeat the process recursively. After the end of this process, each node is part of exactly one maximally weighted path, so the first layer consists of disconnected strings. We can also label each maximally weighted path by a subtree weight, which in this case is the subtree weight of the node at the start of the path.

After contracting all the paths added to the first layer this way, we are left with a tree where each node corresponds to one of the maximally weighted paths. We make the following claim:

\begin{figure}
    \centering
    \begin{tikzpicture}
        \begin{scope}
            \node[anchor=north west,inner sep=0] (image_a) at (-0.6,0)
            {\includegraphics[width=0.99\columnwidth]{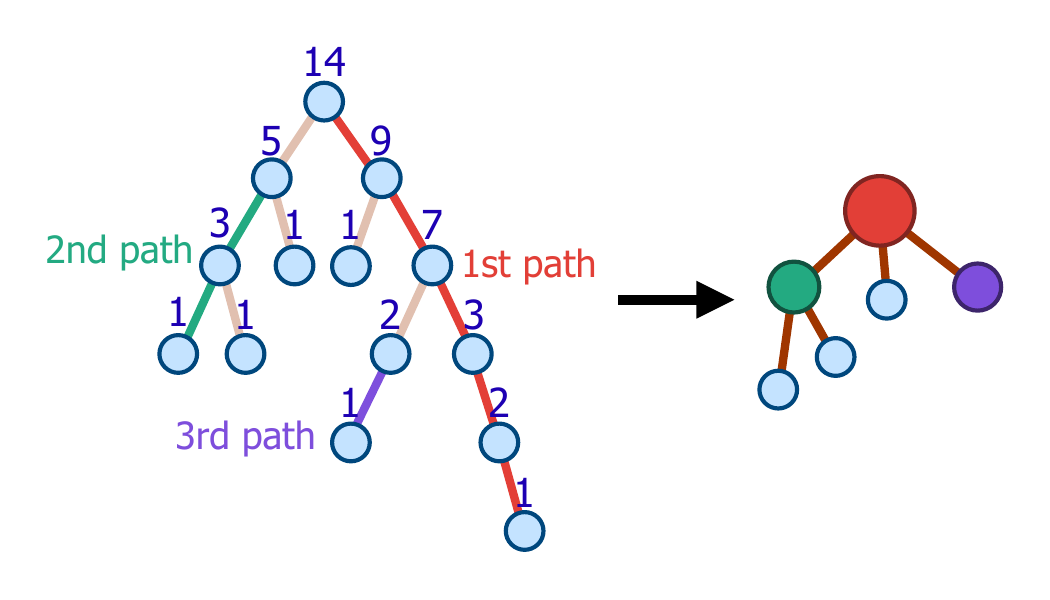}};
            \node [anchor=north west] (note) at (-0.45,0) {\small{\textbf{a)}}};
        \end{scope}
        \begin{scope}
            \node[anchor=north west,inner sep=0] (image_a) at (0,-4.3)
            {\includegraphics[width=0.9\columnwidth]{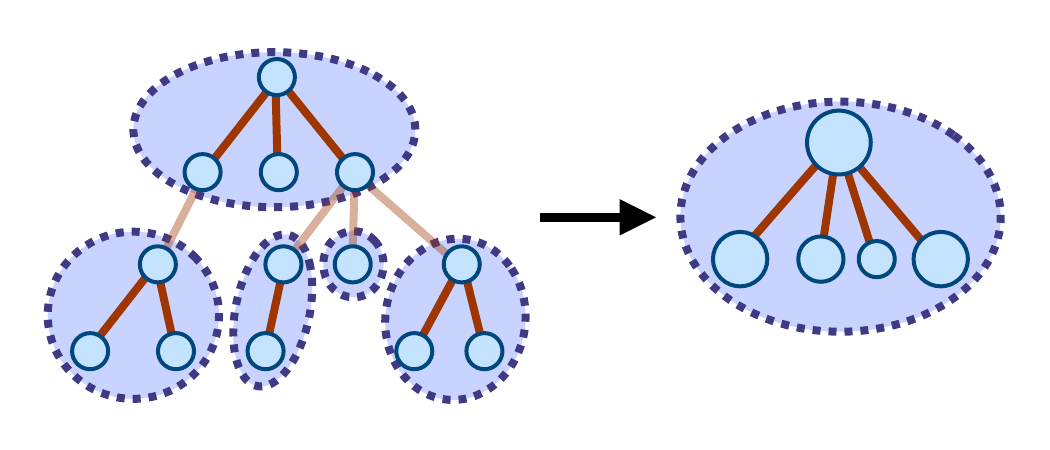}};
            \node [anchor=north west] (note) at (-0.45,-4.6) {\small{\textbf{b)}}};
        \end{scope}
    \end{tikzpicture}
    \vspace*{-1cm}
    \caption{Algorithm for reducing a general tree in $O(\log \log N)$ layers. (a): The first layer of the algorithm, where we contract the maximally weighted paths. Each path chooses the child node with the highest subtree weights (blue labels above each node), and starts from non-path children of nodes in higher paths. (b): Contraction of a balanced tree into a point in $O(\log \log N)$ steps. At each step, the nodes at depths $2k$ and $2k+1$ are merged together through the star contraction algorithm (each star is indicated by a blue dotted circle).}
    \label{fig:tree_reduction}
\end{figure}

\begin{lemma}
    The height of the new tree is at most $\lfloor \log_2(N+1) \rfloor \rceil-1$. 
\end{lemma}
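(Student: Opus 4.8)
The plan is to attach to each maximally weighted path $P$ a single number, its \emph{path weight} $w(P) := w\bigl(s(P)\bigr)$, where $s(P)$ is the node at which $P$ begins and, for any node $v$, $w(v)$ denotes its subtree weight (the number of original sites in the subtree rooted at $v$). Two facts are immediate: the root path $P_0$ begins at the root, so $w(P_0)$ equals the total number of sites and hence $w(P_0)\le N$; and every path satisfies $w(P)\ge 1$.

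The crux is a halving lemma: if a path $P'$ is the parent of a path $P$ in the new tree, then $w(P)\le \tfrac12\bigl(w(P')-1\bigr)$. I would prove it as follows. Let $v$ be the node lying on $P'$ whose child in the original tree is $s(P)$. Since $s(P)$ exists as a child of $v$, the node $v$ is interior to $P'$, so $P'$ continues past $v$ through the child $c$ of $v$ of largest subtree weight, and $c\neq s(P)$. Maximality of $c$ gives $w(c)\ge w\bigl(s(P)\bigr)=w(P)$. The subtree rooted at $v$ consists of $v$ together with the pairwise disjoint subtrees of its children, so $w(v)\ge 1+w(c)+w\bigl(s(P)\bigr)\ge 1+2w(P)$. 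Finally $v$ lies on $P'$ and hence in the subtree rooted at $s(P')$, so $w(v)\le w\bigl(s(P')\bigr)=w(P')$. Combining these, $w(P')\ge 1+2w(P)$, which is the claim.

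To finish, I would take a deepest root-to-leaf chain $P_0,P_1,\dots,P_h$ of the new tree, where $h$ is its height, and put $w_i=w(P_i)$. The halving lemma gives $w_i\ge 2w_{i+1}+1$ for each $i$, and iterating this $h$ times from $w_h\ge 1$ yields $w_0\ge 2^{h+1}-1$ by a routine induction. Since $w_0\le N$, this forces $2^{h+1}\le N+1$, i.e. $h\le \log_2(N+1)-1$; as $h$ is an integer, $h\le \lfloor\log_2(N+1)\rfloor-1$, which is the stated bound.

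I do not anticipate a genuine obstacle. The only points that need care are the bookkeeping inside the halving lemma --- verifying that the node $v$ is genuinely interior to $P'$ so that its maximal child $c$ exists and is distinct from $s(P)$ --- and the trivial edge cases (paths with no non-path children, and the one- or two-node tree), which contribute nothing to the height and are consistent with the asserted bound.
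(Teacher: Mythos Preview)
Your proposal is correct and follows essentially the same approach as the paper: both arguments assign to each contracted path the subtree weight of its starting node, establish the halving inequality $w(P)\le\lfloor (w(P')-1)/2\rfloor$ for a parent--child pair in the new tree via maximality of the chosen child on $P'$, and then iterate down a longest chain to obtain $N\ge 2^{h+1}-1$. The only cosmetic difference is that the paper phrases the halving step as a short contradiction (``if $w(s(P))>\tfrac{n-1}{2}$ then $s(P)$ would have been chosen for the path''), whereas you give the equivalent direct inequality $w(v)\ge 1+w(c)+w(s(P))\ge 1+2w(P)$; the content is identical.
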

\begin{proof}
    Suppose we start from a node of the new tree and jump down to one of its children. This corresponds to jumping from one maximally weighted path on the original tree to one of the paths below it. 
    
    Going back to the original tree, consider a maximally weighted path of a particular single-root subtree consisting of $n$ nodes. The subtree weight of the root is, by definition, $n$, and so is the subtree weight of the path itself. Then any node which is one edge away from the maximally weighted path (i.e. it is not part of the maximally weighted path, but its parent is) has a subtree weight bounded above by $\lfloor \frac{n-1}{2}\rfloor$. We can prove this by contradiction - it cannot be the root node, and the sum of the subtree weights of it and all its sibling nodes must be $\leq n-1$. So if it had a subtree weight greater than $\frac{n-1}{2}$, it would've had a larger subtree than all of its siblings, and therefore would be part of the maximally weighted path after all. 

    Every time we move from a maximally weighted path to a lower one, our subtree weight roughly halves; specifically, it goes from $n$ to at most $\lfloor \frac{n-1}{2}\rfloor$. If we repeat this process, we run out of nodes after $\lfloor \log_2(n+1) \rfloor-1$ steps. Hence, the height of the whole second-layer tree must be bounded by $\lfloor \log_2(N+1) \rfloor \rceil-1$.
\end{proof}

Now we reduce the second-layer tree with the following theorem:
\begin{lemma}
    A tree of height $h$ can be decomposed into $4\lceil \log_2 (h+1) \rceil$ layers of isolated strings
\end{lemma}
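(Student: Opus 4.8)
The plan is to reduce the tree to a single node by a sequence of ``level-halving'' passes, each realized by a bounded number of layers of isolated strings; this is precisely the procedure sketched in Fig.~\ref{fig:tree_reduction}b. The workhorse is the star-contraction procedure of Lemma~\ref{lemma:star_contraction}, which absorbs any chosen set of a node's external neighbours into that node using at most $4$ layers of isolated strings. First I would root the tree and assign each node a depth. One pass then consists of contracting, in parallel over all even-depth nodes $C$, the node $C$ together with its children (its neighbours one level below). The edge from $C$ up to its parent and the edges from $C$'s children down to their grandchildren are inert during this contraction: they are passive ``dangling'' edges carried along with whichever sub-node they happen to be attached to, and they do not affect the isolated-string structure of any of the four layers.

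Three points then need checking. (i) The star contractions at distinct even-depth nodes act on pairwise disjoint sets of sites, so they may be run simultaneously and the union of their $j$-th layers ($j=1,\dots,4$) is still a collection of isolated strings; hence a single pass costs at most $4$ layers. (ii) After a pass the surviving nodes are exactly those at even depth, with a former depth-$2k$ node becoming the parent of the former depth-$(2k+2)$ nodes, so the height drops from $h$ to $\lfloor h/2\rfloor$. (iii) Throughout, the invariant ``site count $\ge$ degree'' is maintained, which is what Lemma~\ref{lemma:star_contraction} needs; it holds at the outset because a cluster of $m$ sites has at most $m$ incident edges, and it survives a merge since the merged node's site count is at least (number of children of $C$) $+$ (number of grandchildren of $C$), which dominates its new degree (the grandchild edges).

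Iterating, $m$ passes turn a height-$h$ tree into one of height $\lfloor h/2^m\rfloor$, which equals $0$ exactly when $2^m > h$, i.e.\ for $m=\lceil\log_2(h+1)\rceil$. Multiplying by the $4$ layers per pass yields the claimed bound $4\lceil\log_2(h+1)\rceil$ on the number of layers of isolated strings.

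I expect the only genuine subtlety --- and the step to state carefully rather than wave at --- is the interface with Lemma~\ref{lemma:star_contraction}: that the star contraction still works when only a chosen subset of a node's incident edges is contracted while the rest hang as dangling edges (using site count $\ge$ degree to guarantee enough sites), and that executing many vertex-disjoint star contractions in parallel keeps the per-pass cost at $4$ layers rather than letting it grow with the tree size. The remaining content is the routine bookkeeping of depths and the recursion $h\mapsto\lfloor h/2\rfloor$.
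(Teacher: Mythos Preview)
Your proposal is correct and follows essentially the same argument as the paper: apply Lemma~\ref{lemma:star_contraction} in parallel at every even-depth node to absorb its children, observe that this costs at most four layers of isolated strings and sends the height from $h$ to $\lfloor h/2\rfloor=\lceil(h+1)/2\rceil-1$, and iterate $\lceil\log_2(h+1)\rceil$ times. Your explicit checks that the parallel star contractions are vertex-disjoint and that the site-count-versus-degree invariant survives each pass are useful elaborations of points the paper leaves implicit.
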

\begin{proof}
    We decomposed the tree recursively. At each step, we take all the nodes which are at an even depth (including the root node at depth 0). We can then use Lemma~\ref{lemma:star_contraction} on each even-depth node to absorb all their children in parallel, in at most $4$ layers (Fig.~\ref{fig:tree_reduction}b, left). This way, after $4$ layers, we have removed all the nodes which were at an odd depth. Hence, we are left with a tree of height $\lceil \frac{h+1}{2} \rceil-1$ (Fig.~\ref{fig:tree_reduction}b, right). We repeat this process until we are left with a tree of height $0$, i.e. just the root node. This takes $\lceil \log_2(h+1) \rceil$ steps, or $4\lceil \log_2(h+1) \rceil$ layers, in total. 
\end{proof}

Hence, we can spend $1$ layer to reduce an arbitrary tree to a tree of height $\lfloor \log_2(N+1)\rfloor-1$, then spend $4\lceil \lfloor \log_2(N+1)\rfloor \rceil$ layers to reduce the new tree to a single node, which proves Theorem~\ref{thm:loglog_bound}.

\section{Numerical evidence for conjectures}
\subsection{Evidence for \(t\)-independence of \(s_\text{1D}\)}
\label{app:brickwork_t_independence}
Figures \ref{fig:periodic_brickwork_tdependence} and \ref{fig:open_brickwork_tdependence} present numerical evidence for Equations \ref{eq:C_exact_periodic} and \ref{eq:C_exact_open}, respectively. We calculate \(s_\text{1D}(N,q,t)\) with both open and closed boundary conditions, for \(q = 2\) and several values of \(N\) and \(t\). 

\begin{figure}[h]
    \centering
    \begin{tikzpicture}
        \begin{scope}
            \node[anchor=north west,inner sep=0] (image_a) at (0.6,0)
            {\includegraphics[width=0.7\columnwidth]{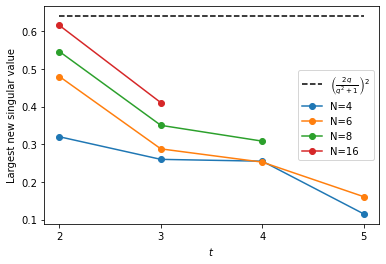}};
        \end{scope}
    \end{tikzpicture}
    \caption{Largest new singular values at each $t$ for various $N$. The circuit is a 1D brickwork with periodic boundary conditions and $q=2$. Singular values are calculating using a variant of the Lanczos algorithm. All obey the conjectured upper bound.}
    \label{fig:periodic_brickwork_tdependence}
\end{figure}

\begin{figure}[h]
    \centering
    \begin{tikzpicture}
        \begin{scope}
            \node[anchor=north west,inner sep=0] (image_a) at (0.6,0)
            {\includegraphics[width=0.7\columnwidth]{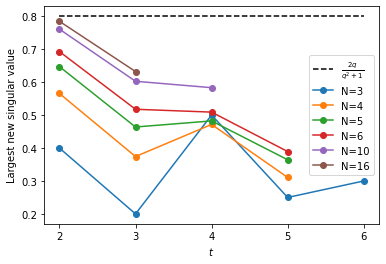}};
            \node [anchor=north west] (note) at (-0.45,0) {\small{\textbf{a)}}};
        \end{scope}
        \begin{scope}
            \node[anchor=north west,inner sep=0] (image_a) at (0,-4.0)
            {\includegraphics[width=0.8\columnwidth]{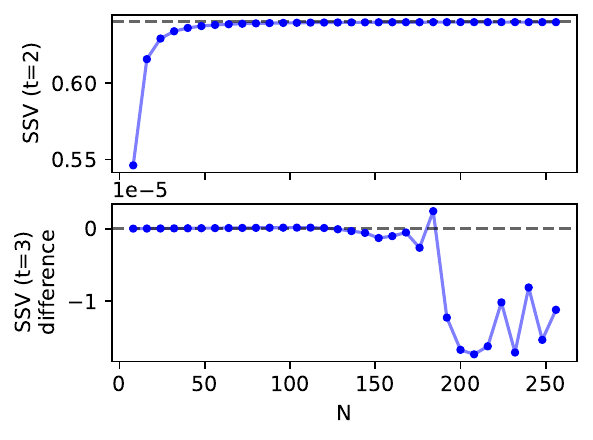}};
            \node [anchor=north west] (note) at (-0.45,-4.0) {\small{\textbf{b)}}};
        \end{scope}
    \end{tikzpicture}
    \caption{(a): Largest new singular values at each $t$ for various $N$. The circuit is a 1D brickwork with open boundary conditions and $q=2$. Singular values are calculating using a variant of the Lanczos algorithm. All obey the conjectured upper bound. (b): DMRG results for the open 1D brickwork subleading singular value $s_{\text{1D,open}}(N,q,t=2)$ vs. $N$ \textit{(upper)} and its difference from the $t=3$ SSV \textit{(lower)}. Note the vertical axis scale of $10^{-5}$ in the lower panel.}
    \label{fig:open_brickwork_tdependence}
\end{figure}

\subsubsection{Lanczos method} For small $N$, we use a variant of the Lanczos algorithm on $T^\dagger T$. In order to find new eigenvalues, we note that the gate has a block-triangular structure: The subspace of states involving only permutations on the first $t-1$ copies is closed under the action of the gate. This subspace cannot contain any new eigenvalues at $t$, since all of its eigenvectors are isomorphic to a state that appeared at $t-1$. 

Furthermore, since the gate operators commute with the right-action \(R(\tau) \ket{\sigma_1, \sigma_2 ...} = \ket{\sigma_1 \tau, \sigma_2 \tau ...} \), the subspace of states related to a state in the $t-1$ subspace by the right-action of any permutation also cannot contain a new eigenvalue. An analogous argument goes for left-action. 

Finally, the degeneracy of the metric for $q < t$ implies that some states which appear to be in the "new" subspace are actually equivalent to states in the $t-1$ subspace. These states also cannot contain a new eigenvalue. 

In order to find the largest new eigenvalue, we modify the Lanczos algorithm by adding a step which projects out all states in the subspaces described above. This can be done efficiently using a power iteration method on the product of the subspace projectors. 

All singular values found obey the conjectured bound. In fact, the largest singular value for any given circuit almost always appears at $t=2$. The sole exception observed is with open boundary conditions, $N=3$, and $q=2$, where the largest singular value found appears at $t=4$ (Fig.~\ref{fig:open_brickwork_tdependence}a). Note that both conjectures are already known to hold at $t = 2$.

\subsubsection{DMRG method} Due to the transfer matrix's natural representation as a tensor network, and hence a Matrix Product Operator, we can also use the Density Matrix Renormalization Group (DMRG) algorithm to approximate eigenvectors for much larger $N$. 

The accuracy of DMRG depends on the entanglement structure of the true eigenvector, quantified by its bond dimension. Here we choose a singular value cutoff of $10^{-12}$ and a maximum bond dimension of 800, which is well above the t=2 or t=3 required bond dimension of $\approx 12$.

\begin{figure}[h]
    \centering
    \begin{tikzpicture}
        \begin{scope}
            \node[anchor=north west,inner sep=0] (image_a) at (0,0)
            {\includegraphics[width=0.9\columnwidth]{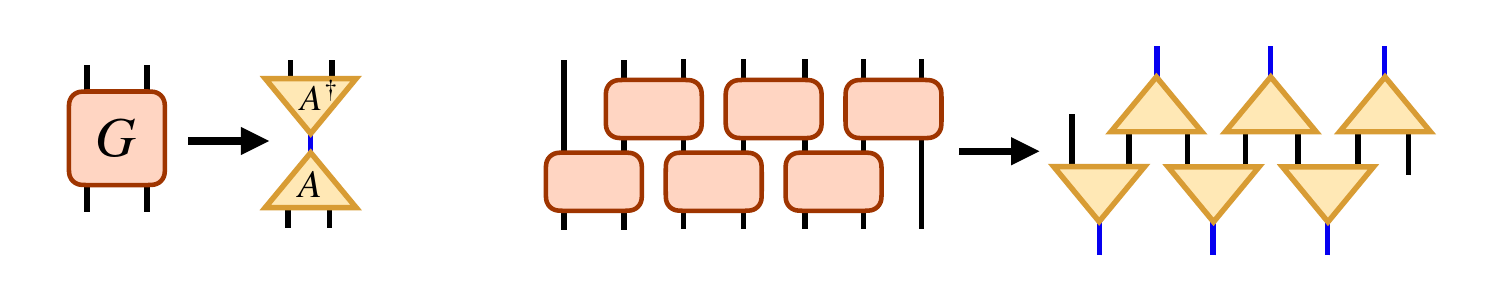}};
            \node [anchor=north west] (note) at (-0.15,-0.2) {\small{\textbf{a)}}};
            \node [anchor=north west] (note) at (2.25,-0.2) {\small{\textbf{b)}}};
        \end{scope}
    \end{tikzpicture}
    \caption{(a): Decomposition of a single gate into two halves using SVD. The inner index \textit{(blue)} can be treated as effective physical indices with dimension $t!$. (b): Decomposition of two layers of a 1D brickwork using this splitting process. The bottom-most and top-most halves of the original brickwork are set aside for now, making this equivalent to one brickwork layer in thickness. Through this process, the physical space has reduced from $N$ legs of size $q^{2t}$ to $\big\lceil\frac{N}{2}\big\rceil$ legs of size $t!$.}
    \label{fig:brickwork_site_splitting}
\end{figure}

Both methods are computationally intractable in the original representation of \(G\), which has four legs each of dimension \(q^{2t}\). However, in this representation \(G\) is very sparse. For the brickwork circuit, we can use singular value decompositions to compress \(G\) to a tensor with three legs, each of dimension \(t!\). The resulting tensor is an orthogonal projector from \(S_t\) to \(S_t^2\) under the metric induced by the Weingarten functions. This compression leaves the nonzero singular values of the whole circuit unchanged. The compressed circuit is illustrated in Figure \ref{fig:brickwork_site_splitting}.

\subsection{Evidence for Conjecture \ref{conjecture:connection_count}}
\label{app:connection_numerics}
To search for violations of Conjecture \ref{conjecture:connection_count}, we used a simulated annealing algorithm which attempts to maximize the subleading singular value over the set of architectures on a fixed number of sites. Figure \ref{fig:simanneal_data} shows the results.

\begin{figure}[h]
    \centering
    \begin{tikzpicture}
        \begin{scope}
            \node[anchor=north west,inner sep=0] (image_a) at (0,0)
            {\includegraphics[width=0.9\columnwidth]{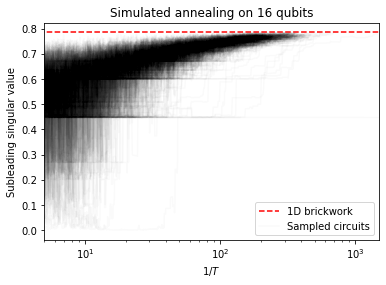}};
            \node [anchor=north west] (note) at (-0.45,0) {\small{\textbf{a)}}};
        \end{scope}
        \begin{scope}
            \node[anchor=north west,inner sep=0] (image_a) at (0,-5.5)
            {\includegraphics[width=0.9\columnwidth]{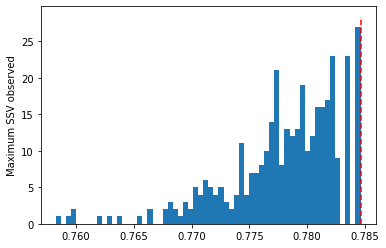}};
            \node [anchor=north west] (note) at (-0.45,-5.5) {\small{\textbf{b)}}};
        \end{scope}
    \end{tikzpicture}
    \caption{(a) Subleading singular values by inverse temperature during 437 runs of the simulated annealing process. Here $N = 16$, $q = 2$, and $t=2$. (b) Maximum SSVs attained by each each run.}
    \label{fig:simanneal_data}
\end{figure}

Each proposed move was a set of $n$ edge additions or deletions, with $n$ drawn from a geometric distribution with mean 1. Additions and deletions were equally likely, with additions drawn uniformly from the all-to-all graph and deletions drawn uniformly from the set of existing edges.

The objective function was the 3\textsuperscript{rd}-largest singular value, with starting temperature set automatically based on the distribution of singular values over a small sample of connected architectures with edges drawn i.i.d. from the all-to-all graph. An exponential multiplicative cooling schedule was used to cool to the final temperature over 5000 iterations.

\section{Tighter bounds for brickworks of arbitrary dimension}
\label{app:brickworks}

For specific, well-connected architectures, we can use the cluster-merging picture to obtain a tighter bound than that of 
Theorem~\ref{thm:periodic_complete}. One example is the generalized brickwork architectures on any dimension. We make the following claim:
\begin{lemma}
    For $t=2$, the effective connection depth $\ell$ on a generalized $d$-dimensional brickwork architecture on $L^d$ sites is at most $2+o_L(1)$. That is, the subleading singular value
    \[s \leq 1-(1-s_\text{1D})^2 + o_L(1) \addtag\]
\end{lemma}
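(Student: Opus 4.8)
The plan is to work inside a single period of $2d$ layers, write its transfer matrix as a product of layer projectors $P_{2d}\cdots P_1$, apply Theorem~\ref{thm:singvals1}, and then argue that in $\prod_{i=2}^{2d}(1-\mathscr{s}_i^{2})$ all but two factors are $1-o_L(1)$, with the remaining two each at least $1-s_\text{1D}^{2}$. The first step is to read off the cluster-merging picture. I would check that layers $1,\dots,r$ (for $r\le d$) build the clusters that are $2$-wide along the first $r$ axes, i.e.\ hypercubes of $2^{r}$ sites, so that for $2\le r\le d$ the cluster-merging graph of layer $r$ is a disjoint union of \emph{dumbbells} --- pairs of weight-$2^{r-1}$ nodes joined by $2^{r-1}$ parallel edges. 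Layer $d+1$ then welds the size-$2^{d}$ hypercubes along axis $1$ into rings, so its cluster-merging graph is a disjoint union of cycles of $L/2$ nodes of weight $2^{d}$; and layer $d+1+j$ for $1\le j\le d-1$ merges $j$-dimensional slabs of size $\Theta(L^{j})$ across an extensive bundle of $\Theta(L^{j})$ gates.

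Next I would bound the $\mathscr{s}_i$. Layer $d+1$ gives a cluster-merging graph on finitely many nodes, so Theorem~\ref{thm:clusterbound} yields $\mathscr{s}_{d+1}\le s_\text{1D}$. For the dumbbells, the $r=2$ graph is, after relabelling the four sites around a square, exactly the transfer matrix of a $4$-site periodic brickwork, so $\mathscr{s}_2\le s_\text{1D}$; for $r\ge 3$ I would invoke the estimate (which this appendix also establishes) that joining two $m$-site clusters by $m$ gates has subleading singular value $q^{-\Omega(m)}$. For $t=2$ this is an explicit two-state computation: restricting $P_rQ_{r-1}$ to the only relevant states (node-$A$-uniform and node-$B$-uniform, with the globally uniform sector removed), $P_r$ turns each of the $m=2^{r-1}$ edges into the single-gate weight $q/(q^{2}+1)$ together with the overlap $\langle I|S\rangle^{2}=q^{-2}$, giving a norm of order $\big(2/(q^{2}+1)\big)^{m/2}$, so $\mathscr{s}_r$ decays doubly exponentially in $r$. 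The same estimate with $m=\Theta(L^{j})$ gives $\mathscr{s}_{d+1+j}=q^{-\Omega(L)}=o_L(1)$ for $j\ge 1$ --- this is the precise sense in which the later odd layers can be ignored.

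Then I would assemble the bound. Because every $\mathscr{s}_r\le s_\text{1D}$ for $2\le r\le d$ and the $r\ge 3$ terms decay doubly exponentially, $\sum_{r=2}^{d}\mathscr{s}_r^{2}$ is controlled by its first term and stays at most $s_\text{1D}^{2}$ for all $q\ge2$; hence $\prod_{r=2}^{d}(1-\mathscr{s}_r^{2})\ge 1-\sum_{r=2}^{d}\mathscr{s}_r^{2}\ge 1-s_\text{1D}^{2}$, so the hypercube-building layers together cost only a single factor $1-s_\text{1D}^{2}$. Combining with $1-\mathscr{s}_{d+1}^{2}\ge 1-s_\text{1D}^{2}$ and $\prod_{i=d+2}^{2d}(1-\mathscr{s}_i^{2})\ge 1-o_L(1)$, Theorem~\ref{thm:singvals1} gives $s^{2}\le 1-(1-s_\text{1D}^{2})^{2}+o_L(1)$, and the elementary inequality $\sqrt{1-(1-x^{2})^{2}}\le 1-(1-x)^{2}$ on $[0,1]$ (equivalent to $2(1-x)^{2}\ge0$) upgrades it to $s\le 1-(1-s_\text{1D})^{2}+o_L(1)$, which is exactly effective connection depth $2$.

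I expect the main obstacle to be controlling the fully occupied multigraphs --- the dumbbells with $2^{r-1}$ parallel edges and the ring of weight-$2^{d}$ hypercubes. A node whose weight already equals its degree leaves no free site for the internal link that Lemmas~\ref{lemma:split} and~\ref{lemma:splitlink} insert, so these graphs cannot be reduced to ordinary $1$D brickwork loops, and the generic route through Theorem~\ref{thm:clusterbound} only returns the uninformative $\mathscr{s}\le s_\text{1D}$. For $t=2$ the way around this is the direct spectral computation above, which works because each node carries only the two permutation states $|I\rangle^{\otimes m},|S\rangle^{\otimes m}$ and the active block of $P_rQ_{r-1}$ is a small explicit matrix; the fiddly part is the orthogonalization against the uniform sector and checking $\sum_{r}\mathscr{s}_r^{2}\le s_\text{1D}^{2}$. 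Pushing the $q^{-\Omega(m)}$ estimate --- and therefore this whole argument --- past $t=2$ is left open, which is why the lemma is restricted to $t=2$.
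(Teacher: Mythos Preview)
Your proposal is correct and follows the paper's own argument closely: the same three-stage decomposition (hypercube-building dumbbells, one ring-forming layer, extensively connected remainder), the same direct $t=2$ computation of the dumbbell layer-restricted singular values, and the same assembly via Theorem~\ref{thm:singvals1}. One minor correction: your multigraph worry is unfounded --- when node weight equals degree, the Eulerian-circuit split in the proof of Theorem~\ref{thm:clusterbound} already produces weight-$2$, degree-$2$ nodes directly, so Lemma~\ref{lemma:splitlink} is never invoked and the reduction to a brickwork loop goes through; the reason the direct dumbbell computation is still needed is only that the resulting bound $\mathscr{s}_r\le s_{\text{1D}}$ is too weak for the hypercube stage, not that the reduction fails. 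Your claim $\sum_{r\ge2}\mathscr{s}_r^{2}\le s_{\text{1D}}^{2}$ requires the same explicit verification the paper carries out numerically at $q=2$ (it holds with room to spare since $\mathscr{s}_2=2q^{2}/(q^{2}+1)^{2}=\tfrac12 s_{\text{1D}}$).
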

In other words, we have taken the $2d$ layers of a $d$-dimensional brickwork's periodic block, and have effectively removed all but two of them from consideration in Theorem~\ref{thm:singvals1}.

Each of the $2d$ layers in a brickwork's periodic block chooses one of the $d$ directions (horizontal, vertical, etc.) and one of two parities (odd or even). 
The cluster merging picture of these layers comes in three stages. In the first stage, each layer has a different direction from all the layers below it. These layers combine finite (i.e. $O(1)$) size clusters into other finite size clusters. Each cluster after $m$ such layers is an $m$ dimensional hypercube over $2^m$ sites. Depending on the order of the layers, this first stage can have any number of layers from $1$ to $d$. The second stage begins once a layer has the same direction as a previous layer below it. In this stage, the hypercube clusters get strung together along this direction, $L/2$ at a time, forming loops that can be reduced to periodic 1D brickworks. This stage consists of only one layer.

After this stage, each cluster is of size $\Omega(L)$. Subsequent layers will connect these clusters either into pairs (if their same-direction counterpart wasn't applied yet) or $L/2$-length loops (if it was applied). Either way, each cluster will have $\Omega(L)$ connections with each neighbor in the graph. This effectively creates a 1D brickwork with internal dimension $\tilde{q} = q^{\Omega(L)}$. Since the 1D brickwork singular value is $\frac{2\tilde{q}}{\tilde{q}^2+1} = O(\tilde{q}^{-1})$, all of these layers will have layer-restricted singular values which decay exponentially in $L$, hence will contribute a vanishing amount $o_L(1)$ to Equation~\ref{eq:singvals1}. 

We have a second stage with one layer that reduces to the periodic 1D brickwork, and a third stage with vanishing contribution. Therefore, it remains to show that the combined subleading singular value of the first stage layers is bounded by the 1D brickwork. In other words, we want to bound the subleading singular value of an arbitrary hypercube by the 1D brickwork. 

The subleading singular value of the $2$D hypercube (i.e. a square of four sites) is $s_2 = \frac{2q^2}{(q^2+1)^2}$. We can use the cluster merging picture to bound hypercubes of higher dimension. At layer $m$, our clusters are size $2^{m-1}$, and we are joining them together in pairs, with $2^{m-1}$ connections per pair. Since we only have two clusters in each connected section of our graph, finding the layer-restricted singular value $\mathscr{s}_m$ just amounts to an optimization over $4$ basis states. We have
\begin{align}
    \mathscr{s}_{m} &= \frac{1}{(1-q^{-2^m})^2}\sqrt{2(1+q^{-2^m})\left[\frac{2}{q^2+1}\right]^{2^{m-1}}  - 8q^{-2^m}}\n
    &\leq \sqrt{2}\left(\frac{16}{15}\right)\left(\frac{2}{q^2}\right)^{2^{m-2}}\n
    &\leq \sqrt{2}\left(\frac{16}{15}\right)\left(\frac{2}{q^2}\right)^{2(m-2)}
\end{align}
where we bound the superexponential decay in $m$ by an exponential for all $m \geq 3$. Then any $d\geq 3$-dimensional hypercube $s_d$ has singular values bounded by
\begin{align}
    s_d^2 &\leq 1 - (1-s_2^2) \prod_{m=3}^d (1-\mathscr{s}_m^2) \n
    &\leq 1 - (1-s_2^2) \exp \left[-\sum_{m=3}^d \mathscr{s}_m^2\right] \n
    &\leq 1 - (1-s_2^2) \exp \left[-2\left(\frac{16}{15}\right)^2 \sum_{m=3}^d \left(\frac{2}{q^2}\right)^{4(m-2)}\right] \n
    &\leq 1 - \left(1-\frac{4q^4}{(q^4+1)^2}\right) \exp \left[-2\left(\frac{16}{15}\right)^2 \frac{16}{q^8-16}\right] 
\end{align}
This decays more rapidly in $q$ than the 1D brickwork, and at $q=2$ we get the bound $s_d(2) \leq 0.478 < s_\text{1D}(2)$. Therefore, the hypercube singular value is below the 1D brickwork singular value for all $q \geq 2$. 

A more thorough optimization over the size $M$ of the first stage will probably get an overall singular value bound that is much closer to $s_\text{1D}$. This is because the size, and connectivity, of the clusters in the second stage is $q^{2^M}$, so it would probably produce a singular value that is close to $O(q^{-2^M})$. In other words, we aren't allowed to make the first stage that deep, without reducing the layer-restricted singular value of the second stage.

\end{appendices}
\end{document}